\setlist[enumerate,1]{label={(\roman*)}}
\Crefname{property}{Property}{Properties}
\theoremstyle{plain}
\newtheorem{theorem}{Theorem}[section]
\newtheorem{lemma}[theorem]{Lemma}
\newtheorem{proposition}[theorem]{Proposition}
\newtheorem{corollary}[theorem]{Corollary}
\newtheorem{remark}[theorem]{Remark}
\newtheorem{example}[theorem]{Example}
\theoremstyle{definition}
\newtheorem{definition}[theorem]{Definition}
\pgfplotsset{compat=1.13}
\pgfplotsset{
  spectrumplotstyle/.style={
    grid=both, 
    axis lines=middle,
    xmin=0,
    xmax=1.1,
    xtick={1},
    extra x ticks={0},
    xlabel=$\alpha$,
    xlabel style={anchor=north},
    ylabel style={anchor=south east},
    width=0.6\textwidth,
    height=0.45\textwidth,
  },
  rateplotstyle/.style={
    grid=both, 
    axis lines=middle,
    xmin=0,
    xmax=0.7,
    extra x ticks={0},
    xlabel=$r$,
    xlabel style={anchor=north},
    ylabel=$R$,
    ylabel style={anchor=south east},
    width=0.6\textwidth, 
    height=0.45\textwidth, 
  },
  plotline/.style={
    black,
    semithick,
    mark=none,
  },
  plotlinewithmarks/.style={
    black,
    semithick,
    mark=*,
    mark size=1 pt,
  },
  exampleline/.style={
    black,
    semithick,
    dashed,
    mark=none,
  },
  partofspectrumregion/.style={
    pattern=north west lines,
    pattern color=black!50!white
  },
  notpartofspectrumregion/.style={
    black!50!white,
    fill,
    opacity=0.5
  },
  achievableregion/.style={
    black!50!white,
    fill,
    opacity=0.5
  },
  notachievableregion/.style={
    pattern=north west lines,
    pattern color=black!50!white
  },
}
\tikzset{
  examplepoint/.style={
    black,
    fill,
    circle,
    scale=0.3,
  },
}
\newcommand{\setbuild}[2]{\left\{#1\middle|#2\right\}}
\newcommand{\typeclass}[2]{T^{#1}_{#2}}
\newcommand{\typeclassprojection}[2]{\Pi^{#1}_{#2}}
\newcommand{\ed}{\mathop{}\!\mathrm{d}}
\newcommand{\norm}[2][]{\left\|#2\right\|_{#1}}
\newcommand{\ket}[1]{\left|#1\right\rangle}
\newcommand{\bra}[1]{\left\langle #1\right|}
\newcommand{\ketbra}[2]{\left|#1\middle\rangle\!\middle\langle#2\right|}
\newcommand{\braket}[2]{\left\langle#1\middle|#2\right\rangle}
\newcommand{\vectorstate}[1]{\ketbra{#1}{#1}}
\newcommand{\CW}{\textnormal{CW}}
\newcommand{\EPR}{\textnormal{EPR}}
\newcommand{\GHZ}{\textnormal{GHZ}}
\newcommand{\W}{\textnormal{W}}
\DeclareMathOperator{\boundeds}{\mathcal{B}}
\DeclareMathOperator{\Tr}{Tr}
\newcommand{\entropy}{H}
\newcommand{\mutualinformation}{I}
\newcommand{\relativeentropy}[3][]{\mathop{D_{#1}}\mathopen{}\left(#2\middle\|#3\right)\mathclose{}}
\newcommand{\binaryrelativeentropy}[3][]{\mathop{d_{#1}}\mathopen{}\left(#2\middle\|#3\right)\mathclose{}}
\newcommand{\thesandwicheddivergence}[1][]{\tilde{D}_{#1}}
\newcommand{\sandwiched}[3][]{\mathop{\thesandwicheddivergence[#1]}\mathopen{}\left(#2\middle\|#3\right)\mathclose{}}
\newcommand{\distributions}[1][]{\mathcal{P}_{#1}}
\DeclareMathOperator{\states}{\mathcal{S}}
\newcommand{\unittensor}[1]{\langle{#1}\rangle}
\newcommand{\reals}{\mathbb{R}}
\newcommand{\complexes}{\mathbb{C}}
\newcommand{\naturals}{\mathbb{N}}
\newcommand{\integers}{\mathbb{Z}}
\newcommand{\nonnegativereals}{\mathbb{R}_{\ge 0}}
\newcommand{\positivereals}{\mathbb{R}_{>0}}
\newcommand{\majorizes}{\succcurlyeq}
\newcommand{\polyle}{\stackrel{.}{\le}}
\newcommand{\polyge}{\stackrel{.}{\ge}}
\newcommand{\polyeq}{\stackrel{.}{=}}
\DeclareMathOperator{\supp}{supp}
\DeclareMathOperator{\argmin}{argmin}
\title{Explicit error bounds for entanglement transformations between sparse multipartite states}
\author[1,2]{D\'avid Bug\'ar}
\author[1,2]{P\'eter Vrana}
\affil[1]{Department of Algebra and Geometry, Institute of Mathematics, Budapest University of Technology and Economics, M\H uegyetem~rkp. 3., H-1111 Budapest, Hungary.}
\affil[2]{MTA-BME Lend\"ulet Quantum Information Theory Research Group, M\H uegyetem~rkp. 3., H-1111 Budapest, Hungary}
\begin{document}

\maketitle

\begin{abstract}
The trade-off relation between the rate and the strong converse exponent for probabilistic asymptotic entanglement transformations between pure multipartite states can in principle be characterised in terms of a class of entanglement measures determined implicitly by a set of strong axioms. A nontrivial family of such functionals has recently been constructed, but their previously known characterisations have so far only made it possible to evaluate them in very simple cases. In this paper we derive a new regularised formula for these functionals in terms of a subadditive upper bound, complementing the previously known superadditive lower bound. The upper and lower bounds evaluated on tensor powers differ by a logarithmically bounded term, which provides a bound on the convergence rate. In addition, we find that on states satisfying a certain sparsity constraint, the upper bound is equal to the value of the corresponding additive entanglement measure, therefore the regularisation is not needed for such states, and the evaluation is possible via a single-letter formula. Our results provide explicit bounds on the success probability of transformations by local operations and classical communication and, due to the additivity of the entanglement measures, also on the strong converse exponent for asymptotic transformations.
\end{abstract}

\section{Introduction}

Understanding the possible transformations between entangled states by local operations and classical communication (LOCC) is a major open problem in quantum information theory. In an asymptotic scenario, given a pair of states $\rho$ and $\sigma$, one aims to characterise the achievable rates $R$ such that a transformation of $\rho^{\otimes n}$ into $\sigma^{\otimes Rn+o(n)}$ is possible in the limit $n\to\infty$, under various error criteria. The strongest requirement is to reach the target state exactly, with probability one for all $n$. This can be relaxed in different ways, by allowing either the success probability or the fidelity to the target to be less than one, or even a combination of both \cite{regula2023overcoming}.

More specifically, one may require the probability to converge to one arbitrarily or as $1-2^{-nr+o(n)}$ for a specified (direct) error exponent $r$ or even allow the probability or fidelity to converge to $0$ (strong converse) as $2^{-rn+o(n)}$, or even arbitrarily fast while remaining nonzero for all $n$ (asymptotic SLOCC). To our knowledge, a complete characterisation of the achievable rates in all of these settings is not available even for transformations between bipartite pure states, which is by far the best understood special case. In general one seeks to express the trade-off relations in terms of entropies and other information quantities. For instance, the entanglement properties of a bipartite pure state depend on its normal form $\ket{\psi_P}=\sum_{i=1}^d\sqrt{P(i)}\ket{i}\otimes\ket{i}$, where the Schmidt coefficients $(P(1),\ldots,P(d))$ form a probability vector, unique up to ordering. The maximal transformation rate from $\psi_P$ to $\psi_Q$ is equal to $\frac{\entropy(P)}{\entropy(Q)}$ when a vanishingly small error is allowed, and $\frac{\entropy_0(P)}{\entropy_0(Q)}$ in the asymptotic SLOCC paradigm (see \cref{sec:preliminaries} for the definitions of these and other entropic quantities). In the special case when the target state is a pair of maximally entangled qubits (also known as an EPR pair or Bell state), the deterministic rate is given by the min-entropy \cite{morikoshi2001deterministic}, and the full trade-off curve is known in the direct and converse domains, both in the probabilistic and approximate settings \cite{hayashi2002error}, given in terms of optimised expressions involving the R\'enyi entanglement entropies $\entropy_\alpha(P)$.

The problem becomes much more complicated when the number of subsystems is greater than $2$. In this case, even for transformations with a vanishing error, the rate is not given by a ratio of quantities depending on the initial and the target state. In three of the aforementioned settings a characterisation of the optimal rate is available, but involves entanglement measures that are not explicitly known but are defined through a set of axioms. The first one is asymptotic SLOCC, which is equivalent to the asymptotic restriction problem for tensors. The characterisation was derived by Strassen in the context of tensors, and actually predates the development of entanglement theory \cite{strassen1988asymptotic}. The other settings are the strong converse domain for probabilistic transformations \cite{jensen2019asymptotic} and approximate transformations with asymptotically vanishing error \cite{vrana2022asymptotic}. In this paper we focus on the entanglement measures relevant to the strong converse exponents for probabilistic transformations, which constitute the asymptotic spectrum of LOCC transformations.

Given a natural number $k\ge 2$, the asymptotic spectrum of LOCC transformations \cite{jensen2019asymptotic} is the set $\Delta_k$ of functionals $F$ on $k$-partite unnormalised state vectors that are invariant under local isometries and satisfy
\begin{enumerate}
    \item $F(\sqrt{p}\ket{\psi})=p^\alpha F(\ket{\psi})$,
    \item $F(\unittensor{r})=r$,
    \item\label{it:specMulti} $F(\ket{\psi}\otimes\ket{\varphi})=F(\ket{\psi})F(\ket{\varphi})$,
    \item\label{it:specAddi} $F(\ket{\psi}\oplus\ket{\varphi})=F(\ket{\psi})+F(\ket{\varphi})$,
    \item $F(\ket{\psi})^{1/\alpha}\ge F(\Pi\ket{\psi})^{1/\alpha}+F((I-\Pi)\ket{\psi})^{1/\alpha}$ when $\Pi$ is a local projection
\end{enumerate}
for some (unique) $\alpha\in[0,1]$, where $\unittensor{r}$ is the unit tensor of rank $r$, i.e., the direct sum of $r$ copies of a normalised product vector. The elements of $\Delta_k$ are also called spectral points.

The asymptotic spectrum of LOCC transformations provides a characterisation of exact asymptotic probabilistic LOCC transformations in the following sense.
Let $\ket{\psi}$ and $\ket{\varphi}$ be normalised and $r,R\in\nonnegativereals$. We say that the rate $R$ is achievable with strong converse exponent $r$ (or simply that the pair $(R,r)$ is achievable) if there is a sequence of LOCC channels transforming $\ket{\psi}^{\otimes n}$ into $\ket{\varphi}^{\otimes Rn+o(n)}$ with probability at least $2^{-rn+o(n)}$. By the results of \cite{jensen2019asymptotic}, $(R,r)$ is achievable iff for all $F\in\Delta_k$ the inequality $F(2^{r/2}\ket{\psi})\ge F(\ket{\varphi})^R$ holds. For fixed $r$, the largest achievable $R$ is equal to
\begin{equation}
    R^*(\ket{\psi}\to\ket{\varphi},r)=\inf_{\substack{F\in\Delta_k  \\  F(\ket{\varphi})\neq 1}}\frac{\log F(2^{r/2}\ket{\psi})}{\log F(\ket{\varphi})}.
\end{equation}
We note that the characterisation is valid also for $r=0$, in this case any decreasing exponential is eventually a lower bound on the success probability (but it may still vanish slower than any exponential). The resulting rate is an upper bound on the largest achievable rate with success probability converging to $1$, and in the bipartite case the two rates are equal, while this is not known for $k\ge 3$ subsystems. The set of spectral points with $\alpha=0$ is the asymptotic spectrum of tensors \cite{strassen1988asymptotic}. Restricting the minimum to this subset gives the optimal rate for asymptotic SLOCC transformations.

It is often more convenient to work with the elements of $\Delta_k$ in a logarithmic form, normalised as $E(\varphi):=\frac{\log F(\ket{\varphi})}{1-\alpha}$ (for the unique $\alpha$ above). We will refer to such a transformed functional as a logarithmic spectral point of order $\alpha$. For instance, when $k=2$ these are precisely the R\'enyi entropies of entanglement of orders $\alpha\in[0,1)$. Note that this normalisation excludes any functional with $\alpha=1$, but it can be shown that the only element of $\Delta_k$ with $\alpha=1$ is the squared norm.

A continuous family of explicit elements of $\Delta_k$ with $k\ge 3$ is known. The construction given in \cite{vrana2023family} is given in terms of the large deviation rate function for a multipartite generalisation of the empirical Young diagram measurement \cite{alicki1988symmetry,keyl2001estimating} (see \cref{sec:preliminaries} for details). Unfortunately, they do not seem to be easily computable apart from the simplest special cases, despite the fact that the rate function can be given by a single-letter formula \cite{franks2020minimal,botero2021large}. Arguably the simplest characterisation is that the functionals (in logarithmic forms) are regularisations of the superadditive quantities
\begin{equation}
E_{\alpha,\theta}(\psi) = \sup_{\varphi=(A_1\otimes\dots\otimes A_k)\psi}\sum_{j=1}^k\theta(j)\entropy\left(\frac{\Tr_j\vectorstate{\varphi}}{\norm{\varphi}^2}\right)+\frac{\alpha}{1-\alpha}\log\norm{\varphi}^2,
\end{equation}
where the supremum is over local contractions $A_j:\mathcal{H}_j\to\mathcal{H}_j$ when $\psi\in\mathcal{H}_1\otimes\dots\otimes\mathcal{H}_k$. While from a computational point of view this characterisation is unsatisfactory due to the presence of the regularisation (which, however, is known to be unnecessary when $\alpha=0$ \cite{christandl2023universal}), it can still be useful as a sequence of lower bounds converging to $E^{\alpha,\theta}$, especially as it is given by a supremum, therefore every feasible point provides a lower bound.

In this paper we derive a new characterisation as the regularisation of a \emph{subadditive} quantity, which therefore provides upper bounds converging to $E^{\alpha,\theta}$. For $\alpha\in(0,1)$ and convex weights $\theta(1),\dots,\theta(k)$ we define the functional $\rho^{\alpha,\theta}$ by
\begin{equation}\label{eq:rhodefintro}
\rho^{\alpha,\theta}(\psi)=\min_{\mathcal{M}_1,\dots,\mathcal{M}_k}\entropy_{\alpha,\theta}((\mathcal{M}_1\otimes\dots\otimes\mathcal{M}_k)(\vectorstate{\psi})),
\end{equation}
where the minimum is over local von Neumann measurement channels $\mathcal{M}_j:\boundeds(\mathcal{H}_j)\to\boundeds(\mathcal{H}_j)$ (with the diagonal output of $\mathcal{M}_1\otimes\dots\otimes\mathcal{M}_k$ regarded as the joint distribution on the product of some index sets $\mathcal{X}_1\times\dots\times\mathcal{X}_k$), and
\begin{equation}
\entropy_{\alpha,\theta}(P)=\max_{Q\in\distributions(\supp P)}\left[\sum_{j=1}^k\theta(j)\entropy(Q_j)-\frac{\alpha}{1-\alpha}\relativeentropy{Q}{P}\right],
\end{equation}
where $\distributions(\supp P)$ is the set of probability distributions with a support contained in that of $P$ (i.e., those with $\relativeentropy{Q}{P}<\infty$). We show that the regularisation of $\rho^{\alpha,\theta}$ is equal to $E^{\alpha,\theta}$ (\cref{thm:rhoregE}). More precisely, we find that $E_{\alpha,\theta}(\psi^{\otimes n})\le nE^{\alpha,\theta}(\psi)\le\rho^{\alpha,\theta}(\psi^{\otimes n})\le E_{\alpha,\theta}(\psi^{\otimes n})+O(\log n)$ (where the implied constant depends on the local dimensions and $\alpha$), putting an upper bound on the convergence rate for the regularisation of both $E_{\alpha,\theta}$ and $\rho^{\alpha,\theta}$.

In addition, we identify a nontrivial set of states that satisfy $\rho^{\alpha,\theta}(\psi)=E^{\alpha,\theta}(\psi)$. This set of states is closed under tensor products and includes well-studied states such as \W{} states and Dicke states as well as their weighted versions. The states in question are characterised by the property that, when expanded in a suitable product basis, have no pair of nonzero coefficients that are adjacent in the sense of Hamming distance, i.e., any two basis elements having nonzero overlap with the state must differ in at least two tensor factors (\cref{def:free}). We will say that these states have free support, in reference to a similar property of tensors considered in \cite{franz2002moment}. This property may be viewed as a sparsity condition since it implies that the number of nonzero coefficients of such a state in (say) $\complexes^d\otimes\dots\otimes\complexes^d$, with $k$ factors is at most $d^{k-1}$, in contrast with the maximum $d^k$. Moreover, we show that any product basis with this property is an optimal choice for the measurement basis in \eqref{eq:rhodefintro}, eliminating the need to minimise over the local bases for such states, and reducing the computation of $E^{\alpha,\theta}(\psi)$ to a convex program (\cref{thm:freeErho}). For general states, the minimisation may not be simple but any particular basis choice provides an upper bound on $E^{\alpha,\theta}$.

The functionals $\rho^{\alpha,\theta}$ are analogous to the upper support functionals $\rho^\theta$ defined for tensors over arbitrary fields in \cite[eq. (2.4)]{strassen1991degeneration} as
\begin{equation}
\rho^\theta(\psi)=\min_B\max_{Q\in\distributions(\supp_B\psi)}\sum_{j=1}^k\theta(j)\entropy(Q_j),
\end{equation}
where the minimisation is over all possible product bases (not necessarily orthonormal ones), and $\supp_B\psi$ denotes the set of indices of the basis elements with nonzero coefficient in the expansion of $\psi$. Over the complex numbers, the support functionals are related in a similar way to the quantum functionals introduced in \cite{christandl2023universal} as $\rho^{\alpha,\theta}$ to $E^{\alpha,\theta}$. However, while the $\alpha\to 0$ limit of $E^{\alpha,\theta}$ is equal to the quantum functional with the same weights $\theta$, we do not know if $\lim_{\alpha\to 0}\rho^{\alpha,\theta}$ is equal to $\rho^\theta$. The reason for the possible difference is that in \eqref{eq:rhodefintro} we minimize over orthonormal bases, while $\rho^\theta$ does not depend on the inner product and allows arbitrary local bases, therefore the minimum is potentially lower. In a similar way, while the results on the regularization of $\rho^{\alpha,\theta}$ and the tensorisation property for states with free support are analogous to the similar properties of the support functionals proved in \cite{christandl2023universal}, there does not seem to be a simple implication in any direction. Nevertheless, we find it useful to think of our functionals $\rho^{\alpha,\theta}$ as R\'enyi generalizations of the support functionals with order parameter $\alpha\in(0,1)$.

The structure of the paper is the following. In \cref{sec:preliminaries} we review some of the properties of the Shannon and R\'enyi entropies and the corresponding divergences, and recall some facts from the representation theory of the symmetric and unitary groups as well as known characterizations of the functionals $E^{\alpha,\theta}$. In \cref{sec:rhogeneral} derive the upper bound $E^{\alpha,\theta}\le\rho^{\alpha,\theta}$, starting from a characterisation of $E^{\alpha,\theta}$ in terms of a regularised relative entropy distance and utilizing a data processing inequality for tensor powers of a product measurement channel. Along the way we study the properties of the (classical) entropic quantities $\entropy_{\alpha,\theta}$, and in particular find that they are additive. We also show here that the regularization of $\rho^{\alpha,\theta}$ is equal to $E^{\alpha,\theta}$ for all states. In \cref{sec:rhofree} we specialise to states having a free support and show that on these states the regularisation is not necessary. In \cref{sec:examples} we illustrate our results by evaluating $E^{\alpha,\theta}$ and the corresponding bounds on the trade-off curve between the strong converse exponent and the transformation rate for concrete states and transformations.

\section{Preliminaries}\label{sec:preliminaries}
We begin with basic definitions and results related to classical information theory and quantum Shannon theory, all of which can be found in the books \cite{csiszar2011information,cover2012elements,wilde2013quantum,tomamichel2015quantum}. These are followed by a brief review of some facts from the representation theory of symmetric and unitary groups. For more on these we refer the reader to \cite{fulton1991representation,hayashi2017group}.

We identify measures on a finite set $\mathcal{X}$ with elements of $\nonnegativereals^\mathcal{X}$, and denote the set of probability measures (i.e., measures $P$ with $\norm[1]{P}=\sum_{x\in\mathcal{X}}P(x)=1$) by $\distributions(\mathcal{X})$. The support of a measure $P$ is the subset $\supp P=\setbuild{x\in\mathcal{X}}{P(x)\neq 0}$. If $P\in\distributions(\mathcal{X}_1\times\dots\times\mathcal{X}_k)$ is a distribution on a product set, then we may consider its marginals, e.g.
\begin{equation}
P_1(x_1)=\sum_{x_2,\dots,x_k}P(x_1,\dots,x_k).    
\end{equation}
If $P^{(1)}$ and $P^{(2)}$ are measures on product sets $\mathcal{X}^{(1)}_1\times\dots\times\mathcal{X}^{(1)}_k$ and $\mathcal{X}^{(2)}_1\times\dots\times\mathcal{X}^{(2)}_k$ respectively, then we view their product $P^{(1)}\otimes P^{(2)}$ as a measure on the product set $(\mathcal{X}^{(1)}_1\times\mathcal{X}^{(2)}_1)\times\dots\times(\mathcal{X}^{(1)}_k\times\mathcal{X}^{(2)}_k)$ with $k$ factors. Similarly, the direct sum $P^{(1)}\oplus P^{(2)}$ is considered as a measure on $(\mathcal{X}^{(1)}_1\sqcup\mathcal{X}^{(2)}_1)\times\dots\times(\mathcal{X}^{(1)}_k\sqcup\mathcal{X}^{(2)}_k)$. Note in particular that $(P^{(1)}\otimes P^{(2)})_j=P^{(1)}_j\otimes P^{(2)}_j$ and $(P^{(1)}\oplus P^{(2)})_j=P^{(1)}_j\oplus P^{(2)}_j$ hold for the marginals.

Given a finite set $\mathcal{X}$ with $\lvert\mathcal{X}\vert=d$ and a vector $v\in\nonnegativereals^\mathcal{X}$, the vector $v^\downarrow\in\nonnegativereals^d$ is formed by sorting the entries of $v$ nonincreasingly. We do not consider two such nonincreasing vectors different if they differ only in trailing zeros. We say that $v$ majorizes (or dominates) $w$ and write $v\majorizes w$ if $\sum_{i=1}^m(v^\downarrow)_i\ge\sum_{i=1}^m(w^\downarrow)_i$ holds for all $m$ and $\norm[1]{v}=\norm[1]{w}$. A function $f:\distributions(\mathcal{X})\to\reals$ is Schur concave if $v\majorizes w$ implies $f(v)\le f(w)$. A sufficient condition for this is that $f$ is concave and permutation-invariant.

A nonincreasing vector $\lambda$ is a partition of $n\in\naturals$ if its entries are natural numbers and their sum is equal to $n$. In this case we also write $\lambda\vdash n$ and note that $\lambda/n$ is a probability distribution on $[d]$. The length of a partition is the number of its nonzero entries. 

A probability distribution $Q\in\distributions(\mathcal{X})$ is called an $n$-type if $nQ$ has integer entries. The set of $n$-types will be denoted by $\distributions[n](\mathcal{X})$. A string in $\mathcal{X}^n$ is said to have type $Q$ if for all $x\in\mathcal{X}$ the number of occurrences of $x$ in the string is $nQ(x)$. The set of all such strings is the type class $\typeclass{n}{Q}$. The cardinality of the type class satisfies \cite[Lemma 2.3]{csiszar2011information}
\begin{equation}\label{eq:typeclassbound}
\frac{1}{(n+1)^{\lvert\mathcal{X}\rvert}}2^{n\entropy(Q)}\le\lvert\typeclass{n}{Q}\rvert\le 2^{n\entropy(Q)}.
\end{equation}

A state on a (finite-dimensional) Hilbert space $\mathcal{H}$ is a linear operator $\rho$ on $\mathcal{H}$ such that $\rho\ge 0$ and $\Tr\rho=1$. An orthonormal basis $\{\ket{x}\}_{x\in\mathcal{X}}$ determines a (von Neumann) measurement channel
\begin{equation}
    \mathcal{M}(\rho)=\sum_{x\in\mathcal{X}}\ketbra{x}{x}\rho\ketbra{x}{x}.
\end{equation}
Conversely, such a channel determines an orthonormal basis up to a choice of $\lvert\mathcal{X}\rvert$ phases. This ambiguity will not make a difference in our results, therefore we will use orthonormal bases and measurement channels interchangeably. We will identify states that are diagonal with respect to a preferred basis with probability distributions on the index set.

The orthonormal basis $\{\ket{x}\}_{x\in\mathcal{X}}$ together with an $n$-type $Q\in\distributions[n](\mathcal{X})$ gives rise to the type class projection
\begin{equation}\label{eq:typeclassprojection}
\typeclassprojection{n}{Q}=\sum_{x\in\typeclass{n}{Q}}\ketbra{x}{x}.
\end{equation}
Note that the dependence on the basis is not reflected in the notation. When the quantum system consists of $k$ subsystems, i.e., $\mathcal{H}=\mathcal{H}_1\otimes\dots\otimes\mathcal{H}_k$, and we choose an orthonormal basis in each of them (with index sets $\mathcal{X}_j$), then the family of all the possible tensor products of basis elements is an orthonormal basis of $\mathcal{H}$ indexed by the product set $\mathcal{X}_1\times\dots\times\mathcal{X}_k$. When $Q\in\distributions[n](\mathcal{X}_1\times\dots\times\mathcal{X}_k)$, the joint type class $\typeclass{n}{Q}$ is a subset of the product $\typeclass{n}{Q_1}\times\dots\times\typeclass{n}{Q_k}$, which corresponds to the inequality
\begin{equation}
    \typeclassprojection{n}{Q}\le\typeclassprojection{n}{Q_1}\otimes\dots\otimes\typeclassprojection{n}{Q_k}
\end{equation}
on the level of type class projections.

Let $P\in\distributions(\mathcal{X})$. The R\'enyi entropy of order $\alpha\in(0,1)\cup(1,\infty)$ is defined as $\entropy_\alpha(P)=\frac{1}{1-\alpha}\log\sum_{x\in\supp P}P(x)^\alpha$, where the base of the logarithm is $2$. It is a decreasing function of $\alpha$ and its limit as $\alpha\to 1$ is the Shannon entropy $\entropy(P)=-\sum_{x\in\supp P}P(x)\log P(x)$. The Shannon entropy is concave and permutation-invariant, therefore Schur concave. The Shannon entropy satisfies $0\le\entropy(P)\le\log\lvert\supp P\rvert\le\log\lvert\mathcal{X}\rvert$. Let $(I_i)_{i=1}^m$ be pairwise disjoint subsets of $\mathcal{X}$ with union equal to $\mathcal{X}$ (i.e., a partition of the set), and introduce the distribution $\hat{P}(i)=P(I_i)$ on $[m]$. The Shannon entropy satisfies the following recursion (or chain rule):
\begin{equation}
    \entropy(P)=\entropy(\hat{P})+\sum_{i\in\supp\hat{P}}\hat{P}(i)\entropy(\left.P\right|_{I_i}/\hat{P}(i)).
\end{equation}
For the entropies of a distribution $(p,1-p)$ on the binary alphabet we use the special notations $h_\alpha(p)=\frac{1}{1-\alpha}\log(p^\alpha+(1-p)^\alpha)$ and $h(p)=-p\log p-(1-p)\log(1-p)$.

The extension of the Shannon entropy to quantum states is the von Neumann entropy, defined by $\entropy(\rho)=-\Tr\rho\log\rho$ in the sense of functional calculus for the continuous extension of the function $t\mapsto t\log t$ to $[0,\infty)$. It is also nonnegative and its maximum is $\log\dim\mathcal{H}$. The entropies of a bipartite state $\rho\in\states(\mathcal{H}_A\otimes\mathcal{H}_B)$ and its marginals satisfy the triangle inequality
\begin{equation}
    \entropy(\rho_{AB})\ge\entropy(\rho_A)-\entropy(\rho_B).
\end{equation}

Given a probability distribution $Q$ and a measure $P$ on $\mathcal{X}$, the Kullback--Leibler divergence or relative entropy is defined as
\begin{equation}
\relativeentropy{Q}{P}=\begin{cases}
    \sum_{x\in\supp Q}Q(x)\log\frac{Q(x)}{P(x)} & \text{if $\supp Q\subseteq\supp P$}  \\
    \infty & \text{otherwise.}
\end{cases}
\end{equation}
The relative entropy is jointly convex and satisfies
\begin{equation}
    \relativeentropy{Q}{P}\ge-\log\norm[1]{P},
\end{equation}
with equality iff $Q=\frac{P}{\norm[1]{P}}$. In the binary case we introduce the abbreviation $\binaryrelativeentropy{q}{p}=q\log\frac{q}{p}+(1-q)\log\frac{1-q}{1-p}$.

For measures $P_1$ and $P_2$ on the finite sets $\mathcal{X}_1$ and $\mathcal{X}_2$ respectively, and a distribution $Q\in\distributions(\mathcal{X}_1\times\mathcal{X}_2)$, the Kullback--Leibler divergence satisfies
\begin{equation}\label{eq:KLproductidentity}
\begin{split}
\relativeentropy{Q}{P_1\otimes P_2}
 & = \sum_{x_1,x_2}Q(x_1,x_2)\log\frac{Q(x_1,x_2)}{P_1(x_1)P_2(x_2)}  \\
 & = \sum_{x_1,x_2}Q(x_1,x_2)\log\frac{Q_1(x_1)Q_2(x_2)}{P_1(x_1)P_2(x_2)}\frac{Q(x_1,x_2)}{Q_1(x_1)Q_2(x_2)}  \\
 & = \sum_{x_1,x_2}Q(x_1,x_2)\left(\log\frac{Q_1(x_1)}{P_1(x_1)}+\log\frac{Q_2(x_2)}{P_2(x_2)}+\log\frac{Q(x_1,x_2)}{Q_1(x_1)Q_2(x_2)}\right)  \\
 & = \relativeentropy{Q_1}{P_1}+\relativeentropy{Q_2}{P_2}+\mutualinformation(1:2)_Q,
\end{split}
\end{equation}
where $\mutualinformation(1:2)_Q=\entropy(Q_1)+\entropy(Q_2)-\entropy(Q)\ge 0$ is the mutual information.

There is a chain rule for the relative entropy as well, which we state for two terms for simplicity. Let $P_1$ and $P_2$ be measures on $\mathcal{X}_1$ and $\mathcal{X}_2$, respectively, $Q_1\in\distributions(\mathcal{X}_1)$, $Q_2\in\distributions(\mathcal{X}_2)$, and $q\in[0,1]$. Then
\begin{equation}
    \relativeentropy{qQ_1\oplus(1-q)Q_2}{P_1\oplus P_2}=q\relativeentropy{Q_1}{P_1}+(1-q)\relativeentropy{Q_2}{P_2}-h(q).
\end{equation}

The R\'enyi entropies can be expressed in terms of the Shannon entropy and the Kullback--Leibler divergence via the variational formula \cite{arikan1996inequality,merhav1999shannon,shayevitz2011renyi}
\begin{equation}\label{eq:variationalRenyi}
\entropy_\alpha(P)=\max_{Q\in\supp P}\left[\entropy(Q)-\frac{\alpha}{1-\alpha}\relativeentropy{Q}{P}\right],
\end{equation}
where $\alpha\in(0,1)$.

The (classical) R\'enyi divergence of order $\alpha\in(0,1)$ between the measures $Q$ and $P$ is
\begin{equation}
    \relativeentropy[\alpha]{Q}{P}=\frac{1}{\alpha-1}\log\sum_{x\in\mathcal{X}}Q(x)^\alpha P(x)^{1-\alpha}.
\end{equation}
The R\'enyi divergence has many possible extensions to pairs of positive operators \cite{petz1986quasi,wilde2014strong,muller2013quantum,audenaert2015alpha,mosonyi2022geometric}. We will make use of the minimal or sandwiched R\'enyi divergence \cite{wilde2014strong,muller2013quantum}
\begin{equation}
    \sandwiched[\alpha]{\rho}{\sigma}=\frac{1}{\alpha-1}\log\Tr\left(\sqrt{\rho}\sigma^{\frac{1-\alpha}{\alpha}}\sqrt{\rho}\right)^\alpha.
\end{equation}
When $\alpha\ge\frac{1}{2}$, it satisfies the data processing inequality $\sandwiched[\alpha]{\rho}{\sigma}\ge\sandwiched[\alpha]{T(\rho)}{T(\sigma)}$ for every channel $T$. If the first argument has rank at most $1$, i.e. $\rho=\vectorstate{\psi}$ for some vector $\psi$, then the sandwiched R\'enyi divergence may also be written as
\begin{equation}
    \sandwiched[\alpha]{\vectorstate{\psi}}{\sigma}=\frac{\alpha}{\alpha-1}\log\bra{\psi}\sigma^{\frac{1-\alpha}{\alpha}}\ket{\psi}.
\end{equation}

Following and extending the definition in \cite[Section 3.3]{cover2012elements}, we introduce a notation for comparing nonnegative sequences depending on a natural number $n$ to first order in the exponent: we will write $a_n\polyle b_n$ if for some positive sequence $(r_n)_{n\in\naturals}$ with $\lim_{n\to\infty}\sqrt[n]{r_n}=1$ the inequality $a_n\le r_n b_n$ holds for all $n\in\naturals$. If both $a_n\polyle b_n$ and $a_n\polyge b_n$ then we write $a_n\polyeq b_n$. If the sequence depends on other parameters as well (such as a bounded-length partition $\lambda\vdash n$ or an $n$-type $Q$ over a fixed finite alphabet), we will require the bound to be uniform in the additional parameters and to depend only on $n$. For instance, \eqref{eq:typeclassbound} implies $\lvert\typeclass{n}{Q}\rvert\polyeq 2^{n\entropy(Q)}$. In fact, here and in all the instances below the subexponential factor $r_n$ can be chosen to be a polynomial in $n$. More generally, we will use the same notation to express similar relations between sequences of nonnegative operators.

Given a finite-dimensional Hilbert space $\mathcal{H}$ and a natural number $n$, the Schur--Weyl decomposition writes $\mathcal{H}^{\otimes n}$ as a direct sum of irreducible subspaces for the representation of $U(\mathcal{H})\times S_n$, where the unitary group acts diagonally and the symmetric permutes the tensor factors. The subspaces are labelled by partitions $\lambda\vdash n$ of length at most $d=\dim\mathcal{H}$ and the decomposition has the form
\begin{equation}
    \mathcal{H}^{\otimes n}\simeq\bigoplus_{\lambda\vdash n}\mathbb{S}_\lambda(\mathcal{H})\otimes[\lambda],
\end{equation}
where $\mathbb{S}_\lambda(\mathcal{H})$ is an irreducible representation of $U(\mathcal{H})$ (or zero if $\lambda$ has length greater than $d$) and $[\lambda]$ is an irreducible representation of $S_n$. The number of terms is bounded by $(n+1)^d$ and the dimensions of the appearing spaces satisfy \cite[eqs. (6.16) and (6.21)]{hayashi2017group} (see also \cite{hayashi2002exponents,christandl2006spectra,harrow2005applications})
\begin{equation}\label{eq:unitarydimensionbound}
\dim\mathbb{S}_\lambda(\mathcal{H})\le(n+1)^{d(d-1)/2}
\end{equation}
\begin{equation}\label{eq:symmetricdimensionbound}
\frac{1}{(n+d)^{(d+2)(d-1)/2}}2^{n\entropy(\lambda/n)}\le \dim[\lambda]\le 2^{n\entropy(\lambda/n)}.
\end{equation}
We denote the projection onto the subspace corresponding to $\lambda$ by $P^{\mathcal{H}}_\lambda$ or by $P_\lambda$ if the Hilbert space is clear from the context.

Given an orthonormal basis $\ket{1},\dots,\ket{d}$ of $\mathcal{H}$, let $T\subseteq U(\mathcal{H})$ be the subgroup of diagonal unitaries. We say that a vector in a representation of $U(\mathcal{H})$ is a weight vector with weight $w\in\integers^d$ if it is an eigenvector of every element of $T$ and $\operatorname{diag}(z_1,\dots,z_d)$ acts on it by multiplication with $z_1^{w_1}z_2^{w_2}\cdots z_d^{w_d}$. The dimension of the space of weight-$\mu$ vectors in $\mathbb{S}_\lambda(\mathcal{H})$, when $\lambda$ has at most $d$ parts, is given by the Kostka number $K_{\lambda\mu}$ \cite[\S 15.3]{fulton1991representation}. The Kostka numbers satisfy $K_{\lambda\mu}\neq 0$ iff $\lambda\majorizes\mu$ \cite{lam1977young}. In particular, under the same assumption on the dimension we have
\begin{alignat}{2}
1\le K_{\lambda\mu} & \le\dim(\mathbb{S}_\lambda(\mathcal{H})) &\quad& \text{if $\lambda\majorizes\mu$,}  \label{eq:Kostkabound} \\
K_{\lambda\mu} & = 0 && \text{if $\lambda\not\majorizes\mu$.}  \label{eq:Kostkavanishing}
\end{alignat}

In the multipartite case, a family of elements of the asymptotic spectrum $\Delta_k$ for $k\ge 3$ was constructed in \cite{vrana2023family} (with a possible generalisation considered in \cite{bugar2022interpolating}). These can be viewed as R\'enyi generalisations of the convex combinations of single-site marginal von Neumann entropies, but are not simply functions of the marginals. In the following we briefly review their construction.

 Let $\psi\in\mathcal{H}_1\otimes\dots\otimes\mathcal{H}_k$ be a $k$-partite state and $\overline{\lambda}=(\overline{\lambda}_1,\dots,\overline{\lambda}_k)$ a $k$-tuple of decreasingly ordered finite probability vectors. We define the rate function
\begin{equation}\label{eq:ratefunction}
I_\psi(\overline{\lambda_1},\dots,\overline{\lambda_k})=\lim_{\epsilon\to 0}\lim_{n\to\infty}-\frac{1}{n}\log\sum_{\substack{\lambda_1,\dots,\lambda_k\vdash n  \\  \forall j:\norm[1]{\frac{\lambda_j}{n}-\overline{\lambda_j}}\le\epsilon}}\norm{(P^{\mathcal{H}_1}_{\lambda_1}\otimes\dots\otimes P^{\mathcal{H}_k}_{\lambda_k})\psi^{\otimes n}}^2.
\end{equation}
For $\alpha\in (0,1)$ and convex weights $\theta\in\distributions([k])$ we define
\begin{align}
F^{\alpha,\theta}(\psi) & = 
\begin{cases}
2^{(1-\alpha)E^{\alpha,\theta}(\psi)}\quad &\text{if }\psi \neq 0  \label{eq:upperLOCC}\\
0 &\text{if }\psi = 0
\end{cases}
\intertext{where}
E^{\alpha,\theta}(\psi) & =\sup_{\overline{\lambda_1},\dots,\overline{\lambda_k}}\left[\sum_{j=1}^k\theta(j)\entropy(\overline{\lambda_j})-\frac{\alpha}{1-\alpha}I_\psi(\overline{\lambda_1},\dots,\overline{\lambda_k})\right].  \label{eq:logupperLOCC}
\end{align}
In \cite{vrana2023family} it was shown that for each $\alpha\in(0,1)$ and $\theta\in\distributions([k])$ the functional $F^{\alpha,\theta}$ is an element of the asymptotic spectrum of LOCC transformations.

The parameter $\alpha$ is the order determining the scaling of the functional: $F^{\alpha,\theta}(\sqrt{p}\psi)=p^\alpha F^{\alpha,\theta}(\psi)$. These functionals interpolate between quantum functionals ($\alpha\to 0$) and the $\theta$-weighted average of the von Neumann entropies of marginals for the bipartitions $(\{j\},[k]\setminus \{j\})$, i.e. the entanglement entropy between an elementary subsystem and its complement.

We will make use of two other characterisations, one of which equates $E^{\alpha,\theta}$ with the regularisation of the functional
\begin{equation}\label{eq:lowerLogLOCC}
E_{\alpha,\theta}(\psi) = \sup_{\varphi=(A_1\otimes\dots\otimes A_k)\psi}\sum_{j=1}^k\theta(j)\entropy\left(\frac{\Tr_j\vectorstate{\varphi}}{\norm{\varphi}^2}\right)+\frac{\alpha}{1-\alpha}\log\norm{\varphi}^2,
\end{equation}
where we take the supremum over contractions $A_j:\mathcal{H}_j\to\mathcal{H}_j$ (i.e., $E^{\alpha,\theta}(\psi)=\lim_{n\to\infty}\frac{1}{n}E_{\alpha,\theta}(\psi^{\otimes n})$).

The second one is \cite{bugar2022interpolating}
\begin{equation}\label{eq:Ealphathetacharacterization}
E^{\alpha,\theta}(\psi)=-\lim_{n\to\infty}\frac{1}{n}\sandwiched[\alpha]{\vectorstate{\psi}^{\otimes n}}{A_{\mathcal{H},n}},
\end{equation}
where
\begin{equation}
A_{\mathcal{H},n}=\sum_{\lambda_1,\dots,\lambda_k\vdash n}2^{n\theta(1)\entropy(\lambda_1/n)+\dots+n\theta(k)\entropy(\lambda_k/n)}P^{\mathcal{H}_1}_{\lambda_1}\otimes\dots\otimes P^{\mathcal{H}_k}_{\lambda_k}.
\end{equation}

\section{General upper bound}\label{sec:rhogeneral}

In this section we derive an upper bound on the functionals $E^{\alpha,\theta}$. We make use of the characterisation in \eqref{eq:Ealphathetacharacterization}.
Due to the monotonicity of the minimal R\'enyi divergence, for $\alpha\ge\frac{1}{2}$ and any channel $T$ on $\mathcal{H}$ the inequality
\begin{equation}
\begin{split}
E^{\alpha,\theta}(\psi)
 & \le -\liminf_{n\to\infty}\frac{1}{n}\sandwiched[\alpha]{T^{\otimes n}(\vectorstate{\psi}^{\otimes n})}{T^{\otimes n}(A_{\mathcal{H},n})}  \\
 & = -\liminf_{n\to\infty}\frac{1}{n}\sandwiched[\alpha]{T(\vectorstate{\psi})^{\otimes n}}{T^{\otimes n}(A_{\mathcal{H},n})}  \\
\end{split}
\end{equation}
holds. While this reasoning does not work for $\alpha<\frac{1}{2}$, we can take advantage of the fact that the first argument is a rank-one operator and use the following modified expression:
\begin{equation}\label{eq:Egeneralbound}
\begin{split}
E^{\alpha,\theta}(\psi)
 & = -\lim_{n\to\infty}\frac{1}{n} \sandwiched[\alpha]{\vectorstate{\psi}^{\otimes n}}{A_{\mathcal{H},n}}  \\
 & = -\lim_{n\to\infty}\frac{1}{n}\frac{\alpha}{\alpha-1}\log\bra{\psi}^{\otimes n}A^{\frac{1-\alpha}{\alpha}}_{\mathcal{H},n}\ket{\psi}^{\otimes n}  \\
 & = -\lim_{n\to\infty}\frac{1}{n} \frac{\alpha}{1-\alpha} \sandwiched[\frac{1}{2}]{\vectorstate{\psi}^{\otimes n}}{A^{\frac{1-\alpha}{\alpha}}_{\mathcal{H},n}}  \\
 & \le -\liminf_{n\to\infty}\frac{1}{n} \frac{\alpha}{1-\alpha} \sandwiched[\frac{1}{2}]{T(\vectorstate{\psi})^{\otimes n}}{T^{\otimes n}(A^{\frac{1-\alpha}{\alpha}}_{\mathcal{H},n})}
\end{split}
\end{equation}
Our goal is to apply this inequality in the special case when $T$ is a measurement in a product basis and to find a single-letter expression for the liminf (which turns out to be a limit).

We start with a bound on a single measured Schur--Weyl projection (with respect to a tensor power basis). To this end, let $\mathcal{H}$ be a Hilbert space, $\{\ket{x}\}_{x\in\mathcal{X}}$ an orthonormal basis, and
\begin{equation}\label{eq:measurementchannel}
\mathcal{M}(\rho)=\sum_{x\in\mathcal{X}}\ketbra{x}{x}\rho\ketbra{x}{x}
\end{equation}
the corresponding measurement channel.
\begin{lemma}\label{lem:measuredSchurWeyl}
For $\lambda\vdash n$, let $P_\lambda\in\boundeds(\mathcal{H}^{\otimes n})$ denote the orthogonal projection on the corresponding term in the Schur--Weyl decomposition. Then
\begin{align}
\mathcal{M}^{\otimes n}(P_\lambda)
 & \le (n+1)^{d(d+1)/2}\sum_{\substack{Q\in\distributions[n](\mathcal{X})  \\  \lambda\majorizes nQ}}2^{n(\entropy(\lambda/n)-\entropy(Q))}\typeclassprojection{n}{Q}  \label{eq:measuredSchurWeylupperbound}  \\
\intertext{and}
\mathcal{M}^{\otimes n}(P_\lambda)
 & \ge (n+d)^{-(d+2)(d-1)/2}\sum_{\substack{Q\in\distributions[n](\mathcal{X})  \\  \lambda\majorizes nQ}}2^{n(\entropy(\lambda/n)-\entropy(Q))}\typeclassprojection{n}{Q}.  \label{eq:measuredSchurWeyllowerbound}
\end{align}
In particular, using the notation introduced in \cref{sec:preliminaries} we have
\begin{equation}
\mathcal{M}^{\otimes n}(P_\lambda)\polyeq\sum_{\substack{Q\in\distributions[n](\mathcal{X})  \\  \lambda\majorizes nQ}}2^{n(\entropy(\lambda/n)-\entropy(Q))}\typeclassprojection{n}{Q}.
\end{equation}
\end{lemma}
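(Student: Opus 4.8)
The plan is to exploit the fact that the dephasing channel $\mathcal{M}^{\otimes n}$ makes its output diagonal in the product basis $\{\ket{x}\}_{x\in\mathcal{X}^n}$, so that $\mathcal{M}^{\otimes n}(P_\lambda)=\sum_{x\in\mathcal{X}^n}\bra{x}P_\lambda\ket{x}\ketbra{x}{x}$, and to evaluate the diagonal entries exactly. First I would observe that $P_\lambda$ commutes with the $S_n$ action permuting the tensor factors, since it is the projection onto a $U(\mathcal{H})\times S_n$-subrepresentation. Any two basis vectors $\ket{x},\ket{x'}$ whose strings lie in the same type class $\typeclass{n}{Q}$ are related by such a permutation, so the entry $\bra{x}P_\lambda\ket{x}$ depends only on the type $Q$ of $x$; call it $p_\lambda(Q)$. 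This already gives $\mathcal{M}^{\otimes n}(P_\lambda)=\sum_{Q\in\distributions[n](\mathcal{X})}p_\lambda(Q)\typeclassprojection{n}{Q}$, reducing the problem to estimating the scalars $p_\lambda(Q)$.

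The key step is a closed-form expression for $p_\lambda(Q)$. Under the diagonal torus $T\subseteq U(\mathcal{H})$, the basis vector $\ket{x}$ of type $Q$ is a weight vector of weight $\mu=nQ$, and $\typeclassprojection{n}{Q}$ is precisely the orthogonal projection onto the weight-$\mu$ subspace of $\mathcal{H}^{\otimes n}$ (spanned by $\typeclass{n}{Q}$). Because $P_\lambda$ commutes with $T$, it preserves this weight space, and
\begin{equation}
\lvert\typeclass{n}{Q}\rvert\,p_\lambda(Q)=\sum_{x\in\typeclass{n}{Q}}\bra{x}P_\lambda\ket{x}=\Tr\!\left(P_\lambda\typeclassprojection{n}{Q}\right)
\end{equation}
equals the rank of $P_\lambda$ restricted to the weight-$\mu$ space. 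By the Schur--Weyl decomposition this rank is the weight-$\mu$ multiplicity inside $\mathbb{S}_\lambda(\mathcal{H})$, namely the Kostka number $K_{\lambda\mu}$, times the multiplicity $\dim[\lambda]$, whence
\begin{equation}
p_\lambda(Q)=\frac{K_{\lambda\mu}\dim[\lambda]}{\lvert\typeclass{n}{Q}\rvert},\qquad\mu=nQ.
\end{equation}

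It then remains to substitute the standard estimates. The Kostka vanishing \eqref{eq:Kostkavanishing} gives $p_\lambda(Q)=0$ unless $\lambda\majorizes nQ$, which accounts for the summation range. When $\lambda\majorizes nQ$, combining $1\le K_{\lambda\mu}\le(n+1)^{d(d-1)/2}$ from \eqref{eq:Kostkabound} and \eqref{eq:unitarydimensionbound}, the bounds on $\dim[\lambda]$ from \eqref{eq:symmetricdimensionbound}, and the bounds on $\lvert\typeclass{n}{Q}\rvert$ from \eqref{eq:typeclassbound} yields, for the upper bound,
\begin{equation}
p_\lambda(Q)\le\frac{(n+1)^{d(d-1)/2}\,2^{n\entropy(\lambda/n)}}{(n+1)^{-d}\,2^{n\entropy(Q)}}=(n+1)^{d(d+1)/2}\,2^{n(\entropy(\lambda/n)-\entropy(Q))},
\end{equation}
using $d(d-1)/2+d=d(d+1)/2$, and the matching lower bound with prefactor $(n+d)^{-(d+2)(d-1)/2}$ from $K_{\lambda\mu}\ge 1$, the lower bound on $\dim[\lambda]$, and $\lvert\typeclass{n}{Q}\rvert\le 2^{n\entropy(Q)}$. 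Summing $p_\lambda(Q)\typeclassprojection{n}{Q}$ over the orthogonal projections $\typeclassprojection{n}{Q}$ gives the two operator inequalities, and the final $\polyeq$ statement follows since both prefactors are polynomial in $n$.

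I expect the only genuinely delicate point to be the representation-theoretic identification $\Tr(P_\lambda\typeclassprojection{n}{Q})=K_{\lambda,nQ}\dim[\lambda]$: one must carefully match the weight-space decomposition of $\mathcal{H}^{\otimes n}$ with the Schur--Weyl blocks and use the $T$-equivariance of $P_\lambda$. Everything afterwards is bookkeeping with the dimension bounds already recorded in \cref{sec:preliminaries}.
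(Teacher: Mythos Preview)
Your proposal is correct and follows essentially the same approach as the paper: both write $\mathcal{M}^{\otimes n}(P_\lambda)$ as a linear combination of type class projections using $S_n$-invariance, identify the coefficients as $\dim[\lambda]\,K_{\lambda,nQ}/\lvert\typeclass{n}{Q}\rvert$ via the weight-space dimension inside the Schur--Weyl block, and then plug in the bounds \eqref{eq:typeclassbound}, \eqref{eq:unitarydimensionbound}, \eqref{eq:symmetricdimensionbound}, \eqref{eq:Kostkabound}, \eqref{eq:Kostkavanishing}. Your write-up is in fact more explicit than the paper's about the identification $\Tr(P_\lambda\typeclassprojection{n}{Q})=K_{\lambda,nQ}\dim[\lambda]$, but the argument is the same.
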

\begin{proof}
Since $P_\lambda$ is invariant under the action of $S_n$ permuting the tensor factors of $\mathcal{H}^{\otimes n}$ and $\mathcal{M}^{\otimes n}$ is equivariant, the result is also $S_n$-invariant in addition to being diagonal, therefore it is a linear combination of the type class projections $\typeclassprojection{n}{Q}$. More specifically, since $\Tr\typeclassprojection{n}{Q}P_\lambda$ is the dimension of the space of weight-$nQ$ vectors in $P_\lambda\mathcal{H}^{\otimes n}\simeq[\lambda]\otimes\mathbb{S}_\lambda(\mathcal{H})$, we have
\begin{equation}
\begin{split}
\mathcal{M}^{\otimes n}(P_\lambda)
 & = \sum_{Q\in\distributions[n](\mathcal{X})}\frac{\Tr\typeclassprojection{n}{Q}P_\lambda}{\lvert\typeclass{n}{Q}\rvert}\typeclassprojection{n}{Q}
 & = \sum_{Q\in\distributions[n](\mathcal{X})}\frac{\dim[\lambda] K_{\lambda,nQ^{\downarrow}}}{\lvert\typeclass{n}{Q}\rvert}\typeclassprojection{n}{Q}.
\end{split}
\end{equation}
Using \cref{eq:unitarydimensionbound,eq:symmetricdimensionbound,eq:typeclassbound,eq:Kostkabound,eq:Kostkavanishing}, we obtain \cref{eq:measuredSchurWeylupperbound,eq:measuredSchurWeyllowerbound}.
\end{proof}

In the following we choose an orthonormal basis in each tensor factor of $\mathcal{H}=\mathcal{H}_1\otimes\dots\otimes\mathcal{H}_k$ (with index sets $\mathcal{X}_1,\dots,\mathcal{X}_k$) and let $\mathcal{M}_1,\dots,\mathcal{M}_k$ denote the corresponding measurement channels constructed as in \eqref{eq:measurementchannel}. We also introduce the notation $\mathcal{M}=\mathcal{M}_1\otimes\dots\otimes\mathcal{M}_k$.
\begin{proposition}\label{prop:measuredAasymptotics}
\begin{equation}\label{eq:measuredAasymptotics}
\mathcal{M}^{\otimes n}(A_{\mathcal{H},n}^{\frac{1-\alpha}{\alpha}})
  \polyeq \sum_{Q\in\distributions[n](\mathcal{X}_1\times\dots\times\mathcal{X}_k)}2^{\frac{1-\alpha}{\alpha}n\sum_{j=1}^k\theta(j)\entropy(Q_j)}\typeclassprojection{n}{Q}.
\end{equation}
\end{proposition}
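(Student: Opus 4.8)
The plan is to compute the left-hand side explicitly up to subexponential factors by reducing it, factor by factor, to the single-measurement estimate of \cref{lem:measuredSchurWeyl}. The first observation is that $A_{\mathcal{H},n}$ is a positive operator of the form $\sum_{\lambda_1,\dots,\lambda_k}c_{\lambda_1,\dots,\lambda_k}\,P^{\mathcal{H}_1}_{\lambda_1}\otimes\dots\otimes P^{\mathcal{H}_k}_{\lambda_k}$ with strictly positive scalars $c_{\lambda_1,\dots,\lambda_k}=2^{n\sum_j\theta(j)\entropy(\lambda_j/n)}$ and mutually orthogonal projections summing to the identity, so the functional calculus is termwise and
\[
A_{\mathcal{H},n}^{\frac{1-\alpha}{\alpha}}=\sum_{\lambda_1,\dots,\lambda_k\vdash n}2^{\frac{1-\alpha}{\alpha}n\sum_{j=1}^k\theta(j)\entropy(\lambda_j/n)}P^{\mathcal{H}_1}_{\lambda_1}\otimes\dots\otimes P^{\mathcal{H}_k}_{\lambda_k}.
\]
After the reordering isomorphism $\mathcal{H}^{\otimes n}\simeq\mathcal{H}_1^{\otimes n}\otimes\dots\otimes\mathcal{H}_k^{\otimes n}$ we have $\mathcal{M}^{\otimes n}=\mathcal{M}_1^{\otimes n}\otimes\dots\otimes\mathcal{M}_k^{\otimes n}$, so $\mathcal{M}^{\otimes n}$ distributes over the tensor product and acts on each $P^{\mathcal{H}_j}_{\lambda_j}$ separately.

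Next I would insert \cref{lem:measuredSchurWeyl} into each factor. Since all operators involved are positive and the polynomial prefactors in the lemma are uniform in $\lambda_j$, the operator bounds survive tensoring (using $0\le A_j\le B_j$ for all $j$ implies $\bigotimes_j A_j\le\bigotimes_j B_j$) and the subsequent sum over $\lambda_1,\dots,\lambda_k$ with positive coefficients, giving
\[
\mathcal{M}^{\otimes n}(A_{\mathcal{H},n}^{\frac{1-\alpha}{\alpha}})\polyeq\sum_{\lambda_1,\dots,\lambda_k\vdash n}2^{\frac{1-\alpha}{\alpha}n\sum_j\theta(j)\entropy(\lambda_j/n)}\bigotimes_{j=1}^k\sum_{\substack{Q_j\in\distributions[n](\mathcal{X}_j)\\\lambda_j\majorizes nQ_j}}2^{n(\entropy(\lambda_j/n)-\entropy(Q_j))}\typeclassprojection{n}{Q_j}.
\]
Expanding the tensor product and exchanging the order of summation regroups this as $\sum_{Q_1,\dots,Q_k}C(Q_1,\dots,Q_k)\,\typeclassprojection{n}{Q_1}\otimes\dots\otimes\typeclassprojection{n}{Q_k}$, where the coefficient factorises as $C(Q_1,\dots,Q_k)=\prod_{j=1}^k\sum_{\lambda_j\majorizes nQ_j}2^{n[(\frac{1-\alpha}{\alpha}\theta(j)+1)\entropy(\lambda_j/n)-\entropy(Q_j)]}$.

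The heart of the argument is evaluating each factor. Since there are only polynomially many partitions $\lambda_j\vdash n$ of length at most $\dim\mathcal{H}_j$, each sum is $\polyeq$ its largest term, i.e. the maximum of $(\frac{1-\alpha}{\alpha}\theta(j)+1)\entropy(\lambda_j/n)$ over $\lambda_j\majorizes nQ_j$. The constraint reads $\lambda_j/n\majorizes Q_j$, so Schur concavity of the Shannon entropy gives $\entropy(\lambda_j/n)\le\entropy(Q_j)$; as the coefficient $\frac{1-\alpha}{\alpha}\theta(j)+1$ is strictly positive, the maximiser is any admissible $\lambda_j$ with $\entropy(\lambda_j/n)=\entropy(Q_j)$. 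The crucial (and essentially only) structural input is that the marginal $Q_j$ of an $n$-type $Q$ is itself an $n$-type, so $\lambda_j=(nQ_j)^\downarrow$ is a genuine partition of $n$ of length at most $\dim\mathcal{H}_j$ that attains this value. Hence each factor is $\polyeq 2^{\frac{1-\alpha}{\alpha}n\theta(j)\entropy(Q_j)}$ uniformly in $Q_j$, whence $C(Q_1,\dots,Q_k)\polyeq 2^{\frac{1-\alpha}{\alpha}n\sum_j\theta(j)\entropy(Q_j)}$ uniformly in $(Q_1,\dots,Q_k)$.

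Finally I would match this with the claimed right-hand side. Because its coefficient depends on the joint type $Q$ only through its marginals, grouping the joint types by their marginals and using $\sum_{Q:\,\text{marginals}=(Q_1,\dots,Q_k)}\typeclassprojection{n}{Q}=\typeclassprojection{n}{Q_1}\otimes\dots\otimes\typeclassprojection{n}{Q_k}$ rewrites the right-hand side in exactly the regrouped form obtained above, which finishes the proof. The only genuine subtlety is the interplay between $\polyeq$ and operator sums: because the projections $\typeclassprojection{n}{Q_1}\otimes\dots\otimes\typeclassprojection{n}{Q_k}$ are mutually orthogonal and the scalar estimates are uniform in $(Q_1,\dots,Q_k)$, a uniform scalar $\polyeq$ does lift to the operator $\polyeq$, with the accumulated prefactor a product of finitely many polynomials (hence again polynomial) in $n$. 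The main obstacle is therefore not any hard estimate but keeping this bookkeeping of uniform subexponential factors honest through the powering, tensoring, and regrouping steps.
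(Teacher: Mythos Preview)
Your proof is correct and essentially the same as the paper's: both write $A_{\mathcal{H},n}^{\frac{1-\alpha}{\alpha}}$ as a tensor product, apply \cref{lem:measuredSchurWeyl} factorwise, use Schur concavity of $\entropy$ together with the polynomial bound on the number of partitions to pick out the term $\lambda_j=(nQ_j)^\downarrow$, and finish by regrouping marginal type class projections into joint ones. The only cosmetic difference is that the paper simplifies each of the $k$ factors separately before tensoring, whereas you expand the tensor product first and then analyse the factorised coefficient $C(Q_1,\dots,Q_k)$; the underlying estimates and bookkeeping are identical.
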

\begin{proof}

Note that
\begin{equation}
A_{\mathcal{H},n}^{\frac{1-\alpha}{\alpha}}
 = \bigotimes_{j=1}^k\sum_{\lambda\vdash n}2^{\frac{1-\alpha}{\alpha}n\theta(j)\entropy(\lambda)}P^{\mathcal{H}_j}_\lambda
\end{equation}
and therefore
\begin{equation}\label{eq:measuredAproductform}
\begin{split}
\mathcal{M}^{\otimes n}(A_{\mathcal{H},n}^{\frac{1-\alpha}{\alpha}})
 & = (\mathcal{M}_1^{\otimes n}\otimes\dots\otimes\mathcal{M}_k^{\otimes n})(A_{\mathcal{H},n}^{\frac{1-\alpha}{\alpha}})  \\
 & = \bigotimes_{j=1}^k\sum_{\lambda\vdash n}2^{\frac{1-\alpha}{\alpha}n\theta(j)\entropy(\lambda)}\mathcal{M}_j^{\otimes n}(P^{\mathcal{H}_j}_\lambda).
\end{split}
\end{equation}

We estimate the factors using \cref{lem:measuredSchurWeyl} as
\begin{equation}
\begin{split}
\sum_{\lambda\vdash n}2^{\frac{1-\alpha}{\alpha}n\theta(j)\entropy(\lambda)}\mathcal{M}_j^{\otimes n}(P^{\mathcal{H}_j}_\lambda)
 & \polyeq \sum_{\lambda\vdash n}2^{\frac{1-\alpha}{\alpha}n\theta(j)\entropy(\lambda)}\sum_{\substack{Q\in\distributions[n](\mathcal{X}_j)  \\  \lambda\majorizes nQ}}2^{n(\entropy(\lambda/n)-\entropy(Q))}\typeclassprojection{n}{Q}  \\
 & = \sum_{Q\in\distributions[n](\mathcal{X}_j)}\sum_{\substack{\lambda\vdash n  \\  \lambda\majorizes nQ}}2^{\frac{1-\alpha}{\alpha}n\theta(j)\entropy(\lambda)}2^{n(\entropy(\lambda/n)-\entropy(Q))}\typeclassprojection{n}{Q}.
\end{split}
\end{equation}
Since the Shannon entropy is Schur concave, the largest term in the coefficient of $\typeclassprojection{n}{Q}$ is obtained by setting $\lambda=nQ^\downarrow$. The number of terms is bounded by $(n+1)^{d_1+\dots+d_k}$, therefore
\begin{equation}
\sum_{\lambda\vdash n}2^{\frac{1-\alpha}{\alpha}n\theta(j)\entropy(\lambda)}\mathcal{M}_j^{\otimes n}(P^{\mathcal{H}_j}_\lambda)\polyeq\sum_{Q\in\distributions[n](\mathcal{X}_j)}2^{\frac{1-\alpha}{\alpha}n\theta(j)\entropy(Q)}\typeclassprojection{n}{Q},
\end{equation}

By \eqref{eq:measuredAproductform} we have
\begin{equation}
\begin{split}
\mathcal{M}^{\otimes n}(A_{\mathcal{H},n}^{\frac{1-\alpha}{\alpha}})
 & \polyeq \sum_{Q_1,\dots,Q_k}2^{\frac{1-\alpha}{\alpha}n\sum_{j=1}^k\theta(j)\entropy(Q)}\typeclassprojection{n}{Q_1}\otimes\dots\otimes\typeclassprojection{n}{Q_k}  \\
 & = \sum_{Q\in\distributions[n](\mathcal{X}_1\times\dots\times\mathcal{X}_k)}2^{\frac{1-\alpha}{\alpha}n\sum_{j=1}^k\theta(j)\entropy(Q_j)}\typeclassprojection{n}{Q}
\end{split}
\end{equation}
as claimed.
\end{proof}

\begin{proposition}\label{prop:measuredEbound}
\begin{equation}\label{eq:measuredEbound}
\begin{split}
E^{\alpha,\theta}(\psi)
 & \le -\lim_{n\to\infty}\frac{1}{n} \frac{\alpha}{1-\alpha} \sandwiched[\frac{1}{2}]{\mathcal{M}(\vectorstate{\psi})^{\otimes n}}{\mathcal{M}^{\otimes n}(A^{\frac{1-\alpha}{\alpha}}_{\mathcal{H},n})}  \\
 & = \max_{Q\in\distributions(\supp\mathcal{M}(\vectorstate{\psi})}\left[\sum_{j=1}^k\theta(j)\entropy(Q_j)-\frac{\alpha}{1-\alpha}\relativeentropy{Q}{\mathcal{M}(\vectorstate{\psi})}\right].
\end{split}
\end{equation}
\end{proposition}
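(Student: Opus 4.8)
The first line is nothing but the last inequality of \eqref{eq:Egeneralbound} specialised to the product measurement channel $T=\mathcal{M}=\mathcal{M}_1\otimes\dots\otimes\mathcal{M}_k$, using $\mathcal{M}^{\otimes n}(\vectorstate{\psi}^{\otimes n})=\mathcal{M}(\vectorstate{\psi})^{\otimes n}$ to identify the first argument. All of the content is in the second line: evaluating the limit and, in particular, upgrading the $\liminf$ of \eqref{eq:Egeneralbound} to a genuine limit with the stated single-letter value. So the plan is to produce a closed-form expression for this limit.

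The key simplification I would exploit is that both arguments of the divergence are diagonal in the product basis indexed by $\mathcal{X}^n=(\mathcal{X}_1\times\dots\times\mathcal{X}_k)^n$: the first is the classical distribution $P^{\otimes n}$ with $P=\mathcal{M}(\vectorstate{\psi})$, and the second is $\mathcal{M}^{\otimes n}(A^{\frac{1-\alpha}{\alpha}}_{\mathcal{H},n})$, whose diagonal entries are governed by \cref{prop:measuredAasymptotics}. For diagonal operators the sandwiched R\'enyi divergence of order $\tfrac12$ collapses to its classical counterpart, $\sandwiched[\frac12]{\rho}{\sigma}=-2\log\sum_x\sqrt{\rho(x)\sigma(x)}$, so the task reduces to determining the exponential growth rate of $\sum_{x\in\mathcal{X}^n}\sqrt{P^{\otimes n}(x)}\,\sqrt{\bigl(\mathcal{M}^{\otimes n}(A^{\frac{1-\alpha}{\alpha}}_{\mathcal{H},n})\bigr)(x)}$.

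I would evaluate this sum by the method of types, grouping the indices $x$ by their joint type $Q\in\distributions[n](\mathcal{X}_1\times\dots\times\mathcal{X}_k)$, where only $Q$ with $\supp Q\subseteq\supp P$ contribute. On a single type class one has $P^{\otimes n}(x)\polyeq 2^{-n(\entropy(Q)+\relativeentropy{Q}{P})}$ and, by \cref{prop:measuredAasymptotics}, diagonal weight $\polyeq 2^{\frac{1-\alpha}{\alpha}n\sum_j\theta(j)\entropy(Q_j)}$, while $\lvert\typeclass{n}{Q}\rvert\polyeq 2^{n\entropy(Q)}$. Since there are only polynomially many types, the sum is controlled by its dominant term, and after applying the factor $-\tfrac1n\tfrac{\alpha}{1-\alpha}$ and letting $n\to\infty$ (using density of $n$-types in $\distributions(\supp P)$ and continuity of the objective) the expression should collapse to a maximum over $Q\in\distributions(\supp P)$ of the claimed integrand. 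The step requiring the most care—and the one I regard as the main obstacle—is the exact bookkeeping of the competing exponentials, in particular verifying that the multiplicity $\lvert\typeclass{n}{Q}\rvert\polyeq 2^{n\entropy(Q)}$ picked up from summing $\sqrt{P^{\otimes n}(x)}$ over a type class combines with the remaining factors to leave precisely $\sum_j\theta(j)\entropy(Q_j)-\frac{\alpha}{1-\alpha}\relativeentropy{Q}{P}$; the variational formula \eqref{eq:variationalRenyi} for the R\'enyi entropy is the natural device for recognising the optimum as $\entropy_{\alpha,\theta}(\mathcal{M}(\vectorstate{\psi}))$. A secondary but routine point is to confirm that the subexponential ($\polyeq$) factors and the maximum over types indeed wash out in the $\tfrac1n\log$ limit, which follows from the uniform estimates recorded in \cref{sec:preliminaries}, and that the resulting limit exists so that the $\liminf$ in the first line becomes an equality.
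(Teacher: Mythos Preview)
Your proposal follows the paper's own proof essentially step for step: reduce the sandwiched $\tfrac12$-divergence to its classical form, decompose $P^{\otimes n}$ by joint type, plug in \cref{prop:measuredAasymptotics} for the second argument, and let the dominant type govern the limit. Your instinct to single out the bookkeeping of the exponentials as ``the main obstacle'' is exactly right, and in fact the bookkeeping does \emph{not} close. Carrying out your own recipe, the contribution from a joint type $Q$ is
\[
\lvert\typeclass{n}{Q}\rvert\cdot\sqrt{P^{\otimes n}(x)\,B_n(x)}
\polyeq 2^{n\entropy(Q)}\cdot 2^{-\frac{n}{2}\bigl(\entropy(Q)+\relativeentropy{Q}{P}\bigr)}\cdot 2^{\frac{1-\alpha}{2\alpha}n\sum_j\theta(j)\entropy(Q_j)},
\]
whose exponent, after applying the factor $\tfrac{2\alpha}{n(1-\alpha)}$, is
\[
\sum_{j}\theta(j)\entropy(Q_j)\;+\;\frac{\alpha}{1-\alpha}\bigl(\entropy(Q)-\relativeentropy{Q}{P}\bigr),
\]
not the claimed $\sum_{j}\theta(j)\entropy(Q_j)-\frac{\alpha}{1-\alpha}\relativeentropy{Q}{P}$. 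The extra $\frac{\alpha}{1-\alpha}\entropy(Q)$ does not vanish.

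The paper's displayed computation makes the same slip: the factor written as $\frac{2^{-\frac{1}{2}n\relativeentropy{Q}{P}}}{\Tr\typeclassprojection{n}{Q}}$ should carry only a square root in the denominator (one has taken $(\,\cdot\,)^{1/2}$ of the coefficient $2^{-n\relativeentropy{Q}{P}}/\Tr\typeclassprojection{n}{Q}$), so $\Tr\typeclassprojection{n}{Q}$ does not cancel and a residual $(\Tr\typeclassprojection{n}{Q})^{1/2}\polyeq 2^{\frac{n}{2}\entropy(Q)}$ survives. A concrete check: for the maximally entangled two-qubit state with $\theta=(1,0)$ and computational-basis measurement, the limit evaluates to $\tfrac{1}{1-\alpha}$, whereas $\entropy_{\alpha,\theta}(\mathcal{M}(\vectorstate{\psi}))=1$. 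Thus the data-processing step only delivers the weaker bound $E^{\alpha,\theta}(\psi)\le\max_Q\bigl[\sum_j\theta(j)\entropy(Q_j)+\frac{\alpha}{1-\alpha}(\entropy(Q)-\relativeentropy{Q}{P})\bigr]$; the stated equality with $\entropy_{\alpha,\theta}(\mathcal{M}(\vectorstate{\psi}))$ is not obtained from this argument, and your proposal inherits the same gap.
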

\begin{proof}
We only need to establish the existence of the limit and the equality, then the inequality follows from \eqref{eq:Egeneralbound}. Since the image of $\mathcal{M}^{\otimes n}$ consists of commuting operators, the divergence reduces to the classical R\'enyi entropy
\begin{equation}
\sandwiched[\frac{1}{2}]{\mathcal{M}(\vectorstate{\psi})^{\otimes n}}{\mathcal{M}^{\otimes n}(A^{\frac{1-\alpha}{\alpha}}_{\mathcal{H},n})}=-2\log\Tr\left((\mathcal{M}(\vectorstate{\psi})^{\otimes n})^{1/2}(\mathcal{M}^{\otimes n}(A^{\frac{1-\alpha}{\alpha}}_{\mathcal{H},n}))^{1/2}\right).
\end{equation}
 We decompose $\mathcal{M}(\vectorstate{\psi})^{\otimes n}$ into type classes and use \cite[Lemma 2.6]{csiszar2011information} to get
\begin{equation}
\begin{split}
\mathcal{M}(\vectorstate{\psi})^{\otimes n}
 & = \sum_{Q\in\distributions[n](\mathcal{X}_1\times\dots\times\mathcal{X}_k)}\frac{\Tr\mathcal{M}(\vectorstate{\psi})^{\otimes n}\typeclassprojection{n}{Q}}{\Tr\typeclassprojection{n}{Q}}\typeclassprojection{n}{Q}  \\
 & \polyeq \sum_{Q\in\distributions[n](\mathcal{X}_1\times\dots\times\mathcal{X}_k)}\frac{2^{-n\relativeentropy{Q}{\mathcal{M}(\vectorstate{\psi})}}}{\Tr\typeclassprojection{n}{Q}}\typeclassprojection{n}{Q}.
\end{split}
\end{equation}
Combining this estimate with \cref{prop:measuredAasymptotics},
\begin{equation}
\begin{split}
\Tr\left((\mathcal{M}(\vectorstate{\psi})^{\otimes n})^{1/2}(\mathcal{M}^{\otimes n}(A^{\frac{1-\alpha}{\alpha}}_{\mathcal{H},n}))^{1/2}\right)
 & \polyeq \sum_{Q\in\distributions[n](\mathcal{X}_1\times\dots\times\mathcal{X}_k)}2^{\frac{1-\alpha}{2\alpha}n\sum_{j=1}^k\theta(j)\entropy(Q_j)}\frac{2^{-\frac{1}{2}n\relativeentropy{Q}{\mathcal{M}(\vectorstate{\psi})}}}{\Tr\typeclassprojection{n}{Q}}\Tr\typeclassprojection{n}{Q}  \\
 & = \sum_{Q\in\distributions[n](\mathcal{X}_1\times\dots\times\mathcal{X}_k)}2^{\frac{1-\alpha}{2\alpha}n\sum_{j=1}^k\theta(j)\entropy(Q_j)}2^{-\frac{1}{2}n\relativeentropy{Q}{\mathcal{M}(\vectorstate{\psi})}}  \\
 & \polyeq \max_{Q\in\distributions(\supp\mathcal{M}(\vectorstate{\psi})}2^{\frac{1-\alpha}{2\alpha}n\sum_{j=1}^k\theta(j)\entropy(Q_j)-\frac{1}{2}n\relativeentropy{Q}{\mathcal{M}(\vectorstate{\psi})}},
\end{split}
\end{equation}
in the last step using that there are polynomially many type classes, that any distribution can be approximated by $n$-types as $n$ grows and that the exponent is continuous in $Q$. The claim follows by taking the logarithm, multiplying by $\frac{1}{n}\frac{2\alpha}{1-\alpha}$ and letting $n\to\infty$.
\end{proof}

Observe that the right hand side of \eqref{eq:measuredEbound} is a function of the classical distribution $\mathcal{M}(\vectorstate{\psi})$. As our next goal is to study this quantity, it will be convenient to define a short notation:
\begin{definition}\label{def:Halphatheta}
For a measure $P$ on a product of $k$ finite sets, $\alpha\in[0,1)$ and $\theta\in\distributions([k])$ we set
\begin{equation}
\entropy_{\alpha,\theta}(P)=\max_{Q\in\distributions(\supp P)}\left[\sum_{j=1}^k\theta(j)\entropy(Q_j)-\frac{\alpha}{1-\alpha}\relativeentropy{Q}{P}\right].
\end{equation}
\end{definition}
As $\alpha\to 0$, the relative entropy term vanishes, and we obtain the parameter
\begin{equation}
\lim_{\alpha\to 0}\entropy_{\alpha,\theta}(P)=\max_{Q\in\distributions(\supp P)}\sum_{j=1}^k\theta(j)\entropy(Q_j)
\end{equation}
first considered by Strassen \cite[eq. (2.2)]{strassen1991degeneration}, which depends only on the support. When $P$ is a probability distribution and $\alpha\to 1$, the negative relative entropy term dominates unless $Q=P$, therefore
\begin{equation}
\lim_{\alpha\to 1}\entropy_{\alpha,\theta}(P)=\sum_{j=1}^k\theta(j)\entropy(P_j).
\end{equation}

If $\theta(j)=1$ for some $j\in[k]$, then $\entropy_{\alpha,\theta}(P)=\entropy_\alpha(P_j)=\frac{1}{1-\alpha}\log\norm[1]{P^\alpha}$. To see this, we apply \eqref{eq:variationalRenyi} to the marginal distribution to obtain
\begin{equation}\label{eq:entrAlphaIdentity}
\entropy_\alpha(P_j)=\max_{\hat{Q}\in\supp P_j}\left[\entropy(\hat{Q})-\frac{\alpha}{1-\alpha}\relativeentropy{\hat{Q}}{P_j}\right]
\end{equation}
and use that for any $\hat{Q}\in\distributions(\supp P_j)$ we may form the probability distribution $Q(i_1,\dots,i_k)=P(i_1,\dots,i_k)\frac{\hat{Q}(i_j)}{P_j(i_j)}$ that satisfies
\begin{equation}
\relativeentropy{Q}{P}=\relativeentropy{\hat{Q}}{P_j}
\end{equation}
and has $j$th marginal equal to $\hat{Q}$.

Since the relative entropy is jointly convex and $\entropy(Q_j)$ is a concave function of $Q$, $\entropy_{\alpha,\theta}(P)$ is the maximum of a jointly concave function, therefore it is also concave. On the other hand, for fixed $\alpha$ and $P$, $\theta\mapsto\entropy_{\alpha,\theta}(P)$ is the pointwise maximum of a set of affine functions, therefore it is convex.

Using these properties we arrive at the following simple upper and lower bounds on $\entropy_{\alpha,\theta}(P)$:
\begin{proposition}
Let $\alpha\in[0,1]$, $\theta\in\distributions([k])$ and $P$ a measure on $\mathcal{X}_1\times\dots\times\mathcal{X}_k$. Then
\begin{align}
\entropy_{\alpha,\theta}(P) & \ge \sum_{j=1}^k\theta(j)\entropy(P_j/\norm[1]{P})+\frac{\alpha}{1-\alpha}\log\norm[1]{P}  \\
\intertext{and}
\entropy_{\alpha,\theta}(P)
 & \le \sum_{j=1}^k\theta(j)\entropy_\alpha(P_j/\norm[1]{P})+\frac{\alpha}{1-\alpha}\log\norm[1]{P}  \\
 & \le \sum_{j=1}^k\theta(j)\log\lvert\supp P_j\rvert+\frac{\alpha}{1-\alpha}\log\norm[1]{P}  \label{eq:Halphathetacrudeupperbound}
\end{align}
\begin{equation}
\end{equation}
\end{proposition}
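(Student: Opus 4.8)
The plan is to reduce to the case of a probability distribution and then treat the three inequalities separately: the lower bound comes from evaluating the defining maximum at one convenient feasible point, while the upper bounds follow from the data processing inequality for the relative entropy combined with the variational formula \eqref{eq:variationalRenyi}.

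First I would normalise. For any $Q\in\distributions(\supp P)$ we have $\relativeentropy{Q}{P}=\relativeentropy{Q}{P/\norm[1]{P}}-\log\norm[1]{P}$, because $Q$ is a probability distribution, and the support constraint is unchanged under scaling $P$. Substituting this into \cref{def:Halphatheta} gives
\[
\entropy_{\alpha,\theta}(P)=\entropy_{\alpha,\theta}(P/\norm[1]{P})+\frac{\alpha}{1-\alpha}\log\norm[1]{P},
\]
and since $(P/\norm[1]{P})_j=P_j/\norm[1]{P}$ the additive term $\frac{\alpha}{1-\alpha}\log\norm[1]{P}$ appearing in every line factors out. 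Hence it suffices to prove, for a genuine probability distribution $P$, the chain
\[
\sum_{j=1}^k\theta(j)\entropy(P_j)\le\entropy_{\alpha,\theta}(P)\le\sum_{j=1}^k\theta(j)\entropy_\alpha(P_j)\le\sum_{j=1}^k\theta(j)\log\lvert\supp P_j\rvert.
\]

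For the lower bound I would use the feasible point $Q=P$ in \cref{def:Halphatheta}: then $Q_j=P_j$ and $\relativeentropy{Q}{P}=0$, so the bracketed expression equals $\sum_j\theta(j)\entropy(P_j)$, which is therefore a lower bound for the maximum. For the upper bounds, fix an arbitrary feasible $Q$. Marginalisation onto the $j$-th factor is a stochastic map, so the data processing inequality yields $\relativeentropy{Q}{P}\ge\relativeentropy{Q_j}{P_j}$; multiplying by $\theta(j)\ge0$ and summing over $j$ (using $\sum_j\theta(j)=1$) gives $\relativeentropy{Q}{P}\ge\sum_j\theta(j)\relativeentropy{Q_j}{P_j}$. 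Since $\frac{\alpha}{1-\alpha}\ge0$, the objective at $Q$ is bounded above by $\sum_j\theta(j)\bigl[\entropy(Q_j)-\frac{\alpha}{1-\alpha}\relativeentropy{Q_j}{P_j}\bigr]$, and each summand is at most $\entropy_\alpha(P_j)$ by the variational formula \eqref{eq:variationalRenyi} applied to $P_j$ (the marginal $Q_j$ is admissible since $\supp Q_j\subseteq\supp P_j$). Taking the maximum over $Q$ establishes $\entropy_{\alpha,\theta}(P)\le\sum_j\theta(j)\entropy_\alpha(P_j)$, and the crude bound \eqref{eq:Halphathetacrudeupperbound} then follows from $\entropy_\alpha(P_j)\le\log\lvert\supp P_j\rvert$, valid because the uniform distribution maximises the R\'enyi entropy.

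The only point requiring care is the decoupling step: the marginals $Q_1,\dots,Q_k$ of a single $Q$ cannot be chosen independently, so replacing the joint maximisation by $k$ independent single-marginal maximisations over-estimates the objective; this is harmless for an upper bound and is precisely what lets the problem split into single-letter R\'enyi entropies. The boundary value $\alpha=1$ is excluded by \cref{def:Halphatheta} but is covered by the already-established limit $\lim_{\alpha\to1}\entropy_{\alpha,\theta}(P)=\sum_j\theta(j)\entropy(P_j)$, for which (with $\norm[1]{P}=1$) the lower bound, the value, and the first upper bound all coincide.
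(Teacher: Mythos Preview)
Your proof is correct. The lower bound and the crude upper bound are handled exactly as in the paper (plug in $Q=P/\norm[1]{P}$; use $\entropy_\alpha\le\log\lvert\supp\rvert$). For the first upper bound you take a slightly different route: you apply data processing for the relative entropy under each marginalisation map to get $\relativeentropy{Q}{P}\ge\sum_j\theta(j)\relativeentropy{Q_j}{P_j}$, then bound each summand by the variational formula~\eqref{eq:variationalRenyi}. The paper instead first establishes (in the paragraph preceding the proposition) that $\theta\mapsto\entropy_{\alpha,\theta}(P)$ is convex and that its value at the vertex $\theta=\delta_j$ is exactly $\entropy_\alpha(P_j)$, and then simply invokes convexity to interpolate. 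The two arguments are close cousins---the paper's vertex identity $\entropy_{\alpha,\delta_j}(P)=\entropy_\alpha(P_j)$ implicitly needs data processing for the $\le$ direction---but the packaging differs. Your approach is more self-contained, not relying on the separately stated convexity-in-$\theta$; the paper's approach is more structural and foreshadows the later use of that convexity (e.g.\ in the minimax computation \eqref{eq:minimaxFindMintheta}).
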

\begin{proof}
To prove the lower bound, we evaluate the objective function at the feasible point $Q=P/\norm[1]{P}$:
\begin{equation}
\sum_{j=1}^k\theta(j)\entropy(P_j/\norm[1]{P})-\frac{\alpha}{1-\alpha}\relativeentropy{P/\norm[1]{P}}{P}=\sum_{j=1}^k\theta(j)\entropy(P_j/\norm[1]{P})+\frac{\alpha}{1-\alpha}\log\norm[1]{P}.
\end{equation}

The first upper bound follows from convexity in the parameter $\theta$, writing $\theta$ as a convex combination of the $k$ Dirac measures and using that at these points the functional reduces to the marginal R\'enyi entropies, which satisfy the scaling law
\begin{equation}
\entropy_\alpha(P_j)=\entropy_\alpha(P_j/\norm[1]{P})+\frac{\alpha}{1-\alpha}\log\norm[1]{P}.
\end{equation}
The second upper bound is true because $\entropy_\alpha(\hat{P})\le\log\lvert\supp\hat{P}\rvert$ for any probability distribution $\hat{P}$.
\end{proof}

We will make use of the following refined upper bound that depends on a block-decomposition of the measure, and is given in terms of the sizes and measures of the blocks. If we choose a single block, the bound reduces to \eqref{eq:Halphathetacrudeupperbound}.
\begin{proposition}\label{prop:Halphathetarefinedupperbound}
Let $P$ be a measure on $\mathcal{X}_1\times\dots\times\mathcal{X}_k$. For each $j\in[k]$ let $(I_{j,i})_{i=1}^{m_j}$ be a partition of the set $\mathcal{X}_j$. Then
\begin{equation}
\entropy_{\alpha,\theta}(P)\le\sum_{j=1}^k\theta(j)\log m_j+\frac{\alpha}{1-\alpha}\sum_{j=1}^k\log m_j+\max_{i_1,\dots,i_k}\left[\sum_{j=1}^k\theta(j)\log\lvert I_{j,{i_j}}\rvert+\frac{\alpha}{1-\alpha}\log P(I_{1,i_1}\times\dots\times I_{k,i_k})\right].
\end{equation}
\end{proposition}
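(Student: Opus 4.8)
The plan is to fix an arbitrary feasible $Q\in\distributions(\supp P)$ in the definition of $\entropy_{\alpha,\theta}(P)$, bound the value of the objective at $Q$ by the claimed right-hand side, and then take the maximum over $Q$. All of the combinatorial data is packaged by the coarse-graining map $\pi=\pi_1\times\dots\times\pi_k$, where $\pi_j\colon\mathcal{X}_j\to[m_j]$ sends $x$ to the index $i$ of the block $I_{j,i}$ containing it. Writing $B_{(i_1,\dots,i_k)}=I_{1,i_1}\times\dots\times I_{k,i_k}$ for the induced blocks of the product set, I let $\hat{Q}$ and $\hat{P}$ denote the pushforwards $\pi_*Q$ and $\pi_*P$; these are measures on $[m_1]\times\dots\times[m_k]$ with $\hat{P}(i_1,\dots,i_k)=P(B_{(i_1,\dots,i_k)})$, and their $j$th marginals $\hat{Q}_j$, $\hat{P}_j$ are exactly the coarse-grainings of $Q_j$, $P_j$.

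First I would treat the entropy terms. Applying the Shannon chain rule to each marginal $Q_j$ along the partition $(I_{j,i})_i$ gives $\entropy(Q_j)=\entropy(\hat{Q}_j)+\sum_i\hat{Q}_j(i)\entropy(Q_j|_{I_{j,i}}/\hat{Q}_j(i))$. Bounding the coarse part by $\entropy(\hat{Q}_j)\le\log m_j$ and each conditional distribution, which is supported on $I_{j,i}$, by $\log|I_{j,i}|$, yields $\entropy(Q_j)\le\log m_j+\sum_i\hat{Q}_j(i)\log|I_{j,i}|$. For the divergence I would use that the relative entropy is monotone under the deterministic coarse-graining $\pi$ (the multi-block form of the chain rule for relative entropy, where the nonnegative conditional terms are dropped), giving $\relativeentropy{Q}{P}\ge\relativeentropy{\hat{Q}}{\hat{P}}$. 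Since $\frac{\alpha}{1-\alpha}\ge0$, multiplying by $-\frac{\alpha}{1-\alpha}$ reverses this, so the objective at $Q$ is at most $\sum_j\theta(j)\log m_j+\sum_j\theta(j)\sum_i\hat{Q}_j(i)\log|I_{j,i}|-\frac{\alpha}{1-\alpha}\relativeentropy{\hat{Q}}{\hat{P}}$.

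The crux is to collect the surviving terms into a single expectation over $\hat{Q}$ on the product index set. Using $\hat{Q}_j(i)=\sum_{\vec{i}:i_j=i}\hat{Q}(\vec{i})$, the weighted block-size term becomes $\sum_{\vec{i}}\hat{Q}(\vec{i})\sum_j\theta(j)\log|I_{j,i_j}|$, and writing $\relativeentropy{\hat{Q}}{\hat{P}}=-\entropy(\hat{Q})-\sum_{\vec{i}}\hat{Q}(\vec{i})\log\hat{P}(\vec{i})$ lets me split off the term $\frac{\alpha}{1-\alpha}\entropy(\hat{Q})$. Bounding $\entropy(\hat{Q})\le\sum_j\log m_j$ produces the second term of the claim, while the remainder is precisely the $\hat{Q}$-expectation of $\sum_j\theta(j)\log|I_{j,i_j}|+\frac{\alpha}{1-\alpha}\log P(B_{\vec{i}})$, which is bounded above by its maximum over $\vec{i}$, giving the third term. (Feasibility $\supp Q\subseteq\supp P$ guarantees $\hat{P}(\vec{i})>0$ whenever $\hat{Q}(\vec{i})>0$, so no term is $-\infty$.) Taking the maximum over $Q$ then yields the stated inequality.

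I expect no serious obstacle, since every inequality used is a standard entropy or divergence bound; the only delicate point is the bookkeeping in the third paragraph. The decisive observation is that the within-block entropy contributions (each $\le\log|I_{j,i_j}|$) and the block-probability contribution $\frac{\alpha}{1-\alpha}\log P(B_{\vec{i}})$ coming from the divergence must be grouped into \emph{one} expectation over $\hat{Q}$ and bounded by its maximum, whereas the between-block entropy $\entropy(\hat{Q})$ is handled separately by the crude bound $\sum_j\log m_j$. Bounding the divergence pieces independently would not reproduce the clean three-term form; it is this regrouping that makes the three summands of the statement appear.
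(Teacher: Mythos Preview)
Your argument is correct. It differs from the paper's in two respects. First, to pass from $Q$ to the coarse-grained $\hat Q$, the paper symmetrises: it averages $P$ over all local permutations that setwise fix the blocks (concavity of $\entropy_{\alpha,\theta}$ gives $\entropy_{\alpha,\theta}(P)\le\entropy_{\alpha,\theta}(U)$), and then observes that the maximiser $Q$ for $U$ is itself block-uniform, so the chain rule for $\entropy(Q_j)$ and the reduction $\relativeentropy{Q}{U}=\relativeentropy{\hat Q}{\hat P}$ become \emph{equalities}. You instead work with an arbitrary $Q$ and obtain the same intermediate expression by two inequalities: bounding each conditional entropy in the chain rule by $\log|I_{j,i}|$, and using data processing $\relativeentropy{Q}{P}\ge\relativeentropy{\hat Q}{\hat P}$. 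Second, from that common intermediate the paper maximises over $\hat Q$ explicitly via the Gibbs variational formula to get a log-sum-exp, and then bounds the sum by the number of terms times the largest term; you instead split $\relativeentropy{\hat Q}{\hat P}=-\entropy(\hat Q)-\mathbb{E}_{\hat Q}\log\hat P$, bound $\entropy(\hat Q)\le\sum_j\log m_j$, and bound the remaining expectation by its maximum. Your route is more elementary (no symmetrisation, no variational solution needed); the paper's route yields the sharper log-sum-exp intermediate before the final crude step, which could in principle be useful elsewhere but is not exploited here.
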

\begin{proof}
Let $U$ denote the average of the measures $(\sigma_1\times\dots\times\sigma_k)(P)$ as $\sigma_j$ ranges over all permutations that setwise fix the subsets $I_{j,i}$. The product permutations do not change the value of $\entropy_{\alpha,\theta}$, therefore by concavity we have $\entropy_{\alpha,\theta}(P)\le\entropy_{\alpha,\theta}(U)$. In the same way,
\begin{equation}
Q\mapsto\sum_{j=1}^k\theta(j)\entropy(Q_j)-\frac{\alpha}{1-\alpha}\relativeentropy{Q}{U}
\end{equation}
is concave and invariant under product permutations fixing the subsets, therefore the maximum is attained at a distribution $Q$ that is uniform within each block $I_{1,i_1}\times\dots\times I_{k,i_k}$. Let $\hat{P}(i_1,\dots,i_k)=P(I_{1,i_1}\times\dots\times I_{k,i_k})$ and $\hat{Q}(i_1,\dots,i_k)=Q(I_{1,i_1}\times\dots\times I_{k,i_k})$ be the measures induced on the set of blocks. Then
\begin{equation}
\begin{split}
\sum_{j=1}^k\theta(j)\entropy(Q_j)-\frac{\alpha}{1-\alpha}\relativeentropy{Q}{U}
 & = \sum_{j=1}^k\theta(j)\left(\entropy(\hat{Q}_j)+\sum_{i=1}^{m_j}\hat{Q}_j(i)\log\lvert I_{j,i}\rvert\right)-\frac{\alpha}{1-\alpha}\relativeentropy{\hat{Q}}{\hat{P}}  \\
 & \le \sum_{j=1}^k\theta(j)\log m_j+\sum_{j=1}^k\theta(j)\sum_{i=1}^{m_j}\hat{Q}_j(i)\log\lvert I_{j,i}\rvert-\frac{\alpha}{1-\alpha}\relativeentropy{\hat{Q}}{\hat{P}}  \\
 & = \sum_{j=1}^k\theta(j)\log m_j+\sum_{i_1,\dots,i_k}\hat{Q}(i_1,\dots,i_k)\sum_{j=1}^k\theta(j)\log\lvert I_{j,i_j}\rvert-\frac{\alpha}{1-\alpha}\relativeentropy{\hat{Q}}{\hat{P}}.
\end{split}
\end{equation}
The maximum of the upper bound over $\hat{Q}$ is
\begin{equation}
\sum_{j=1}^k\theta(j)\log m_j+\frac{\alpha}{1-\alpha}\log\sum_{i_1,\dots,i_k}2^{\frac{1-\alpha}{\alpha}\sum_{j=1}^k\theta(j)\log\lvert I_{j,{i_j}}\rvert}\hat{P}(i_1,\dots,i_k),
\end{equation}
which is at most
\begin{equation}
\sum_{j=1}^k\theta(j)\log m_j+\frac{\alpha}{1-\alpha}\sum_{j=1}^k\log m_j+\max_{i_1,\dots,i_k}\left[\sum_{j=1}^k\theta(j)\log\lvert I_{j,{i_j}}\rvert+\frac{\alpha}{1-\alpha}\log\hat{P}(i_1,\dots,i_k)\right].
\end{equation}
\end{proof}

\begin{proposition}\label{prop:Halphathetaproduct}
For every $\alpha\in[0,1]$ and $\theta\in\distributions([k])$ the parameter $\entropy_{\alpha,\theta}$ is additive in the sense that if $P^{(i)}$ is a measure on $\mathcal{X}^{(i)}_1\times\dots\times\mathcal{X}^{(i)}_k)$ (for $i=1,2$) and their product is regarded as a measure on $(\mathcal{X}^{(1)}_1\times\mathcal{X}^{(2)}_1)\times\dots\times(\mathcal{X}^{(1)}_k\times\mathcal{X}^{(2)}_k)$ then
\begin{equation}
\entropy_{\alpha,\theta}(P^{(1)}\otimes P^{(2)})=\entropy_{\alpha,\theta}(P^{(1)})+\entropy_{\alpha,\theta}(P^{(2)}).
\end{equation}
\end{proposition}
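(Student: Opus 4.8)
The plan is to prove the two inequalities $\entropy_{\alpha,\theta}(P^{(1)}\otimes P^{(2)})\ge\entropy_{\alpha,\theta}(P^{(1)})+\entropy_{\alpha,\theta}(P^{(2)})$ and $\entropy_{\alpha,\theta}(P^{(1)}\otimes P^{(2)})\le\entropy_{\alpha,\theta}(P^{(1)})+\entropy_{\alpha,\theta}(P^{(2)})$ separately. I would carry out the argument for $\alpha\in[0,1)$, where the coefficient $\frac{\alpha}{1-\alpha}$ is finite and nonnegative, and recover the endpoint $\alpha=1$ afterwards by continuity, using the limiting formula $\entropy_{1,\theta}(P)=\sum_j\theta(j)\entropy(P_j)$, on which additivity is immediate from $(P^{(1)}\otimes P^{(2)})_j=P^{(1)}_j\otimes P^{(2)}_j$. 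Throughout I would write $\mathcal{X}^{(i)}=\mathcal{X}^{(i)}_1\times\dots\times\mathcal{X}^{(i)}_k$ and carefully distinguish the \emph{subsystem} marginal $Q_j$ (summing out all but the $j$-th tensor factor) from the \emph{block} marginal $Q^{(i)}$ (summing out one of the two copies). Since each feasible set is a compact simplex and the objective is continuous on it, optimizers exist and I may work with them directly.

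Superadditivity is the easy half. I would take optimizers $Q^{(1)}\in\distributions(\supp P^{(1)})$ and $Q^{(2)}\in\distributions(\supp P^{(2)})$ and test the product $Q^{(1)}\otimes Q^{(2)}$, whose support lies in $\supp P^{(1)}\times\supp P^{(2)}=\supp(P^{(1)}\otimes P^{(2)})$, so it is feasible. Using $(Q^{(1)}\otimes Q^{(2)})_j=Q^{(1)}_j\otimes Q^{(2)}_j$ with additivity of the Shannon entropy on products gives $\entropy((Q^{(1)}\otimes Q^{(2)})_j)=\entropy(Q^{(1)}_j)+\entropy(Q^{(2)}_j)$, while the product identity \eqref{eq:KLproductidentity}, whose mutual information term vanishes for a product, gives $\relativeentropy{Q^{(1)}\otimes Q^{(2)}}{P^{(1)}\otimes P^{(2)}}=\relativeentropy{Q^{(1)}}{P^{(1)}}+\relativeentropy{Q^{(2)}}{P^{(2)}}$. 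Hence the objective at $Q^{(1)}\otimes Q^{(2)}$ splits exactly as the sum of the two objectives, yielding $\ge$.

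The subadditivity direction is the main content. I would start from an optimizer $Q\in\distributions(\supp(P^{(1)}\otimes P^{(2)}))$ and pass to its block marginals $Q^{(1)}$ on $\mathcal{X}^{(1)}$ and $Q^{(2)}$ on $\mathcal{X}^{(2)}$; both are feasible for the respective factors because $\supp(P^{(1)}\otimes P^{(2)})=\supp P^{(1)}\times\supp P^{(2)}$ forces $\supp Q^{(i)}\subseteq\supp P^{(i)}$. Two estimates then do the work, and both point in the favorable direction for an upper bound. For the entropy term, subadditivity of the Shannon entropy on each subsystem marginal $Q_j$ (a distribution on $\mathcal{X}^{(1)}_j\times\mathcal{X}^{(2)}_j$) gives $\entropy(Q_j)\le\entropy((Q^{(1)})_j)+\entropy((Q^{(2)})_j)$, where $(Q^{(i)})_j$ is the $j$-th subsystem marginal of the block marginal $Q^{(i)}$ and coincides with the $i$-th block marginal of $Q_j$. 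For the divergence term, \eqref{eq:KLproductidentity} applied with the two blocks as the bipartition yields $\relativeentropy{Q}{P^{(1)}\otimes P^{(2)}}=\relativeentropy{Q^{(1)}}{P^{(1)}}+\relativeentropy{Q^{(2)}}{P^{(2)}}+\mutualinformation(1:2)_Q$ with $\mutualinformation(1:2)_Q\ge0$.

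Combining these and dropping the nonpositive term $-\frac{\alpha}{1-\alpha}\mutualinformation(1:2)_Q$ (legitimate since $\frac{\alpha}{1-\alpha}\ge0$), the objective at $Q$ is bounded above by $\bigl[\sum_j\theta(j)\entropy((Q^{(1)})_j)-\frac{\alpha}{1-\alpha}\relativeentropy{Q^{(1)}}{P^{(1)}}\bigr]+\bigl[\sum_j\theta(j)\entropy((Q^{(2)})_j)-\frac{\alpha}{1-\alpha}\relativeentropy{Q^{(2)}}{P^{(2)}}\bigr]$. Each bracket is a feasible objective value for the corresponding $\entropy_{\alpha,\theta}(P^{(i)})$, so the right-hand side is at most $\entropy_{\alpha,\theta}(P^{(1)})+\entropy_{\alpha,\theta}(P^{(2)})$, proving $\le$ and hence equality. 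I expect the only real obstacle to be bookkeeping: keeping the two notions of marginal straight, and verifying that both discarded slack terms — the weighted per-subsystem gaps $\sum_j\theta(j)\mutualinformation(1:2)_{Q_j}$ coming from entropy subadditivity, and the global term $\frac{\alpha}{1-\alpha}\mutualinformation(1:2)_Q$ coming from \eqref{eq:KLproductidentity} — are genuinely nonnegative and carry the sign that makes the upper bound valid; once the marginals are aligned, the rest is routine.
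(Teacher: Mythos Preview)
Your proposal is correct and follows essentially the same approach as the paper: both use \eqref{eq:KLproductidentity} to split the divergence term (picking up a nonnegative mutual information that can be dropped) and subadditivity of the Shannon entropy on each subsystem marginal $Q_j$ to handle the entropy term, with the product of optimisers witnessing the reverse inequality. The only cosmetic difference is that the paper packages both directions into a single chain of inequalities with an equality condition, whereas you argue $\ge$ and $\le$ separately and treat $\alpha=1$ by a limit.
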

\begin{proof}
Let $Q\in\distributions((\mathcal{X}^{(1)}_1\times\mathcal{X}^{(2)}_1)\times\dots\times(\mathcal{X}^{(1)}_k\times\mathcal{X}^{(2)}_k))$ and let $Q^{(1)}$ and $Q^{(2)}$ denote its marginals on $\mathcal{X}^{(1)}_1\times\dots\times\mathcal{X}^{(1)}_k$ and $\mathcal{X}^{(2)}_1\times\dots\times\mathcal{X}^{(2)}_k$, respectively. By the subadditivity of the Shannon entropy and \eqref{eq:KLproductidentity} we have the inequality
\begin{multline}
\sum_{j=1}^k\theta(j)\entropy(Q_j)-\frac{\alpha}{1-\alpha}\relativeentropy{Q}{P^{(1)}\otimes P^{(2)}}  \\
\begin{aligned}
 & = \sum_{j=1}^k\theta(j)\entropy(Q_j)-\frac{\alpha}{1-\alpha}\left(\relativeentropy{Q^{(1)}}{P^{(1)}}+\relativeentropy{Q^{(2)}}{P^{(2)}}+\mutualinformation((1):(2))_Q\right)  \\
 & \le \sum_{j=1}^k\theta(j)\left(\entropy(Q^{(1)}_j)+\entropy(Q^{(2)}_j)\right)-\frac{\alpha}{1-\alpha}\left(\relativeentropy{Q^{(1)}}{P^{(1)}}+\relativeentropy{Q^{(2)}}{P^{(2)}}\right)  \\
 & \le \entropy_{\alpha,\theta}(P^{(1)})+\entropy_{\alpha,\theta}(P^{(2)}),
\end{aligned}
\end{multline}
with equality iff $Q=Q^{(1)}\otimes Q^{(2)}$ and $Q^{(1)}$ and $Q^{(2)}$ are the maximizing distributions according to \cref{def:Halphatheta}. The maximum of the left hand side over $Q$ (which by definition is $\entropy_{\alpha,\theta}(P^{(1)}\otimes P^{(2)})$) is therefore equal to $\entropy_{\alpha,\theta}(P^{(1)})+\entropy_{\alpha,\theta}(P^{(2)})$.
\end{proof}

\begin{proposition}\label{prop:expHalphathetasum}
Under the conditions of \cref{prop:Halphathetaproduct} the equality
\begin{equation}
2^{(1-\alpha)\entropy_{\alpha,\theta}(P^{(1)}\oplus P^{(2)})}=2^{(1-\alpha)\entropy_{\alpha,\theta}(P^{(1)})}+2^{(1-\alpha)\entropy_{\alpha,\theta}(P^{(2)})}
\end{equation}
holds.
\end{proposition}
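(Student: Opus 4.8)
The plan is to reduce the maximization defining $\entropy_{\alpha,\theta}(P^{(1)}\oplus P^{(2)})$ to a one-parameter optimization over the relative weight placed on the two summands, and then to recognize the resulting expression as a log-sum-exp identity. The starting observation is that the support of $P^{(1)}\oplus P^{(2)}$ is contained in the disjoint union of the two ``corner'' blocks $\mathcal{X}^{(1)}_1\times\dots\times\mathcal{X}^{(1)}_k$ and $\mathcal{X}^{(2)}_1\times\dots\times\mathcal{X}^{(2)}_k$, since every index with components in both copies receives zero mass. Consequently, any competitor $Q\in\distributions(\supp(P^{(1)}\oplus P^{(2)}))$ (i.e.\ one with finite relative entropy) is itself supported on these two corners and hence has the form $Q=qQ^{(1)}\oplus(1-q)Q^{(2)}$ for some $q\in[0,1]$ and $Q^{(i)}\in\distributions(\supp P^{(i)})$.

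First I would substitute this decomposition into the objective of \cref{def:Halphatheta}. For the marginals one has $Q_j=qQ^{(1)}_j\oplus(1-q)Q^{(2)}_j$, so the grouping rule for the Shannon entropy gives $\entropy(Q_j)=h(q)+q\entropy(Q^{(1)}_j)+(1-q)\entropy(Q^{(2)}_j)$, while the chain rule for the relative entropy recalled in \cref{sec:preliminaries} gives $\relativeentropy{Q}{P^{(1)}\oplus P^{(2)}}=q\relativeentropy{Q^{(1)}}{P^{(1)}}+(1-q)\relativeentropy{Q^{(2)}}{P^{(2)}}-h(q)$. Combining these, and using $\sum_{j=1}^k\theta(j)=1$ together with $1+\frac{\alpha}{1-\alpha}=\frac{1}{1-\alpha}$, the objective collapses to
\begin{equation*}
\frac{h(q)}{1-\alpha}+q\left[\sum_{j}\theta(j)\entropy(Q^{(1)}_j)-\frac{\alpha}{1-\alpha}\relativeentropy{Q^{(1)}}{P^{(1)}}\right]+(1-q)\left[\sum_{j}\theta(j)\entropy(Q^{(2)}_j)-\frac{\alpha}{1-\alpha}\relativeentropy{Q^{(2)}}{P^{(2)}}\right].
\end{equation*}
For fixed $q$ the two bracketed terms decouple, so maximizing over $Q^{(1)}$ and $Q^{(2)}$ independently yields $\entropy_{\alpha,\theta}(P^{(1)})$ and $\entropy_{\alpha,\theta}(P^{(2)})$. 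Writing $E_i=\entropy_{\alpha,\theta}(P^{(i)})$, this reduces the problem to $\entropy_{\alpha,\theta}(P^{(1)}\oplus P^{(2)})=\max_{q\in[0,1]}\left[\frac{h(q)}{1-\alpha}+qE_1+(1-q)E_2\right]$.

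The last step is a scalar calculus problem. Since $h$ is strictly concave with $h'(q)=\log\frac{1-q}{q}$ tending to $\pm\infty$ at the endpoints, the maximizer is the interior critical point $q^*$ solving $\frac{1}{1-\alpha}\log\frac{1-q}{q}=E_2-E_1$, i.e.\ $q^*=\frac{2^{(1-\alpha)E_1}}{2^{(1-\alpha)E_1}+2^{(1-\alpha)E_2}}$. Substituting $q^*$ back and simplifying, the entropy term $-q^*\log q^*-(1-q^*)\log(1-q^*)$ and the linear term $(1-\alpha)(q^*E_1+(1-q^*)E_2)$ combine to leave exactly $\log\big(2^{(1-\alpha)E_1}+2^{(1-\alpha)E_2}\big)$; that is, $(1-\alpha)\entropy_{\alpha,\theta}(P^{(1)}\oplus P^{(2)})=\log\big(2^{(1-\alpha)E_1}+2^{(1-\alpha)E_2}\big)$, which is the claim after exponentiating. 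I do not expect a genuine obstacle: the only points requiring care are verifying that $Q$ is forced onto the two corners (so the decomposition is exhaustive and no cross terms survive) and dealing with the degenerate case where one summand is the zero measure; the final identity is the standard Legendre/log-sum-exp relation and may either be checked by the direct substitution above or quoted.
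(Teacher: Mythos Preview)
Your proof is correct and follows essentially the same route as the paper: decompose $Q$ as $qQ^{(1)}\oplus(1-q)Q^{(2)}$, apply the entropy grouping rule and the relative-entropy chain rule, optimize over $Q^{(1)},Q^{(2)}$ to obtain $\entropy_{\alpha,\theta}(P^{(i)})$, and then optimize over $q$. The only cosmetic difference is that the paper quotes the identity $\max_{q\in[0,1]}\bigl(qx+(1-q)y+h(q)\bigr)=\log(2^x+2^y)$ from \cite{strassen1991degeneration} for the last step, whereas you verify it by computing the critical point explicitly.
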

\begin{proof}
Any $Q\in\distributions(\supp P^{(1)}\sqcup\supp P^{(2)})$ can be written as $Q=qQ^{(1)}\oplus(1-q)Q^{(2)}$ with $q\in[0,1]$ and $Q^{(i)}\in\distributions(\supp P^{(i)})$. Therefore
\begin{equation}
\begin{split}
\entropy_{\alpha,\theta}(P^{(1)}\oplus P^{(2)})
 & = \max_{q\in[0,1]}\max_{Q^{(1)},Q^{(2)}}\Bigg[\sum_{j=1}^k\theta(j)\entropy(qQ^{(1)}_j\oplus (1-q)Q^{(2)}_j)  \\
 & \qquad -\frac{\alpha}{1-\alpha}\relativeentropy{qQ^{(1)}\oplus(1-q)Q^{(2)}}{P^{(1)}\oplus P^{(2)}}\Bigg]  \\
 & = \max_{q\in[0,1]}\max_{Q^{(1)},Q^{(2)}}\Bigg[\sum_{j=1}^k\theta(j)\left(q\entropy(Q^{(1)}_j)+(1-q)\entropy(Q^{(2)}_j)+h(q)\right)  \\
 & \qquad -\frac{\alpha}{1-\alpha}\left(q\relativeentropy{Q^{(1)}}{P^{(1)}}+(1-q)\relativeentropy{Q^{(2)}}{P^{(2)}}-h(q)\right)\Bigg]  \\
 & = \max_{q\in[0,1]}\left[q\entropy_{\alpha,\theta}(P^{(1)})+(1-q)\entropy_{\alpha,\theta}(P^{(2)})+\frac{1}{1-\alpha}h(q)\right]  \\
 & = \frac{1}{1-\alpha}\log\left(2^{(1-\alpha)\entropy_{\alpha,\theta}(P^{(1)})}+2^{(1-\alpha)\entropy_{\alpha,\theta}(P^{(2)})}\right),
\end{split}
\end{equation}
in the last step using $\max_{q\in[0,1]}(qx+(1-q)y+h(q)=\log(2^x+2^y)$ (see \cite[eq. (2.13)]{strassen1991degeneration}).
\end{proof}

\begin{lemma}\label{lem:differentiateMaximum}
Let $X\subseteq\reals^n$ be compact and convex, $f,g:X\to\reals$ continuous and concave, $g$ strictly concave. The set $M:=f^{-1}(\max_{x\in X}f(x))\subseteq X$ is compact and convex, therefore there is a unique maximizer $\bar{x}\in M$ of $g$. For each $t\in\positivereals$, let $x^*(t)\in X$ be the unique maximizer of $f+tg$ and $v(t)=f(x^*(t))+tg(x^*(t))$ the maximum as a function of $t$. Then $\lim_{t\to 0+}x^*(t)=\bar{x}$, $v$ extends continuously to $\nonnegativereals$ by setting $v(0)=f(\bar{x})$, and $\left.\frac{\ed}{\ed t}v(t)\right|_{t=0}=g(\bar{x})$.
\end{lemma}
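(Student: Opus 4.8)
The plan is to treat $v(t)=\max_{x\in X}\bigl(f(x)+tg(x)\bigr)$ as a perturbed value function and to extract everything from two elementary optimality inequalities. First I would record well-definedness: for $t>0$ the function $f+tg$ is strictly concave (a concave function plus a strictly concave one), so $x^*(t)$ is genuinely unique, and likewise $g$ restricted to the compact convex set $M$ is strictly concave, so $\bar x$ is unique. Writing $f^*=\max_{x\in X}f(x)$, the optimality of $x^*(t)$ tested against the feasible point $\bar x\in M$ gives
\begin{equation*}
f^*+t\,g(x^*(t))\ge f(x^*(t))+t\,g(x^*(t))=v(t)\ge f(\bar x)+t\,g(\bar x)=f^*+t\,g(\bar x),
\end{equation*}
where I use $f(x^*(t))\le f^*$ on the left and $f(\bar x)=f^*$ on the right. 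The outer inequality yields $g(x^*(t))\ge g(\bar x)$, and rearranging the middle one gives the crucial two-sided bound
\begin{equation*}
0\le f^*-f(x^*(t))\le t\bigl(g(x^*(t))-g(\bar x)\bigr).
\end{equation*}

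Next I would prove $\lim_{t\to 0+}x^*(t)=\bar x$ by a compactness argument. Since $g$ is continuous on the compact set $X$ it is bounded, so the right-hand side above tends to $0$ and hence $f(x^*(t))\to f^*$. Given any sequence $t_m\to 0+$, compactness of $X$ lets me pass to a subsequence with $x^*(t_{m_\ell})\to x_0$; continuity of $f$ forces $f(x_0)=f^*$, i.e.\ $x_0\in M$, while continuity of $g$ together with $g(x^*(t))\ge g(\bar x)$ gives $g(x_0)\ge g(\bar x)$. Because $\bar x$ is the \emph{unique} maximizer of $g$ over $M$, this forces $x_0=\bar x$. As every subsequential limit equals $\bar x$, the whole family converges, $x^*(t)\to\bar x$.

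Finally, continuity and the derivative follow by squeezing with the same bound. Continuity at $0$ is immediate: $v(t)=f(x^*(t))+t\,g(x^*(t))\to f(\bar x)+0=f^*=v(0)$, using boundedness of $g$. For the derivative I would write
\begin{equation*}
\frac{v(t)-v(0)}{t}=\frac{f(x^*(t))-f^*}{t}+g(x^*(t)),
\end{equation*}
and observe that dividing the two-sided bound by $t>0$ gives $-\bigl(g(x^*(t))-g(\bar x)\bigr)\le \frac{f(x^*(t))-f^*}{t}\le 0$; since $g(x^*(t))\to g(\bar x)$ the first summand vanishes in the limit, while the second converges to $g(\bar x)$, so the (right) derivative equals $g(\bar x)$. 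The main obstacle is the convergence step: it is the only place where all of the hypotheses are needed, namely compactness of $X$, continuity of both functions, and---decisively---strict concavity of $g$ to pin down a unique $\bar x$ inside the flat set $M$ of $f$-maximizers. That is precisely what makes the squeeze in the last step legitimate; once $x^*(t)\to\bar x$ is established, continuity and differentiability are routine.
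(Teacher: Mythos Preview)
Your argument is correct and follows essentially the same route as the paper: the sandwich inequality $f^*+t\,g(\bar x)\le v(t)\le f^*+t\,g(x^*(t))$ together with compactness and the uniqueness of $\bar x$ yields $x^*(t)\to\bar x$, exactly as in the paper's proof. The one noteworthy difference is in the derivative step: the paper observes that $v$ is convex (pointwise supremum of affine functions) and bounds the difference quotient via a secant inequality $v(t)\le v(0)+\tfrac{t}{T}(v(T)-v(0))$, whereas you decompose $\frac{v(t)-v(0)}{t}=\frac{f(x^*(t))-f^*}{t}+g(x^*(t))$ and squeeze the first summand directly using the already-established two-sided bound. Your version is slightly more self-contained since it avoids invoking convexity of $v$, but both are equally short and rest on the same optimality inequalities.
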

\begin{proof}
Let $(t_n)_{n\in\naturals}$ be a sequence in $\positivereals$ converging to $0$. By compactness and passing to a subsequence if necessary, we may assume that $x^*(t_n)$ has a limit $x\in X$. Since $\bar{x}$ maximizes $f$ and $x^*(t_n)$ maximizes $f+t_ng$, we have the chain of inequalitites
\begin{equation}
f(\bar{x})+t_ng(\bar{x})\le \underbrace{f(x^*(t_n))+t_ng(x^*(t_n))}_{v(t_n)}\le f(\bar{x})+t_ng(x^*(t_n)).
\end{equation}
We take the limit $n\to\infty$ and use the continuity of $f$ and $g$ to obtain $f(\bar{x})\le f(x)\le f(\bar{x})$, therefore $x\in M$. From the inequality we also see that $g(\bar{x})\le g(x^*(t_n))$, and taking the limit here gives $g(\bar{x})\le g(x)$. Since $\bar{x}\in M$ is the unique maximizer, we have $x=\bar{x}$. It also follows that $\lim_{t\to 0+}v(t)=f(\bar{x})$.

For the last statement we first note that $v(t)\ge f(\bar{x})+tg(\bar{x})$ since $v(t)$ is a maximum and $\bar{x}$ is a feasible point. On the other hand, $t\mapsto v(t)$ is convex since it is the pointwise maximum of a family of affine functions, therefore for any $0<t<T$ we have
\begin{equation}
\begin{split}
v(t)
 & \le v(0)+\frac{t}{T}(v(T)-v(0))  \\
 & = f(\bar{x})+\frac{t}{T}(f(x^*(T))+Tg(x^*(T))-f(\bar{x}))  \\
 & \le f(\bar{x})+tg(x^*(T)).
\end{split}
\end{equation}
The claim follows since $g(x^*(T))\to g(\bar{x})$ as $T\to 0$.
\end{proof}

\begin{corollary}\label{cor:difHalphathetaByalphaAtZero}
For $\alpha\in (0,1)$ let $Q_\alpha$ be the probability distribution which is optimal in the definition of $H_{\alpha,\theta}(P)$. By the concavity of $Q\mapsto \sum_j\theta(j)H(Q_j)$ and the strict convexity of $\relativeentropy{\cdot}{P}$ it follows that $Q_\alpha$ is unique. On the other hand, for $\alpha=0$, the functional $H_{0,\theta}(P)$ does not need to have a unique maximizing distribution. Let $M=\{ Q\in\mathcal{P}(\supp P) | \sum_j\theta(j)H(Q_j) = H_{0,\theta}(P) \}$ be the set of maximizing distributions for $\alpha=0$ and $Q_0=\argmin_{Q\in M} \relativeentropy{Q}{P}$. 
Applying \cref{lem:differentiateMaximum} to $H_{\alpha,\theta}$ we get
\begin{equation}
    \left.\frac{\partial}{\partial\alpha} H_{\alpha,\theta}(P)\right|_{\alpha=0}= \left.-\frac{1}{(1-\alpha)^2}\relativeentropy{Q_\alpha}{P}\right|_{\alpha=0}=
    -\relativeentropy{Q_0}{P}.
\end{equation}
\end{corollary}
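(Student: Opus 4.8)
The plan is to reduce the statement to a direct application of \cref{lem:differentiateMaximum} via the reparametrisation $t=\frac{\alpha}{1-\alpha}$, which turns the $\alpha$-dependent coefficient into a linear parameter. Setting $f(Q)=\sum_{j=1}^k\theta(j)\entropy(Q_j)$ and $g(Q)=-\relativeentropy{Q}{P}$ on the domain $X=\distributions(\supp P)$, the objective in \cref{def:Halphatheta} becomes $f(Q)+tg(Q)$, so that $\entropy_{\alpha,\theta}(P)=v(t)$ with $v(t)=\max_{Q\in X}\left[f(Q)+tg(Q)\right]$.

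First I would verify the hypotheses of the lemma. The simplex $X=\distributions(\supp P)$ is compact and convex; $f$ is continuous and concave, being a nonnegative combination of the concave maps $Q\mapsto\entropy(Q_j)$ (each a Shannon entropy of a linear image of $Q$); and $g=-\relativeentropy{\cdot}{P}$ is continuous and strictly concave on $X$, since $Q\mapsto\relativeentropy{Q}{P}$ is strictly convex in its first argument on distributions supported in $\supp P$. With these, the maximiser set $M=f^{-1}(\max_X f)$ is exactly the set of $\alpha=0$ maximisers appearing in the statement, and the unique $g$-maximiser in $M$ (i.e.\ the unique minimiser of $\relativeentropy{\cdot}{P}$ over $M$) is precisely $Q_0=\argmin_{Q\in M}\relativeentropy{Q}{P}$. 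Thus the lemma's $\bar{x}$ equals $Q_0$, and its conclusions give $v(0)=f(Q_0)=\entropy_{0,\theta}(P)$ together with $\left.\frac{\ed}{\ed t}v(t)\right|_{t=0}=g(Q_0)=-\relativeentropy{Q_0}{P}$, as well as $\lim_{t\to 0+}x^*(t)=Q_0$, where $x^*(t)=Q_\alpha$ is the (unique, by strict concavity of $f+tg$ for $t>0$) maximiser.

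Then I would assemble the final identity by the chain rule. Since $\frac{\ed t}{\ed\alpha}=\frac{1}{(1-\alpha)^2}$, which equals $1$ at $\alpha=0$, we get $\left.\frac{\partial}{\partial\alpha}\entropy_{\alpha,\theta}(P)\right|_{\alpha=0}=-\relativeentropy{Q_0}{P}$, the right-hand expression. For the middle expression, the envelope (Danskin) principle applied to the smooth parameter dependence gives, for each $\alpha\in(0,1)$, $\frac{\partial}{\partial\alpha}\entropy_{\alpha,\theta}(P)=-\frac{1}{(1-\alpha)^2}\relativeentropy{Q_\alpha}{P}$, because at the optimiser only the explicit coefficient $\frac{\alpha}{1-\alpha}$ carries the $\alpha$-dependence. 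Passing to the limit $\alpha\to 0+$, and using both $\frac{1}{(1-\alpha)^2}\to 1$ and the convergence $Q_\alpha\to Q_0$ (hence $\relativeentropy{Q_\alpha}{P}\to\relativeentropy{Q_0}{P}$ by continuity), recovers $-\relativeentropy{Q_0}{P}$ and matches the two outer expressions.

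The hard part will not be any computation but the conceptual point that the derivative at $\alpha=0$ is genuinely one-sided and that the relevant maximiser must be selected by a secondary optimisation: since $\entropy_{0,\theta}(P)$ may have a whole face $M$ of maximisers, a naive envelope formula would be ambiguous there. \cref{lem:differentiateMaximum} resolves exactly this by using the strict concavity of $g$ to single out $Q_0$ as the limit of $Q_\alpha$, so the only points requiring genuine care are confirming strict convexity of $\relativeentropy{\cdot}{P}$ (guaranteeing uniqueness of $Q_0$ and of each $Q_\alpha$) and the continuity needed to pass the relative entropy through the limit.
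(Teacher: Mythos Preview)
Your proposal is correct and follows essentially the same approach as the paper: the corollary in the paper is stated as a direct application of \cref{lem:differentiateMaximum}, and the natural way to effect that application is exactly your reparametrisation $t=\frac{\alpha}{1-\alpha}$ with $f(Q)=\sum_j\theta(j)\entropy(Q_j)$ and $g(Q)=-\relativeentropy{Q}{P}$. Your added justification of the middle expression via the envelope theorem and the limit $Q_\alpha\to Q_0$ is more detail than the paper provides, but it is sound and consistent with the intended argument.
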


Recall that the bound \eqref{eq:measuredEbound} depends on the chosen local von Neumann measurements $\mathcal{M}_1,\dots,\mathcal{M}_k$. To obtain the strongest possible upper bound, one should choose the local measurements that minimize the right hand side. We now shift our focus to the resulting optimized bound which we denote by $\rho^{\alpha,\theta}$ because of its similarity to the logarithmic upper support functional $\rho^\theta$ \cite[eq. (2.4)]{strassen1991degeneration}.
\begin{definition}\label{def:alphasupport}
Let $\alpha\in(0,1)$ and $\theta\in\distributions([k])$. We define the functional $\rho^{\alpha,\theta}$ by
\begin{equation}
\rho^{\alpha,\theta}(\psi)=\min_{\mathcal{M}_1,\dots,\mathcal{M}_k}\entropy_{\alpha,\theta}((\mathcal{M}_1\otimes\dots\otimes\mathcal{M}_k)(\vectorstate{\psi})),
\end{equation}
where the minimum is over local measurement channels $\mathcal{M}_j:\boundeds(\mathcal{H}_j)\to\boundeds(\mathcal{H}_j)$, and the exponentiated form $\zeta^{\alpha,\theta}(\psi)=2^{(1-\alpha)\rho^{\alpha,\theta}(\psi)}$.
\end{definition}
\begin{corollary}\label{cor:rhoalphathetaproperties}\leavevmode
\begin{enumerate}
\item\label{it:ElessRho} $E^{\alpha,\theta}(\psi)\le\rho^{\alpha,\theta}(\psi)$.
\item\label{it:rhosubmulti} $\rho^{\alpha,\theta}(\psi\otimes\varphi)\le\rho^{\alpha,\theta}(\psi)+\rho^{\alpha,\theta}(\varphi)$
\item\label{it:zetasubadd} $\zeta^{\alpha,\theta}(\psi\oplus\varphi)\le \zeta^{\alpha,\theta}(\psi)+\zeta^{\alpha,\theta}(\varphi)$
\end{enumerate}
\end{corollary}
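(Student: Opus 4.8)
The plan is to derive all three statements from the freedom in the minimisation defining $\rho^{\alpha,\theta}$, combined with the additivity of $\entropy_{\alpha,\theta}$ under products (\cref{prop:Halphathetaproduct}) and the direct-sum identity (\cref{prop:expHalphathetasum}). In each case the key observation is that a product (respectively block-diagonal) measurement is an admissible, though possibly suboptimal, choice in the relevant minimisation, and that the measured distribution factorises exactly as needed to apply those identities. For \cref{it:ElessRho} I would first note that the right-hand side of \eqref{eq:measuredEbound} is precisely $\entropy_{\alpha,\theta}(\mathcal{M}(\vectorstate{\psi}))$ by \cref{def:Halphatheta}, so that \cref{prop:measuredEbound} reads $E^{\alpha,\theta}(\psi)\le\entropy_{\alpha,\theta}(\mathcal{M}(\vectorstate{\psi}))$ for every choice of local measurement channels $\mathcal{M}_1,\dots,\mathcal{M}_k$; taking the infimum over these choices yields $E^{\alpha,\theta}(\psi)\le\rho^{\alpha,\theta}(\psi)$ by \cref{def:alphasupport}.

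For \cref{it:rhosubmulti} I would let $\mathcal{M}_1,\dots,\mathcal{M}_k$ and $\mathcal{N}_1,\dots,\mathcal{N}_k$ attain the minima defining $\rho^{\alpha,\theta}(\psi)$ and $\rho^{\alpha,\theta}(\varphi)$. After regrouping the tensor factors so that $\psi\otimes\varphi\in\bigotimes_{j=1}^k(\mathcal{H}_j\otimes\mathcal{K}_j)$, the channels $\mathcal{M}_j\otimes\mathcal{N}_j$ form a valid local measurement for $\psi\otimes\varphi$, and a short computation shows that the measured distribution is the product $P\otimes P'$, where $P=(\mathcal{M}_1\otimes\dots\otimes\mathcal{M}_k)(\vectorstate{\psi})$ and $P'=(\mathcal{N}_1\otimes\dots\otimes\mathcal{N}_k)(\vectorstate{\varphi})$. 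Since this product measurement is one feasible choice in the minimisation for $\rho^{\alpha,\theta}(\psi\otimes\varphi)$, and $\entropy_{\alpha,\theta}(P\otimes P')=\entropy_{\alpha,\theta}(P)+\entropy_{\alpha,\theta}(P')$ by \cref{prop:Halphathetaproduct}, the subadditivity follows immediately.

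For \cref{it:zetasubadd} I would proceed analogously, choosing on each $\mathcal{H}_j\oplus\mathcal{K}_j$ the orthonormal basis obtained as the disjoint union of the optimal bases for $\mathcal{H}_j$ and $\mathcal{K}_j$. The resulting product measurement respects the block decomposition and sends $\vectorstate{\psi\oplus\varphi}$ to $P\oplus P'$ viewed as a measure on $\prod_{j}(\mathcal{X}_j\sqcup\mathcal{Y}_j)$; applying \cref{prop:expHalphathetasum} and using that this measurement is feasible for $\rho^{\alpha,\theta}(\psi\oplus\varphi)$ then gives $\zeta^{\alpha,\theta}(\psi\oplus\varphi)\le\zeta^{\alpha,\theta}(\psi)+\zeta^{\alpha,\theta}(\varphi)$ once $P,P'$ are taken optimal. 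The one step I expect to require care is verifying that the measured state genuinely decomposes as $P\oplus P'$ with no surviving cross terms: expanding $\vectorstate{\psi\oplus\varphi}=\vectorstate{\psi}+\vectorstate{\varphi}+\ketbra{\psi}{\varphi}+\ketbra{\varphi}{\psi}$, the off-diagonal blocks are supported on pairs of product-basis indices lying in disjoint blocks (all-$\mathcal{X}$ versus all-$\mathcal{Y}$), so no diagonal entry survives the measurement. This orthogonality check is the only non-routine point; everything else is a direct substitution into the already-established additivity identities together with the observation that product and block-diagonal measurements are admissible but not necessarily optimal.
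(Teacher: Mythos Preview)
Your proposal is correct and follows essentially the same route as the paper: for \ref{it:ElessRho} you invoke \cref{prop:measuredEbound} and minimise over local measurements, and for \ref{it:rhosubmulti} and \ref{it:zetasubadd} you restrict the minimisation to product (respectively block-diagonal) local measurements and apply \cref{prop:Halphathetaproduct} and \cref{prop:expHalphathetasum}. Your explicit verification in \ref{it:zetasubadd} that the cross terms $\ketbra{\psi}{\varphi}$ and $\ketbra{\varphi}{\psi}$ contribute no diagonal entries under the block-respecting product measurement is a point the paper leaves implicit, so your write-up is if anything slightly more careful there.
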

\begin{proof}

\ref{it:ElessRho}: This is a direct consequence of \cref{prop:measuredEbound}. That holds for any measurement channel $\mathcal{M}$, so the only remaining step is to take its minimum in $\mathcal{M}$.

\ref{it:rhosubmulti}: This comes from a straightforward calculation. First we restrict the optimization to such channels that are in tensor product form on the given tensor product space:
\begin{equation}
\begin{split}
    \rho^{\alpha,\theta}(\psi\otimes\varphi) &\le 
    \min_{\mathcal{M}_1,\dots,\mathcal{M}_k}\min_{\mathcal{N}_1,\dots,\mathcal{N}_k}\entropy_{\alpha,\theta}((\mathcal{M}_1\otimes\dots\otimes\mathcal{M}_k)\otimes (\mathcal{N}_1\otimes\dots\otimes\mathcal{N}_k)
    \vectorstate{\psi\otimes\varphi})
    \\
    &=\min_{\mathcal{M}_1,\dots,\mathcal{M}_k}\entropy_{\alpha,\theta}(\mathcal{M}_1\otimes\dots\otimes\mathcal{M}_k\vectorstate{\psi})+
    \min_{\mathcal{N}_1,\dots,\mathcal{N}_k}\entropy_{\alpha,\theta}( \mathcal{N}_1\otimes\dots\otimes\mathcal{N}_k\vectorstate{\varphi})\\
    &=\rho^{\alpha,\theta}(\psi)+\rho^{\alpha,\theta}(\varphi).
    \end{split}
\end{equation}
In the first equality we used \cref{prop:Halphathetaproduct}.

\ref{it:zetasubadd}: Similarly we restrict the optimization to such channels that are in the form $\mathcal{M}\oplus\mathcal{N}$ where $\mathcal{M}$ acts on $\vectorstate{\psi}$ and $\mathcal{N}$ acts on $\vectorstate{\varphi}$.
\begin{equation}
    \begin{split}
        \zeta^{\alpha,\theta}(\psi\oplus\varphi)&\le
        \min_{\mathcal{M}_1,\dots,\mathcal{M}_k}\min_{\mathcal{N}_1,\dots,\mathcal{N}_k}
        2^{(1-\alpha)
        \entropy_{\alpha,\theta}((\mathcal{M}_1\otimes\dots\otimes\mathcal{M}_k)\oplus (\mathcal{N}_1\otimes\dots\otimes\mathcal{N}_k)
    \vectorstate{\psi\oplus\varphi})}\\
    &=\min_{\mathcal{M}_1,\dots,\mathcal{M}_k}2^{(1-\alpha)
        \entropy_{\alpha,\theta}(\mathcal{M}_1\otimes\dots\otimes\mathcal{M}_k \vectorstate{\psi})}+
      \min_{\mathcal{N}_1,\dots,\mathcal{N}_k}2^{(1-\alpha)
        \entropy_{\alpha,\theta}(\mathcal{N}_1\otimes\dots\otimes\mathcal{N}_k\vectorstate{\varphi})}\\
        &=\zeta^{\alpha,\theta}(\psi)+\zeta^{\alpha,\theta}(\varphi)
    \end{split}
\end{equation}
In the first equality we use \cref{prop:expHalphathetasum}.

\end{proof}

Since $E^{\alpha,\theta}$ is additive and $\rho^{\alpha,\theta}$ is subadditive under the tensor product, $\frac{1}{n}\rho^{\alpha,\theta}(\psi^{\otimes n})$ gives a sequence of upper bounds on $E^{\alpha,\theta}(\psi)$ that has a limit equal to its infimum. Next we will show that this limit is in fact equal to $E^{\alpha,\theta}(\psi)$.

\begin{theorem}\label{thm:rhoregE}
For all $\psi\in\mathcal{H}_1\otimes\dots\otimes\mathcal{H}_k$, $\alpha\in(0,1)$ and $\theta\in\distributions([k])$ the equality
\begin{equation}
\lim_{n\to\infty}\frac{1}{n}\rho^{\alpha,\theta}(\psi^{\otimes n})=E^{\alpha,\theta}(\psi)
\end{equation}
holds.
\end{theorem}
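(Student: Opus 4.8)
The plan is to prove the two inequalities separately, noting that the preceding discussion already settles one of them. By \cref{cor:rhoalphathetaproperties} applied to $\psi^{\otimes n}$ together with the additivity $E^{\alpha,\theta}(\psi^{\otimes n})=nE^{\alpha,\theta}(\psi)$, we have $nE^{\alpha,\theta}(\psi)\le\rho^{\alpha,\theta}(\psi^{\otimes n})$ for every $n$, and by the subadditivity of $\rho^{\alpha,\theta}$ the sequence $\tfrac1n\rho^{\alpha,\theta}(\psi^{\otimes n})$ converges to its infimum, which is therefore $\ge E^{\alpha,\theta}(\psi)$. Hence the entire content is the reverse inequality, and it suffices to exhibit, for each $n$, local measurements with
\[
\rho^{\alpha,\theta}(\psi^{\otimes n})\le -\sandwiched[\alpha]{\vectorstate{\psi}^{\otimes n}}{A_{\mathcal{H},n}}+O(\log n);
\]
dividing by $n$, letting $n\to\infty$, and invoking \eqref{eq:Ealphathetacharacterization} then finishes the proof.

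To obtain such a bound I would first derive a \emph{dual} upper bound for $\entropy_{\alpha,\theta}$, complementing its defining maximum in \cref{def:Halphatheta}. Writing $\beta=\tfrac{\alpha}{1-\alpha}$ (so $\beta^{-1}=\tfrac{1-\alpha}{\alpha}$), for any reference distributions $R_j\in\distributions(\mathcal{X}_j)$ the bound $\entropy(Q_j)\le-\sum_{x_j}Q_j(x_j)\log R_j(x_j)$ (obtained by discarding $\relativeentropy{Q_j}{R_j}\ge 0$) together with the Gibbs variational identity $\max_Q\big[\sum_x Q(x)g(x)-\beta\relativeentropy{Q}{P}\big]=\beta\log\sum_x P(x)2^{g(x)/\beta}$, applied with $g(x)=-\sum_j\theta(j)\log R_j(x_j)$, yields
\[
\entropy_{\alpha,\theta}(P)\le\beta\log\sum_x P(x)\prod_{j=1}^k R_j(x_j)^{-\theta(j)/\beta}.
\]
Specialising to $P=\mathcal{M}(\vectorstate{\psi^{\otimes n}})$, where each $\mathcal{M}_j$ is taken to be the eigenbasis of a density operator $C_j$ on $\mathcal{H}_j^{\otimes n}$ and $R_j$ its eigenvalue distribution, the right-hand side becomes $\beta\log\bra{\psi}^{\otimes n}\bigotimes_j C_j^{-\theta(j)/\beta}\ket{\psi}^{\otimes n}$. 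Taking the minimum over the measurements in \cref{def:alphasupport} thus gives $\rho^{\alpha,\theta}(\psi^{\otimes n})\le\min_{C_1,\dots,C_k}\beta\log\bra{\psi}^{\otimes n}\bigotimes_{j=1}^k C_j^{-\theta(j)/\beta}\ket{\psi}^{\otimes n}$ over density operators $C_j$.

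The last step is to choose the $C_j$ so that $\bigotimes_j C_j^{-\theta(j)/\beta}$ reproduces $A_{\mathcal{H},n}^{\frac{1-\alpha}{\alpha}}$ up to polynomial factors. Taking $C_j=C_j'/\Tr C_j'$ with $C_j'=\sum_{\lambda\vdash n}2^{-n\entropy(\lambda/n)}P^{\mathcal{H}_j}_{\lambda}$ (positive definite since the $P^{\mathcal{H}_j}_\lambda$ sum to the identity) gives $(C_j')^{-\theta(j)/\beta}=\sum_\lambda 2^{\frac{n\theta(j)}{\beta}\entropy(\lambda/n)}P^{\mathcal{H}_j}_\lambda$, whose tensor product over $j$ is exactly $A_{\mathcal{H},n}^{\frac{1-\alpha}{\alpha}}$, while the normalisations contribute only $\sum_j\theta(j)\log\Tr C_j'$. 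By the dimension bounds \eqref{eq:unitarydimensionbound} and \eqref{eq:symmetricdimensionbound} and the polynomial bound on the number of partitions, $\Tr C_j'=\sum_\lambda 2^{-n\entropy(\lambda/n)}\dim[\lambda]\dim\mathbb{S}_\lambda(\mathcal{H}_j)\le(n+1)^{d_j(d_j+1)/2}$, so this is $O(\log n)$. Recalling $-\sandwiched[\alpha]{\vectorstate{\psi}^{\otimes n}}{A_{\mathcal{H},n}}=\beta\log\bra{\psi}^{\otimes n}A_{\mathcal{H},n}^{\frac{1-\alpha}{\alpha}}\ket{\psi}^{\otimes n}$ yields the finite-$n$ estimate above.

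I expect the crux to be the dual bound for $\entropy_{\alpha,\theta}$ combined with the observation that minimising over orthonormal measurement bases is equivalent to a \emph{free} choice of the reference operators $C_j$: this is precisely what turns the min-over-measurements definition of $\rho^{\alpha,\theta}$ into an expression directly comparable with $A_{\mathcal{H},n}$, and it is the non-routine idea. Once this is in place the representation-theoretic input is standard and parallels \cref{lem:measuredSchurWeyl} and \cref{prop:measuredEbound}: the Schur--Weyl-adapted choice of $C_j$ only costs polynomial, hence $O(\log n)$, factors which vanish in the limit. A minor point to check carefully is the handling of degenerate eigenvalues of $C_j'$ (immaterial, since $R_j$ is constant on each eigenspace) and the support condition $\supp Q\subseteq\supp P$ in the Gibbs step (automatic from $\relativeentropy{Q}{P}<\infty$).
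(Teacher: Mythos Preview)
Your argument is correct and takes a genuinely different route from the paper's. The paper obtains the reverse inequality by applying \cref{prop:Halphathetarefinedupperbound} (a combinatorial block-decomposition bound on $\entropy_{\alpha,\theta}$) with blocks given by the Schur--Weyl isotypic subspaces, and then compares the resulting expression to the superadditive lower functional $E_{\alpha,\theta}(\psi^{\otimes n})$ from \eqref{eq:lowerLogLOCC} via the triangle inequality for the von Neumann entropy; this yields the sandwich $E_{\alpha,\theta}(\psi^{\otimes n})\le nE^{\alpha,\theta}(\psi)\le\rho^{\alpha,\theta}(\psi^{\otimes n})\le E_{\alpha,\theta}(\psi^{\otimes n})+O(\log n)$. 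You instead replace the block bound by the continuous Gibbs/cross-entropy dual bound $\entropy_{\alpha,\theta}(P)\le\beta\log\sum_x P(x)\prod_j R_j(x_j)^{-\theta(j)/\beta}$, observe that for $P$ arising from a measurement in an eigenbasis of $C_j$ this becomes the operator expression $\beta\log\bra{\psi}^{\otimes n}\bigotimes_j C_j^{-\theta(j)/\beta}\ket{\psi}^{\otimes n}$, and then match $A_{\mathcal{H},n}^{(1-\alpha)/\alpha}$ directly by choosing $C_j\propto\sum_\lambda 2^{-n\entropy(\lambda/n)}P^{\mathcal{H}_j}_\lambda$. This bypasses both $E_{\alpha,\theta}$ and the entropy triangle inequality and lands straight on the characterisation \eqref{eq:Ealphathetacharacterization}. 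What your route buys is a shorter, more transparent proof of the regularisation statement itself; what the paper's route buys is the explicit finite-$n$ comparison with $E_{\alpha,\theta}$, which simultaneously controls the convergence rate of \emph{both} regularisations, as advertised in the introduction.
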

\begin{proof}
Since $\rho^{\alpha,\theta}$ is subadditive, the limit exists and is not less than the right hand side. We need to show that $\lim_{n\to\infty}\frac{1}{n}\rho^{\alpha,\theta}(\psi^{\otimes n})\le E^{\alpha,\theta}(\psi)$.

Let $n\in\naturals$ and choose local bases that are compatible with the Schur--Weyl decomposition, i.e. that are unions of bases for the subspaces $P^{\mathcal{H}_j}_\lambda\mathcal{H}_j^{\otimes n}$. Let $\mathcal{M}$ be the corresponding measurement channel. The local Schur--Weyl decompositions determine a block decomposition of $\mathcal{M}(\vectorstate{\psi^{\otimes n}})$. Applying \cref{prop:Halphathetarefinedupperbound} with these blocks, we obtain
\begin{equation}\label{eq:rhoboundSchurWeylblock}
\begin{split}
\rho^{\alpha,\theta}(\psi^{\otimes n})
 & \le \entropy_{\alpha,\theta}(\mathcal{M}(\vectorstate{\psi^{\otimes n}}))  \\
 & \le \sum_{j=1}^k\left(\theta(j)+\frac{\alpha}{1-\alpha}\right)\log(n+1)^{d_j}  \\
 & \quad+\max_{\lambda_1,\dots,\lambda_k\vdash n}\left[\sum_{j=1}^k\theta(j)\log\Tr P^{\mathcal{H}_j}_{\lambda_j}+\frac{\alpha}{1-\alpha}\log\norm{(P^{\mathcal{H}_1}_{\lambda_1}\otimes\dots\otimes P^{\mathcal{H}_k}_{\lambda_k})\psi^{\otimes n}}^2\right].
\end{split}
\end{equation}

On the other hand, the product of local contractions $P^{\mathcal{H}_1}_{\lambda_1}\otimes\dots\otimes P^{\mathcal{H}_k}_{\lambda_k}$ provides a lower bound on $E_{\alpha,\theta}(\psi^{\otimes n})$ defined in \eqref{eq:lowerLogLOCC}, since the latter is defined as a supremum over such maps. We estimate the marginal entropies of the resulting state using the same argument as in \cite[Proposition 4.10]{vrana2023family}, which we reproduce here in a slightly modified form.

The group $S_n$ acts on $\mathcal{H}^{\otimes n}$ as well as on $\mathcal{H}_j^{\otimes n}$ by permuting the factors. Since $\psi^{\otimes n}$ is invariant and $P^{\mathcal{H}_j}_{\lambda_j}$ are equivariant (since these are the isotypic projections for representation), the vector $(P^{\mathcal{H}_1}_{\lambda_1}\otimes\dots\otimes P^{\mathcal{H}_k}_{\lambda_k})\psi^{\otimes n}$ is also invariant and so are its marginals. The marginals are supported in a subspace isomorphic to $[\lambda_j]\otimes\mathbb{S}_{\lambda_j}(\mathcal{H}_j)$. The representation $[\lambda_j]$ is irreducible, therefore the reduction of the state to this factor is maximally mixed. On the other hand, the other factor is small: $\dim\mathbb{S}_{\lambda_j}(\mathcal{H}_j)\le(n+1)^{d_j(d_j-1)/2}$. By the triangle inequality for the von Neumann entropy, the $j$th marginal entropy of the normalization of $(P^{\mathcal{H}_1}_{\lambda_1}\otimes\dots\otimes P^{\mathcal{H}_k}_{\lambda_k})\psi^{\otimes n}$ is at least
\begin{equation}
\begin{split}
\log\dim[\lambda_j]-\log\dim\mathbb{S}_{\lambda_j}(\mathcal{H}_j)
 & = \log\Tr P^{\mathcal{H}_j}_{\lambda_j}-2\log\dim\mathbb{S}_{\lambda_j}(\mathcal{H}_j)  \\
 & \ge \log\Tr P^{\mathcal{H}_j}_{\lambda_j}-d_j(d_j-1)\log(n+1).
\end{split}
\end{equation}
This implies
\begin{equation}
E_{\alpha,\theta}(\psi^{\otimes n})\ge\sum_{j=1}^k\theta(j)\left(\log\Tr P^{\mathcal{H}_j}_{\lambda_j}-d_j(d_j-1)\log(n+1)\right)+\frac{\alpha}{1-\alpha}\log\norm{(P^{\mathcal{H}_1}_{\lambda_1}\otimes\dots\otimes P^{\mathcal{H}_k}_{\lambda_k})\psi^{\otimes n}}^2.
\end{equation}
Combining this inequality with \eqref{eq:rhoboundSchurWeylblock}, we have
\begin{equation}
\rho^{\alpha,\theta}(\psi^{\otimes n})\le E_{\alpha,\theta}(\psi^{\otimes n})+\sum_{j=1}^n\theta(j)d_j^2\log(n+1)+\frac{\alpha}{1-\alpha}\sum_{j=1}^kd_j\log(n+1).
\end{equation}
Finally, we divide both sides by $n$ and let $n\to\infty$:
\begin{equation}
\lim_{n\to\infty}\frac{1}{n}\rho^{\alpha,\theta}(\psi^{\otimes n})\le\lim_{n\to\infty}\frac{1}{n}E_{\alpha,\theta}(\psi^{\otimes n})=E^{\alpha,\theta}(\psi).
\end{equation}
\end{proof}

\section{States with free support}\label{sec:rhofree}

We consider a class of states with a property closely related but not identical to the notion of free tensors, introduced in \cite{franz2002moment}. We will show that for any state vector $\psi$ belonging to this class we have $\rho^{\alpha,\theta}(\psi)=E^{\alpha,\theta}(\psi)$ for all $\alpha\in[0,1]$ and $\theta\in\distributions([k])$. For the analogous property of free tensors in the $\alpha=0$ case we refer the reader to \cite[Section 4.2.]{christandl2023universal}.
\begin{definition}\label{def:free}
    Let $\Psi\subseteq \mathcal{X}_1\times\dots\times\mathcal{X}_k$. We say $\Psi$ is free if any two $k$-tuples $x, y \in \Psi$ where $x\neq y$ are different in at least two positions.
	We will say that a state (vector) $\psi$ has free support if there exist local measurement channels $\mathcal{M}_j$ such that $\supp(\mathcal{M}_1\otimes\dots\otimes\mathcal{M}_k)(\vectorstate{\psi})$ is free (as a subset of $\mathcal{X}_1\times\dots\times\mathcal{X}_k$).
\end{definition}
By contrast, a free \emph{tensor} is defined by the same support condition with respect to any (not necessarily orthonormal) product basis, therefore a state vector with free support is also a free tensor but not the other way around. On the other hand, the SLOCC orbit of a free tensor contains at least one state with free support, since we can apply linear maps that map the local basis to a local \emph{orthonormal} basis.
\begin{example}
Every tensor in $\complexes^2\otimes\complexes^2\otimes\complexes^2$ is free. The standard representatives of all 6 SLOCC classes of three-qubit states \cite{dur2000three} have free support: $\ket{000}$ (separable), $\ket{\EPR_{AB}}=\frac{1}{\sqrt{2}}(\ket{000}+\ket{110})$ and its permutations (biseparable), $\ket{\W}=\frac{1}{\sqrt{3}}(\ket{100}+\ket{010}+\ket{001})$, and $\ket{\GHZ}=\frac{1}{\sqrt{2}}(\ket{000}+\ket{111})$. On the other hand, there exist three-qubit states that do not have free support (see \cref{ex:notfree} below).
\end{example}

\begin{remark}
A free subset $\Phi$ of $[d]\times\dots\times[d]$ (with $k$) factors has at most $d^{k-1}$ elements since the projection to (say) the first $k-1$ factors is injective when restricted to $\Phi$. In this sense, states with a free support are sparse, with a sublinear number of nonzero entries in a suitable product basis. By dimension counting, this also implies that the property of having free support is not generic (this is also true for free tensors, see \cite[Remark 4.17.]{christandl2023universal}).

We note that for a given Hilbert space, the set of normalised state vectors having free support is compact. Indeed, for a given free subset $\Phi\subseteq[d_1]\times\dots\times[d_k]$, a probability distribution $P\in\distributions(\Phi)$, phases $z\in U(1)^\Phi$ and local unitaries $U_1,\dots,U_k$ we can form the vector $(U_1\otimes\dots\otimes U_k)\sum_{(x_1,\dots,x_k)\in\Phi}z_{(x_1,\dots,x_k)}\sqrt{P(x_1,\dots,x_k)}\ket{x_1,\dots,x_k}$. This unit vector depends continuously on $(P,z,U_1,\dots,U_k)$ and $\distributions(\Phi)\times U(1)^\Phi\times U(d_1)\times\dots\times U(d_k)$ is compact, therefore the set of vectors obtained in this way is compact. Every normalised state with free support arises in this way for a suitable $\Phi$, of which there are finitely many.
\end{remark}
We note that the measurement channels satisfying the support condition need not be unique. A well-known example is the \GHZ{} state, which has free support with respect to the computational basis, and also with respect to the local X bases (see \cref{ex:GHZ} below).

If $\psi$ and $\varphi$ both have free support then $\psi\otimes\varphi$ and $\psi\oplus\varphi$ also have free support. This can be seen by taking the tensor product (respectively direct sum) of the local bases that correspond to the measurement channels in the definition.

\begin{proposition}\label{prop:marginalOfFree}
Let a linear operator $A:\mathcal{H}\to\mathcal{H}$ be diagonal in a product basis for which the support of $\psi$ is free (i.e., $(\mathcal{M}_1\otimes\dots\otimes\mathcal{M}_k)(A)=A$), then $A\psi$ also has a free support. In particular, if $A_j:\mathcal{H}_j\to\mathcal{H}_j$ are linear operators that are diagonal in the basis defining $\mathcal{M}_j$ (i.e., $\mathcal{M}_j(A_j)=A_j$), then $(A_1\otimes\dots\otimes A_k)\psi$ has a free support with respect to the same measurements.

Let $\psi$ have a free support with respect to the local measurement channels $\mathcal{M}_1,\dots,\mathcal{M}_k$ and let $\mathcal{M}=\mathcal{M}_1\otimes\dots\otimes\mathcal{M}_k$. Then
\begin{equation}
   \Tr_{23\dots}\vectorstate{\psi}  = \Tr_{23\dots}\mathcal{M}(\vectorstate{\psi})
\end{equation}
and similarly for the other single-site marginals. In particular, every single-site marginal of a free state is diagonal with respect to the same bases as in the definition.
\end{proposition}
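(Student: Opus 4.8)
The plan is to work entirely in the product basis $\{\ket{x}\}_{x\in\mathcal{X}_1\times\dots\times\mathcal{X}_k}$ associated with the measurement channels, writing $\psi=\sum_x c_x\ket{x}$ so that $\supp\mathcal{M}(\vectorstate{\psi})=\setbuild{x}{c_x\neq0}$ is free by hypothesis.

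For the first claim I would argue that freeness is inherited by subsets of the support. If $A$ is diagonal in the product basis, $A=\sum_x a_x\ketbra{x}{x}$, then $A\psi=\sum_x a_xc_x\ket{x}$, whose support $\setbuild{x}{a_xc_x\neq0}$ is contained in $\setbuild{x}{c_x\neq0}$. Any subset of a free set is free, so $A\psi$ has free support with respect to the same channels. The ``in particular'' statement is then immediate, since a product $A_1\otimes\dots\otimes A_k$ of locally diagonal operators is itself diagonal in the product basis and hence fixed by $\mathcal{M}$.

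For the marginals I would compute both sides explicitly. Expanding $\vectorstate{\psi}=\sum_{x,y}c_x\bar c_y\ketbra{x}{y}$ and tracing out systems $2,\dots,k$ yields
\begin{equation}
\Tr_{2\dots k}\vectorstate{\psi}=\sum_{x_1,y_1}\left(\sum_{x_2,\dots,x_k}c_{(x_1,x_2,\dots,x_k)}\bar c_{(y_1,x_2,\dots,x_k)}\right)\ketbra{x_1}{y_1},
\end{equation}
whereas $\Tr_{2\dots k}\mathcal{M}(\vectorstate{\psi})=\sum_{x_1}\big(\sum_{x_2,\dots,x_k}\lvert c_{(x_1,\dots,x_k)}\rvert^2\big)\ketbra{x_1}{x_1}$. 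The diagonal parts ($x_1=y_1$) of the two expressions coincide term by term, so the whole statement reduces to showing that the off-diagonal coefficients of the first marginal vanish.

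This last step is where freeness does all the work, and I regard it as the conceptual crux rather than a genuine obstacle. Fixing $x_1\neq y_1$ and any $(x_2,\dots,x_k)$, the two index tuples $(x_1,x_2,\dots,x_k)$ and $(y_1,x_2,\dots,x_k)$ differ in exactly one coordinate; since distinct elements of a free support must differ in at least two coordinates, they cannot both lie in the support, so at least one of $c_{(x_1,x_2,\dots,x_k)}$, $c_{(y_1,x_2,\dots,x_k)}$ is zero. Every summand in the off-diagonal coefficient therefore vanishes, giving $\Tr_{2\dots k}\vectorstate{\psi}=\Tr_{2\dots k}\mathcal{M}(\vectorstate{\psi})$, which is manifestly diagonal in the $\mathcal{H}_1$ basis. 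The identical computation at any other site delivers the remaining marginals and the final diagonality statement. The only point requiring care is correctly bookkeeping the partial trace in the product basis; the freeness hypothesis handles the coherences between basis states differing in a single tensor factor automatically.
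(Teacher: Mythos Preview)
Your proof is correct and follows essentially the same route as the paper: an explicit computation of the marginal entries in the product basis, with freeness forcing all off-diagonal terms to vanish because the corresponding index tuples differ in only one coordinate. You are even slightly more complete than the paper, which omits the argument for the first paragraph (that $A\psi$ has free support) entirely.
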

\begin{proof}
The maps $\mathcal{M}_j$ are trace-preserving, therefore $\mathcal{M}_1(\Tr_{23\dots}\vectorstate{\psi})=\Tr_{23\dots}\mathcal{M}(\vectorstate{\psi})$. We only need to show that $\mathcal{M}_1$ can be omitted, which is equivalent to the statement that the marginal of $\psi$ is already diagonal. The entries of the marginal are
\begin{equation}
\begin{split}
\bra{x_1}\left(\Tr_{23\dots}\vectorstate{\psi}\right)\ket{x'_1}
 & = \sum_{x_2,\dots,x_k}\braket{x_1,x_2,\dots,x_k}{\psi}\braket{\psi}{\smash{x'_1},x_2,\dots,x_k}  \\
 & = \delta_{x_1 x'_1}\sum_{x_2,\dots,x_k} \left|\braket{x_1,x_2,\dots,x_k}{\psi}\right|^2,
\end{split}
\end{equation}
where the second equality is an immediate consequence of the definition of a free support, namely that in this basis two non-zero entries have to differ at least in two indices.
\end{proof}

\Cref{prop:marginalOfFree} can be used as a simple test to decide whether a particular state with generic marginals has free support. In the following example we use this to show that there exist three-qubit states that do not have free support.
\begin{example}\label{ex:notfree}
Let $\ket{\psi}=\frac{1}{\sqrt{10}}(2\ket{000}-\ket{100}-\ket{010}-\ket{001}-\ket{011}-\ket{101}-\ket{110})$. The marginals of $\vectorstate{\psi}$ are diagonal in the computational basis and have distinct eigenvalues ($\frac{7}{10}$ and $\frac{3}{10}$), therefore this is the unique basis (up to phase) that diagonalizes them. Therefore if $\ket{\psi}$ had free support, then by \cref{prop:marginalOfFree} its support with respect to the computational basis would be free. However, this support contains both $(0,0,0)$ and $(1,0,0)$, therefore it is not free.
\end{example}

\begin{theorem}\label{thm:freeErho}
Let $\alpha\in(0,1]$, $\theta\in\distributions([k])$ and let $\psi$ be a state with free support with respect to the local measurements $\mathcal{M}_1,\dots,\mathcal{M}_k$. Then
\begin{equation}
E^{\alpha,\theta}(\psi)=\rho^{\alpha,\theta}(\psi)=\entropy_{\alpha,\theta}((\mathcal{M}_1\otimes\dots\otimes\mathcal{M}_k)(\vectorstate{\psi})).
\end{equation}
In particular, any tuple of local measurements satisfying the support condition in \cref{def:free} is a minimizer in \cref{def:alphasupport}.
\end{theorem}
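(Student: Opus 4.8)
The plan is to collapse the asserted double equality to a single inequality and then prove that inequality by a concentration argument on tensor powers. Writing $P=(\mathcal{M}_1\otimes\dots\otimes\mathcal{M}_k)(\vectorstate{\psi})$, two of the inequalities are immediate: $E^{\alpha,\theta}(\psi)\le\rho^{\alpha,\theta}(\psi)$ is \cref{cor:rhoalphathetaproperties}\ref{it:ElessRho}, while $\rho^{\alpha,\theta}(\psi)\le\entropy_{\alpha,\theta}(P)$ holds because the tuple $\mathcal{M}_1,\dots,\mathcal{M}_k$ is one admissible choice in the minimisation of \cref{def:alphasupport}. Everything therefore reduces to establishing $\entropy_{\alpha,\theta}(P)\le E^{\alpha,\theta}(\psi)$; the three quantities are then squeezed into equality and the given measurements are exhibited as minimisers. (For $\alpha=1$ the statement concerns the $\theta$-weighted marginal entropies, and it follows directly: by \cref{prop:marginalOfFree} the single-site marginals of $\psi$ are already diagonal, and measurement cannot decrease their entropy, so the free basis is optimal; I focus on $\alpha\in(0,1)$ below.)

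To prove $\entropy_{\alpha,\theta}(P)\le E^{\alpha,\theta}(\psi)$ I would use $E^{\alpha,\theta}(\psi)=\lim_{n\to\infty}\frac1n E_{\alpha,\theta}(\psi^{\otimes n})\ge\frac1n E_{\alpha,\theta}(\psi^{\otimes n})$ and exhibit, for each $n$, a product of local contractions whose value in \eqref{eq:lowerLogLOCC} approaches $\entropy_{\alpha,\theta}(P)$. Let $Q^*$ be the maximiser in $\entropy_{\alpha,\theta}(P)$, unique since $\alpha\in(0,1)$, with single-site marginals $Q^*_j$. The structural fact I would record first is that $Q^*$ is the information projection of $P$ onto the affine family $\{Q\mid Q_j=Q^*_j\ \forall j\}$: since $Q^*$ is the global maximiser and on this family the objective equals $\sum_j\theta(j)\entropy(Q^*_j)-\frac{\alpha}{1-\alpha}\relativeentropy{Q}{P}$ up to an additive constant, $Q^*$ must minimise $\relativeentropy{\cdot}{P}$ there, so $\min_{Q:Q_j=Q^*_j}\relativeentropy{Q}{P}=\relativeentropy{Q^*}{P}$. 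This uses only the definition of $Q^*$ and avoids any explicit Lagrange computation.

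For the construction I would pick $n$-types $R_j\in\distributions[n](\mathcal{X}_j)$ approximating $Q^*_j$, take $A_j=\typeclassprojection{n}{R_j}$ (the projection onto the type class $\typeclass{n}{R_j}$ in the $n$-fold product basis of $\mathcal{H}_j$, hence a contraction), and set $\varphi=(A_1\otimes\dots\otimes A_k)\psi^{\otimes n}$. Since $\psi^{\otimes n}$ has free support in the product basis and the $A_j$ are diagonal there, $\varphi$ has free support, so by \cref{prop:marginalOfFree} each single-site marginal $\Tr_j\vectorstate{\varphi}$ is diagonal and its normalised diagonal $\widetilde{Q}_j$ is a distribution on $\mathcal{X}_j^n$ supported on $\typeclass{n}{R_j}$. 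The norm is $\norm{\varphi}^2=\Pr_{\mathbf{x}\sim P^{\otimes n}}[\forall j:\ \mathbf{x}_j\in\typeclass{n}{R_j}]=\sum_{\hat{Q}:\hat{Q}_j=R_j}P^{\otimes n}(\typeclass{n}{\hat{Q}})\polyeq 2^{-n\min_{Q:Q_j=R_j}\relativeentropy{Q}{P}}$, using \eqref{eq:typeclassbound} and summing over joint $n$-types with support in $\supp P$; as $R_j\to Q^*_j$ this tends, up to subexponential factors, to $2^{-n\relativeentropy{Q^*}{P}}$ by the I-projection identity.

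The crux is the entropy term, and here I would exploit permutation symmetry. Both $\psi^{\otimes n}$ and each $A_j=\typeclassprojection{n}{R_j}$ are invariant under the $S_n$-action permuting the $n$ copies, so $\varphi$ is $S_n$-invariant, and hence so is the diagonal marginal distribution $\widetilde{Q}_j$. A permutation-invariant distribution supported on the single $S_n$-orbit $\typeclass{n}{R_j}$ must be uniform on it, whence $\entropy(\widetilde{Q}_j)=\log\lvert\typeclass{n}{R_j}\rvert\ge n\entropy(R_j)-\lvert\mathcal{X}_j\rvert\log(n+1)$ by \eqref{eq:typeclassbound}. Substituting the norm and entropy estimates into \eqref{eq:lowerLogLOCC} gives $\frac1n E_{\alpha,\theta}(\psi^{\otimes n})\ge\sum_j\theta(j)\entropy(R_j)-\frac{\alpha}{1-\alpha}\min_{Q:Q_j=R_j}\relativeentropy{Q}{P}-o(1)$; letting $n\to\infty$ with $R_j\to Q^*_j$ and using continuity yields $E^{\alpha,\theta}(\psi)\ge\sum_j\theta(j)\entropy(Q^*_j)-\frac{\alpha}{1-\alpha}\relativeentropy{Q^*}{P}=\entropy_{\alpha,\theta}(P)$, closing the chain. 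I expect the main obstacle to be precisely this entropy step: lower bounding the marginal entropy of $\varphi$, and it is the freeness of the support (forcing the marginal to be diagonal) combined with permutation symmetry (forcing uniformity on the type class) that resolves it cleanly, whereas for non-free states neither simplification is available.
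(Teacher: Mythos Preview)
Your proof is correct and uses the same core construction as the paper: apply the local marginal type-class projections $\typeclassprojection{n}{Q_j}$ to $\psi^{\otimes n}$, invoke freeness (via \cref{prop:marginalOfFree}) to make the resulting single-site marginals diagonal, and use $S_n$-invariance to conclude they are uniform on the type classes, so that \eqref{eq:typeclassbound} gives the entropy term.

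The one organisational difference is in the norm estimate. You start from the optimiser $Q^*$, approximate its marginals by $n$-types $R_j$, and then must identify the exponential rate of $\norm{\varphi}^2=\sum_{\hat Q:\hat Q_j=R_j}P^{\otimes n}(\typeclass{n}{\hat Q})$, which leads you to the (correct) I-projection observation that $Q^*$ minimises $\relativeentropy{\cdot}{P}$ on the fibre of fixed marginals. The paper instead works with an arbitrary $n$-type $Q\in\distributions[n](\supp P)$ from the start, bounds the norm by the single term $\norm{\Pi\psi^{\otimes n}}^2\ge\norm{\typeclassprojection{n}{Q}\psi^{\otimes n}}^2\polyge 2^{-n\relativeentropy{Q}{P}}$ (using $\typeclassprojection{n}{Q}\le\Pi$), obtains the inequality $E^{\alpha,\theta}(\psi)\ge\sum_j\theta(j)\entropy(Q_j)-\frac{\alpha}{1-\alpha}\relativeentropy{Q}{P}$ for every such $Q$, and only then optimises. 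This sidesteps the I-projection step entirely; your detour is harmless but unnecessary. A small technical caveat in your version: to ensure the constrained minimum $\min_{Q:Q_j=R_j}\relativeentropy{Q}{P}$ stays close to $\relativeentropy{Q^*}{P}$, you should choose the $R_j$ as the marginals of an $n$-type approximation of $Q^*$ itself (so that a feasible $\hat Q$ with the right divergence is guaranteed), rather than approximating each $Q^*_j$ independently.
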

\begin{proof}
We have seen in \cref{cor:rhoalphathetaproperties} that $E^{\alpha,\theta}(\psi)\le\rho^{\alpha,\theta}(\psi)$ holds for every state $\psi$.

Let $\mathcal{X}_j$ be the index set for the basis defining the local measurement $\mathcal{M}_j$ and let $\mathcal{M}=\mathcal{M}_1\otimes\dots\otimes\mathcal{M}_k$ as before. For some $n\in\naturals$ let $Q\in\distributions[n](\supp\mathcal{M}(\vectorstate{\psi}))$. Its marginals $Q_j\in\distributions[n](\mathcal{X}_j)$ determine the type class projections $\typeclassprojection{n}{Q_j}$ acting on the local $n$-copy Hilbert space $\mathcal{H}_j^{\otimes n}$. We will use these local projections to bound $E_{\alpha,\theta}(\psi^{\otimes n})$ from below.

Let $\Pi=\typeclassprojection{n}{Q_1}\otimes\dots\otimes\typeclassprojection{n}{Q_k}$. From \cref{prop:marginalOfFree} and the fact that the $\typeclassprojection{n}{Q_j}$ are diagonal in the tensor power basis (i.e., $\mathcal{M}_j^{\otimes n}(\typeclassprojection{n}{Q_j})=\typeclassprojection{n}{Q_j}$) it follows that the marginals of $\Pi\psi^{\otimes n}$ are also diagonal. Furthermore, by the transitivity of $S_n$ on the type classes, the $j$-th marginal is a uniform distribution on the image of $\Pi_j$. Then we have $\entropy(\Tr_j \Pi\vectorstate{\psi}\Pi)=\log |T_{Q_j}^n|\ge \log(\operatorname{poly}(n))+n\entropy(Q_j)$. Since $\typeclassprojection{n}{Q}\Pi=\typeclassprojection{n}{Q}$, we may estimate the norm of the projection as
\begin{equation}
\norm{\Pi\psi^{\otimes n}}^2\ge\norm{\typeclassprojection{n}{Q}\psi^{\otimes n}}^2=\lvert\typeclass{n}{Q}\rvert 2^{-n(\entropy(Q)+\relativeentropy{Q}{\mathcal{M}(\vectorstate{\psi})})}\polyge 2^{-n\relativeentropy{Q}{\mathcal{M}(\vectorstate{\psi})}}.
\end{equation}

It follows that
\begin{equation}
\begin{split}
E_{\alpha,\theta}(\psi^{\otimes n})
 & \ge \sum_{j=1}^k\theta(j)\entropy((\Pi\vectorstate{\psi}^{\otimes n}\Pi/\norm{\Pi\psi^{\otimes n}}^2)_j)+\frac{\alpha}{1-\alpha}\log\norm{\Pi\psi^{\otimes n}}^2  \\
 & \ge -\log\operatorname{poly}(n)+n\sum_{j=1}^k\theta(j)\entropy(Q_j)-n\frac{\alpha}{1-\alpha}\relativeentropy{Q}{\mathcal{M}(\vectorstate{\psi})}.
\end{split}
\end{equation}
For any $m\in\naturals$, $Q$ is also an $mn$-type, therefore
\begin{equation}
\frac{1}{mn}E_{\alpha,\theta}(\psi^{\otimes mn})\ge -\frac{\log\operatorname{poly}(mn)}{mn}+\sum_{j=1}^k\theta(j)\entropy(Q_j)-\frac{\alpha}{1-\alpha}\relativeentropy{Q}{\mathcal{M}(\vectorstate{\psi})}.
\end{equation}
By taking the limit $m\to\infty$, we obtain
\begin{equation}
E^{\alpha,\theta}(\psi)=\lim_{m\to\infty}\frac{1}{mn}E_{\alpha,\theta}(\psi^{\otimes mn})\ge\sum_{j=1}^k\theta(j)\entropy(Q_j)-\frac{\alpha}{1-\alpha}\relativeentropy{Q}{\mathcal{M}(\vectorstate{\psi})}.
\end{equation}
The inequality holds for every $n\in\naturals$ and $Q\in\distributions[n](\supp\mathcal{M}(\vectorstate{\psi}))$, so by density and continuity we replace the lower bound with its maximum over $Q\in\distributions(\supp\mathcal{M}(\vectorstate{\psi}))$, i.e., $E^{\alpha,\theta}(\psi)\ge\entropy_{\alpha,\theta}(\mathcal{M}(\vectorstate{\psi}))$. To sum up, we have the chain of inequalities
\begin{equation}
\rho^{\alpha,\theta}(\psi)
 \le \entropy_{\alpha,\theta}(\mathcal{M}(\vectorstate{\psi}))
 \le E^{\alpha,\theta}(\psi)
 \le \rho^{\alpha,\theta}(\psi),
\end{equation}
which implies that equality holds everywhere.
\end{proof}

\begin{example}\label{ex:W}
The \W{} state $\ket{W}=\frac{1}{\sqrt{3}}(\ket{100}+\ket{010}+\ket{001})$ has free support with respect to the computational basis. The three marginals have eigenvalues $(2/3,1/3)$, therefore at the vertices $\theta\in\{(1,0,0),(0,1,0),(0,0,1)\}$ we have $E^{\alpha,\theta}(\ket{W})=\rho^{\alpha,\theta}(\ket{W})=h_\alpha(1/3)$. For a general $\theta$ let us abbreviate the entries of a $Q\in\distributions(\supp\mathcal{M}(\vectorstate{\W}))$ as $q_1=Q(1,0,0)$, $q_2=Q(0,1,0)$, $q_3=Q(0,0,1)$. Then
\begin{equation}
\begin{split}
E^{\alpha,\theta}(\W)
 & = \entropy_{\alpha,\theta}(\mathcal{M}(\vectorstate{\W}))  \\
 & = \max_{q_1,q_2,q_3}\left[\theta(1)h(q_1)+\theta(2)h(q_2)+\theta(3)h(q_3)-\frac{\alpha}{1-\alpha}(\log 3-\entropy(q_1,q_2,q_3))\right],
\end{split}
\end{equation}
where the maximum is over triples $q_1,q_2,q_3\ge 0$ such that $q_1+q_2+q_3=1$.

At $\theta=(1/3,1/3,1/3)$, by concavity and invariance under permutations, the (unique) maximizer is the uniform distribution, and the maximum is $h(1/3)$ independently of $\alpha$.

At $\theta=(1/2,1/2,0)$, we can still simplify the problem by using concavity and invariance under exchanging $q_1$ and $q_2$, which implies that the maximum is attained at $q_1=q_2=q$, $q_3=1-2q$ for some $q\in[0,1/2]$, the solution to
\begin{equation}
\frac{\ed}{\ed q}\left[h(q)+\frac{\alpha}{1-\alpha}\entropy(q,q,1-2q)\right]=0
\end{equation}
While this equation cannot be solved for $q$, we observe that for each $q\in[1/3,1/2]$ it is possible to find the unique $\alpha\in[0,1]$ where the equality is satisfied, which gives the graph of $\alpha\to E^{\alpha,(1/2,1/2,0)}(\W)=\rho^{\alpha,(1/2,1/2,0)}(\W)$ in a parametric form (see \cref{fig:Wrholines}).
\end{example}
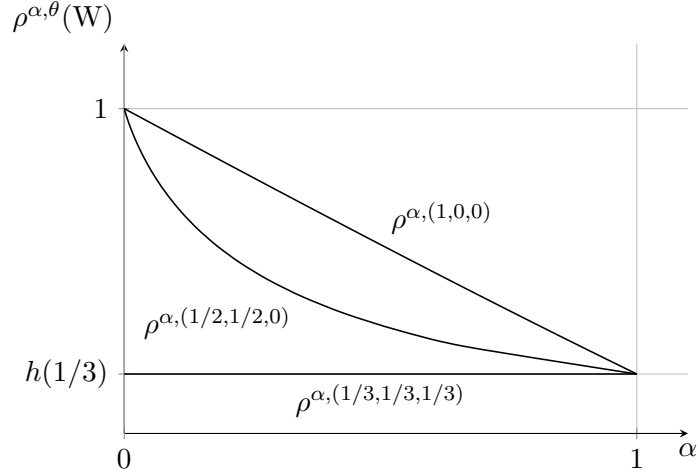
\begin{figure}
\begin{center}
\begin{tikzpicture}
\begin{axis}[spectrumplotstyle,
  ylabel=$\rho^{\alpha,\theta}(\W)$,
  ytick={0.918296, 1},
  yticklabels={$h(1/3)$, $1$},
  ymin=0.9,
  ymax=1.02,
  xtick={1},
]
\addplot[plotline,smooth] table [x=alpha,y=Wmax] {RhoWValues.dat} node[above right,pos=0.5] {$\rho^{\alpha,(1,0,0)}$};
\addplot[plotline, domain=0:1] {0.918296} node[below,pos=0.5] {$\rho^{\alpha,(1/3,1/3,1/3)}$};
\addplot[plotline,smooth] table [x=alpha,y=RhoThetaHalf] {WRhoThetaHalf.dat} node[below left,pos=0.65] {$\rho^{\alpha,(1/2,1/2,0)}$};
\end{axis}
\end{tikzpicture}
\end{center}
\caption{The values of $E^{\alpha,\theta}(\W)=\rho^{\alpha,\theta}(\W)$ as functions of $\alpha$ for three different values of $\theta$. As explained in \cref{ex:W}, $\rho^{\alpha,(1,0,0)}(\W)=h_\alpha(1/3)$ and $\rho^{\alpha,(1/3,1/3,1/3)}(\W)=h(1/3)$ independently of $\alpha$.}\label{fig:Wrholines}
\end{figure}

\begin{example}\label{ex:GHZ}
For a finite set $\mathcal{X}$ and a distribution $P\in\distributions(\mathcal{X})$, the generalized GHZ state is defined as
\begin{equation}
\ket{\GHZ_P}=\sum_{x\in\mathcal{X}}\sqrt{P(x)}\ket{x}\otimes\dots\otimes\ket{x}\in\complexes^\mathcal{X}\otimes\dots\otimes\complexes^\mathcal{X}.
\end{equation}
It is a direct sum of tensor product vectors, therefore the value of \emph{any} logarithmic spectral point of order $\alpha$ on this state is $\entropy_\alpha(P)$.

On the other hand, its support is free in the standard basis, therefore we can also use \cref{thm:freeErho} to evaluate $E^{\alpha,\theta}(\GHZ_P)$. The support of $\mathcal{M}(\vectorstate{\GHZ_P})$ is the diagonal $\setbuild{(x,x,\dots,x)}{x\in\mathcal{X}}$, therefore the marginals of any distribution $Q$ on the support may be identified with $Q$ itself. In particular, $\entropy(Q_j)=\entropy(Q)$ for all $j\in[k]$. This property leads to
\begin{equation}
\begin{split}
E^{\alpha,\theta}(\GHZ_P)
 & = \rho^{\alpha,\theta}(\GHZ_P)  \\
 & = \max_{Q \in \distributions(\supp\mathcal{M}(\GHZ))} \left[H(Q)-\frac{\alpha}{1-\alpha} \relativeentropy{Q}{\mathcal{M}(\GHZ)} \right]  \\
 & = \max_{Q \in \distributions(\mathcal{X})} \left[H(Q)-\frac{\alpha}{1-\alpha} \relativeentropy{Q}{P} \right]  \\
 & = H_\alpha(P),
\end{split}
\end{equation}
where the measurement operation $\mathcal{M}$ is understood in the computational basis and the last step uses \eqref{eq:variationalRenyi}. 

Consider now the special case $\mathcal{X}=\{0,1\}$, $P(0)=P(1)=\frac{1}{2}$. Choosing the $X$ basis $\ket{\pm}=\frac{1}{\sqrt{2}}(\ket{0}\pm\ket{1})$ as the local basis everywhere, the state vector can be expanded as
\begin{equation}
\begin{split}
\GHZ
 & = \frac{1}{\sqrt{2}}(\ket{00\dots 0}+\ket{11\dots 1})  \\
 & = 2^{-\frac{k-1}{2}}(\ket{++++\dots +}+\ket{--++\dots +}+\ket{-+-+\dots +}+\dots),
\end{split}
\end{equation}
where the sum is over all possible $\pm$ strings with an even number of $-$ symbols. The support with respect to this basis is therefore also free. Since the measured distribution $\mathcal{M}_{X}(\vectorstate{\GHZ})$ is uniform and so are its marginals, the optimal $Q$ is uniform as well, idependently of $\alpha$ and $\theta$ (although the maximizer is not unique if $\alpha=0$ and $k$ is even). This again leads to $E^{\alpha,\theta}(\GHZ)=1$, illustrating that the support of a state may be free with respect to different basis choices, but the value of $\entropy_{\alpha,\theta}$ is the same even if the measured distributions are very different.
\end{example}

\section{Transformation rates and examples}\label{sec:examples}

In this section we turn to the application of logarithmic spectral points and in particular the functionals $E^{\alpha,\theta}$ to strong converse exponents for asymptotic LOCC transformations. Recall that for transformations from the pure state $\psi$ to $\varphi$ the rate $R$ is achievable with error exponent $r\ge 0$ if there is a sequence of LOCC transformations such that with input $\psi^{\otimes n}$ the resulting state is $\varphi^{\otimes Rn +o(n)}$ with success probability at least $2^{-rn+o(n)}$. Let $R^*(\psi\to\varphi,r)$ denote the largest achievable rate. Rewriting the characterisation from \cite{jensen2019asymptotic} in terms of logarithmic functionals, we have
\begin{equation}
R^*(\psi\to\varphi,r)=\inf_{\alpha\in[0,1],E}\frac{r\frac{\alpha}{1-\alpha}+E(\psi)}{E(\varphi)},
\end{equation}
where the infimum is over all logarithmic spectral points of order $\alpha$ for all $\alpha\in[0,1]$.

Since we only know a subset of the asymptotic spectrum of LOCC transformations, in general we have the upper bound
\begin{equation}\label{eq:rateupperboundEalphatheta}
R^*(\psi\to\varphi,r)\le\inf_{\substack{\alpha\in[0,1]  \\  \theta\in\distributions([k])}}\frac{r\frac{\alpha}{1-\alpha}+E^{\alpha,\theta}(\psi)}{E^{\alpha,\theta}(\varphi)}.
\end{equation}
When the appearing states have free support, this upper bound can be computed by replacing $E^{\alpha,\theta}$ with $\rho^{\alpha,\theta}$ due to \cref{thm:freeErho}. For general states, we may use the single-letter upper and lower bounds $E_{\alpha,\theta}\le E^{\alpha,\theta}\le\rho^{\alpha,\theta}$ to obtain a potentially larger but easier-to-compute upper bound
\begin{equation}
R^*(\psi\to\varphi,r)\le\inf_{\substack{\alpha\in[0,1]  \\  \theta\in\distributions([k])}}\frac{r\frac{\alpha}{1-\alpha}+\rho^{\alpha,\theta}(\psi)}{E_{\alpha,\theta}(\varphi)}.
\end{equation}

Lower bounds on the rate, i.e., the achievability of certain rate-exponent pairs $(R,r)$ can be proved by exhibiting a protocol for single-shot transformation. In fact, such a protocol gives rise to a one-parameter family of achievable pairs (and any pair that can be obtained from these by decreasing $R$ or increasing $r$) as follows.
\begin{proposition}\label{prop:tradeofffromsingleshot}
Let $\psi,\varphi$ be (normalised) $k$-partite state vectors, $m,n\in\naturals$ and $p\in[0,1]$. Suppose that there exists an LOCC protocol that transforms $\psi^{\otimes n}$ into $\varphi^{\otimes m}$ with probability $p$. Then for every $q\in[0,1]$ the rate-exponent pair
\begin{equation}
(R,r)=\left(q\frac{m}{n},\frac{1}{n}\binaryrelativeentropy{\max\{p,q\}}{p}\right)
\end{equation}
is achievable. In the dual picture, for every $\alpha\in(0,1)$ and logarithmic spectral point $E$ of order $\alpha$ the inequality
\begin{equation}
E(\psi)\ge\frac{1}{n}\frac{\alpha}{1-\alpha}\log\left[p2^{m\frac{1-\alpha}{\alpha}E(\varphi)}+(1-p)\right]
\end{equation}
holds. For $\alpha=0$ this reduces to $E(\psi)\ge\frac{m}{n}E(\varphi)$, while for $\alpha=1$ we obtain $E(\psi)\ge p\frac{m}{n}E(\varphi)$.
\end{proposition}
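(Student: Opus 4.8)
The plan is to prove the two assertions — the primal achievability of the rate--exponent pairs and the dual family of inequalities — by largely independent arguments, and then to check that they are the two faces of the Jensen--Vrana duality. For the \emph{primal} part I would apply the given single-shot protocol independently to each of $N$ disjoint blocks of the input $\ket{\psi}^{\otimes nN}$; the parallel composition, together with the classical coordination needed to record which blocks succeeded, is again an LOCC protocol. The number $S$ of successful blocks is $\mathrm{Binomial}(N,p)$, and on the event $\{S\ge qN\}$ the parties keep $\lceil qN\rceil$ of the successful outputs and locally discard the rest, producing $\ket{\varphi}^{\otimes qmN+o(N)}$; hence the rate $R=q\frac{m}{n}$ is attained with success probability $\Pr[S\ge qN]$. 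By Cram\'er's theorem / the method of types for the binomial, $\Pr[S\ge qN]\polyeq 2^{-N\binaryrelativeentropy{\max\{p,q\}}{p}}$: the exponent is $\binaryrelativeentropy{q}{p}$ when $q>p$, while for $q\le p$ the probability tends to $1$ and the exponent is $\binaryrelativeentropy{p}{p}=0$. As the total input length is $nN$, this is exactly the exponent $r=\frac1n\binaryrelativeentropy{\max\{p,q\}}{p}$; the only nuisance is the $o(N)$ bookkeeping of polynomial prefactors and the rounding of $qN$.

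For the \emph{dual} inequality I would first use $F(\ket{\chi})=2^{(1-\alpha)E(\chi)}$ to check that the claimed bound is equivalent to
\[F(\ket{\psi^{\otimes n}})^{1/\alpha}\ge p\,F(\ket{\varphi^{\otimes m}})^{1/\alpha}+(1-p),\]
and prove this from the defining properties of $\Delta_k$. The structural fact is that $F$ decomposes additively over the leaves of the LOCC protocol. Each round is a local (possibly adaptive) measurement; after a local Stinespring dilation it becomes a measurement in a product orthonormal basis, and expanding the current branch vector in that basis exhibits it as a local direct sum, so the direct-sum additivity of $F$ gives $F(\ket{\eta})=\sum_a F(A_a\ket{\eta})$ for that round's Kraus operators $A_a$. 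Iterating over rounds yields
\[F(\ket{\psi^{\otimes n}})=\sum_i F(\ket{\xi_i}),\]
where the leaf vectors are $\ket{\xi_i}=\sqrt{p_i}\ket{\omega_i}$ with $\ket{\omega_i}$ normalised, $\sum_i p_i=1$, and $\ket{\omega_i}=\ket{\varphi^{\otimes m}}$ on the success leaves $i\in S$, with $\sum_{i\in S}p_i=p$.

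The remaining ingredient is a normalisation lemma: $F(\ket{\omega})\ge1$ for every normalised $\ket{\omega}$. This follows by the same device, since expanding $\ket{\omega}=\sum_a c_a\ket{a}$ in a product basis writes it as a local direct sum of scaled unit tensors, whence $F(\ket{\omega})=\sum_a\lvert c_a\rvert^{2\alpha}\ge\sum_a\lvert c_a\rvert^2=1$ using additivity, $F(\unittensor{1})=1$, the scaling property and $\alpha\le1$. Combining the leaf decomposition with the superadditivity of $t\mapsto t^{1/\alpha}$ for $\alpha\le1$ and the identity $F(\ket{\xi_i})^{1/\alpha}=p_i\,F(\ket{\omega_i})^{1/\alpha}$, I obtain
\[F(\ket{\psi^{\otimes n}})^{1/\alpha}=\Big(\sum_i F(\ket{\xi_i})\Big)^{1/\alpha}\ge\sum_i p_i\,F(\ket{\omega_i})^{1/\alpha}\ge p\,F(\ket{\varphi^{\otimes m}})^{1/\alpha}+(1-p),\]
the last step bounding $F(\ket{\omega_i})^{1/\alpha}\ge1$ on the failure leaves and using $F(\ket{\omega_i})=F(\ket{\varphi^{\otimes m}})$ on the success leaves. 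Multiplicativity ($F(\ket{\varphi^{\otimes m}})=F(\ket{\varphi})^m$) and translating back to $E$ give the stated inequality. The $\alpha\to0$ and $\alpha\to1$ reductions then follow by analysing the right-hand side: as $\alpha\to0$ the term $2^{m\frac{1-\alpha}{\alpha}E(\varphi)}$ dominates the bracket and one recovers $E(\psi)\ge\frac{m}{n}E(\varphi)$, while as $\alpha\to1$ one expands the logarithm to first order in $\frac{1-\alpha}{\alpha}$ (an $\infty\cdot0$ indeterminate form) to get $E(\psi)\ge p\frac{m}{n}E(\varphi)$.

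Finally I would record the consistency of the two halves: inserting the dual inequality into the Jensen--Vrana rate formula and invoking the Legendre duality between the binomial log-moment generating function $s\mapsto\log(p2^s+1-p)$ and the binary relative entropy $\binaryrelativeentropy{\cdot}{p}$ reproduces exactly the achievable pairs $\bigl(q\frac{m}{n},\frac1n\binaryrelativeentropy{\max\{p,q\}}{p}\bigr)$. The main obstacle is the structural step of the dual part: justifying, for a general multi-round adaptive LOCC protocol, that a local dilation turns each round into a product-basis measurement so that additivity applies at each round and $F$ decomposes over the leaves as $F(\ket{\psi^{\otimes n}})=\sum_i F(\ket{\xi_i})$. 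The short normalisation lemma supplies the essential $+(1-p)$ term, while the binomial large-deviation estimate and the two limiting cases are routine.
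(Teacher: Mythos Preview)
Your primal argument is correct and is exactly what the paper does: run the protocol on $N$ independent blocks and use the binomial large-deviation estimate $\Pr[S\ge qN]\polyeq 2^{-N\binaryrelativeentropy{\max\{p,q\}}{p}}$.

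The dual argument, however, contains a genuine error. You invoke the direct-sum additivity axiom $F(\psi\oplus\varphi)=F(\psi)+F(\varphi)$ to conclude $F(\ket{\eta})=\sum_a F(A_a\ket{\eta})$ for a single round, and likewise $F(\ket{\omega})=\sum_a\lvert c_a\rvert^{2\alpha}$ in your normalisation lemma. Neither decomposition is a direct sum in the required sense: a direct sum needs the summands to live in orthogonal subspaces at \emph{every} party, whereas after a Stinespring dilation at party $j$ the branches $(A_a\ket{\eta})\otimes\ket{a}$ are orthogonal only at party $j$. A concrete counterexample to your normalisation formula is the product state $\ket{\omega}=\ket{0}\otimes\tfrac{1}{\sqrt{2}}(\ket{0}+\ket{1})$, for which $F(\ket{\omega})=1$ but $\sum_a\lvert c_a\rvert^{2\alpha}=2^{1-\alpha}>1$.

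The fix is to use axiom (v) instead of (iv). The ancilla projections $\Pi_a=I\otimes\ketbra{a}{a}$ are local, so iterating (v) gives directly
\[
F(\ket{\eta})^{1/\alpha}\ge\sum_a F(A_a\ket{\eta})^{1/\alpha},
\]
and hence, over all rounds, $F(\ket{\psi^{\otimes n}})^{1/\alpha}\ge\sum_i p_i\,F(\ket{\omega_i})^{1/\alpha}$. This is precisely the ensemble monotonicity that the paper simply quotes from \cite{jensen2019asymptotic}. Your intermediate step invoking the superadditivity of $t\mapsto t^{1/\alpha}$ then becomes superfluous. The same correction works for the normalisation lemma: axiom (v) with rank-one local projections onto product basis vectors yields $F(\ket{\omega})^{1/\alpha}\ge\sum_a\lvert c_a\rvert^2=1$. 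With these two substitutions (axiom (v) for axiom (iv)) the remainder of your argument, including the $\alpha\to 0$ and $\alpha\to 1$ limits, is sound.
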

\begin{proof}
Let $N\in\naturals$ and consider running the protocol $N$ times in parallel, on $Nn$ copies of $\psi$. The probability that at least $qN$ runs are successful is $\sum_{t=\lceil qN\rceil}^N\binom{N}{t}p^t(1-p)^{N-t}\polyeq 2^{-N\binaryrelativeentropy{\max\{p,q\}}{p}}$ (the asymptotic estimate follows from \cite[Lemma 2.6]{csiszar2011information} and that the number of terms is at most $N+1$), and in this case it is always possible to obtain $\lceil qN\rceil m$ copies of $\varphi$. This corresponds to the stated rate and error exponent.

For the second statement we note that the spectral point $F:=2^{(1-\alpha)E}$ is monotone also in the sense that if an LOCC protocol transforms $\ket{\psi}$ to the ensemble $(p_i,\ket{\psi_i})_{i\in I}$ then $F(\ket{\psi})^{1/\alpha}\ge\sum_{i\in I}p_iF(\ket{\psi_i})^{1/\alpha}$ (see \cite{jensen2019asymptotic} for details) and that in the worst case the given protocol results in a separable state with probability $1-p$.
\end{proof}
We will apply \cref{prop:tradeofffromsingleshot} to simple protocols in order to obtain a benchmark for the upper bound \eqref{eq:rateupperboundEalphatheta}.

\subsection{Transformations between a GHZ state and an arbitrary state}

Let $\psi$ be a state and consider transformations between $\psi$ and \GHZ{} states. The value of a logarithmic spectral point on $\GHZ_P$ only depends on the order, therefore the set of all pairs $(\alpha,E(\psi))$ contains all the information about the trade-off relations for transformations between $\psi$ and weighted \GHZ{} states (in fact, also between arbitrary weighted direct sums of tensor powers of $\psi$). For brevity we will refer to this set as the asymptotic spectrum of $\psi$ (more precisely, it is essentially the asymptotic spectrum of the subsemiring generated by $\psi$ and product vectors of arbitrary norm).

The precise determination of the set of pairs $(\alpha,E(\psi))$ is a difficult problem, which motivates studying inner and outer approximations, which translate to outer and inner approximations of the achievable rate region. Specifically, a point $(\alpha,E(\psi))$ in the asymptotic spectrum of $\psi$ certifies that any achievable pair $(r,R)$ for asymptotically transforming $\psi$ to \GHZ{} states must satisfy $R\le r\frac{\alpha}{1-\alpha}+E(\psi)$, i.e., excludes a half-plane from the set of rate-exponent pairs. Dually, if we show that a pair $(r,R)$ is achievable, then the region $\setbuild{(\alpha,e)}{e<R-r\frac{\alpha}{1-\alpha}}$ is not part of the asymptotic spectrum of $\psi$.

The image of the functionals $E^{\alpha,\theta}$ provides an inner bound of the asymptotic spectrum of $\psi$. By continuity in the parameters, for each $\alpha\in[0,1]$ the intersection with the corresponding coordinate line is a line segment, therefore we can completely describe the image by determining the minimum and the maximum over $\theta$ as functions of $\alpha$.

The special role of the minimum and the maximum in connection with transformations between $\psi$ and a \GHZ{} state can also be understood via the rate upper bound \eqref{eq:rateupperboundEalphatheta}. If the output is a generalised \GHZ{} state, it specialises to
\begin{equation}\label{eq:rateToGHZ}
\begin{split}
R^*(\psi\to\GHZ_P,r)
 & \le \inf_{\alpha,\theta} \frac{r\frac{\alpha}{1-\alpha}+E^{\alpha,\theta}(\ket{\psi})}{\entropy_\alpha(P)}  \\
 & = \inf_{\alpha\in[0,1]} \frac{r\frac{\alpha}{1-\alpha}+\min_{\theta\in\distributions([k])} E^{\alpha,\theta}(\ket{\psi})}{\entropy_\alpha(P)}.
\end{split}
\end{equation}
Conversely, if we transform a generalized \GHZ{} into an arbitrary state $\varphi$, we have 
\begin{equation}\label{eq:rateFromGHZ}
\begin{split}
R^*(\GHZ_P\to\varphi,r)
 & \le \inf_{\alpha,\theta} \frac{r\frac{\alpha}{1-\alpha}+\entropy_\alpha (P)}{E^{\alpha,\theta}(\ket{\varphi})}  \\
 & = \inf_{\alpha\in[0,1]} \frac{r\frac{\alpha}{1-\alpha}+\entropy_\alpha (P)}{\max_{\theta\in\distributions([k])} E^{\alpha,\theta}(\ket{\varphi})}.
\end{split}
\end{equation}
In both cases the optimisation over $\theta$ can be performed on $E^{\alpha,\theta}(\psi)$ separately, and the bound depends on the minimum (respectively maximum) over $\theta$ as a function of $\alpha$.

From \eqref{eq:logupperLOCC} we see that $\theta\mapsto E^{\alpha,\theta}(\psi)$ is the pointwise supremum of a family of affine functions, therefore it is convex. It follows that the maximum over $\theta$ is attained at an extreme point, i.e., one of the vertices of the simplex $\distributions([k])$. These are precisely the R\'enyi entanglement entropies between the corresponding subsystem and the rest of the system, i.e.,
\begin{equation}
\max_{\theta\in\distributions([k])}E^{\alpha,\theta}(\psi)=\max_{j\in[k]}\entropy(\Tr_j\vectorstate{\psi}).
\end{equation}
Thus \eqref{eq:rateFromGHZ} becomes
\begin{equation}
R^*(\GHZ_P\to\varphi,r) \le \inf_{\alpha\in[0,1]} \frac{r\frac{\alpha}{1-\alpha}+\entropy_\alpha (P)}{\max_{\theta\in\distributions([k])} E^{\alpha,\theta}(\ket{\varphi})},
\end{equation}
which is also what follows from bipartite bounds.

In contrast, \eqref{eq:rateToGHZ} in general improves on the bounds that we may obtain by considering the entanglement across bipartitions. If $\psi$ has free support, then $E^{\alpha,\theta}(\psi)=\rho^{\alpha,\theta}(\psi)$ is given by the maximum of a concave function, affine in the parameter $\theta$, therefore the computation of the minimal value for fixed $\alpha$ can be simplified using von Neumann's minimax theorem \cite{vneumann1928theorie}:
\begin{equation}\label{eq:minimaxFindMintheta}
\begin{split}
    \min_{\theta\in\distributions([k])}\entropy_{\alpha,\theta}(P)
    & = \min_{\theta\in\distributions([k])}\max_{Q\in\distributions(\supp P)}\left[\sum_{j=1}^k\theta(j)\entropy(Q_j)-\frac{\alpha}{1-\alpha}\relativeentropy{Q}{P}\right]\\
    & = \max_{Q\in\distributions(\supp P)}\min_{\theta\in\distributions([k])}\left[\sum_{j=1}^k\theta(j)\entropy(Q_j)-\frac{\alpha}{1-\alpha}\relativeentropy{Q}{P}\right]\\
    & = \max_{Q\in\distributions(\supp P)}\left[\min_j\entropy(Q_{j})-\frac{\alpha}{1-\alpha}\relativeentropy{Q}{P}\right].
\end{split}
\end{equation}

If the state $\psi$ is invariant under the permutations of the subsystems, then $E^{\alpha,\theta}(\psi)$ is also invariant under the permutations of $\theta$. Together with convexity, it implies that $\theta\mapsto E^{\alpha,\theta}(\psi)$ has a minumum at the uniform weight $\theta=(1/k,\dots,1/k)$, and a maximum at the Dirac weights $\theta=\delta_j$.

\begin{example}
We evaluate $\rho^{\alpha,\theta}$ for $\theta=(\frac{1}{3},\frac{1}{3},\frac{1}{3})$ on the normalised Coppersmith--Winograd tensor
\begin{equation}
    \CW_q=\frac{1}{\sqrt{3q}} \left(  \sum_{i=1}^q \ket{0ii} + \ket{i0i} + \ket{ii0}  \right).
\end{equation}
From \cite[Example 4.22.]{christandl2023universal} we know that the entropy term of $\rho^{\alpha,\theta}$ is maximized for the marginal probability distributions $(\frac{1}{3},\frac{2}{3}\frac{1}{q},\dots,\frac{2}{3}\frac{1}{q})$ for each subsystem, which originates from the uniform distribution over the support.  This distribution also minimizes the divergence because the elements of the \CW{} tensor in the defining basis also form the uniform distribution over the support. Therefore the term containing the divergence is zero, and we get the same value $\rho^{\alpha,\theta}(\CW_q)=\rho^{0,\theta}(\CW_q)=\rho^{\theta}(\CW_q)$, independently of $\alpha$:
\begin{equation}
\min_{\theta\in\distributions([k])}\rho^{\alpha,\theta}(\CW_q)
 = \rho^{(\frac{1}{3},\frac{1}{3},\frac{1}{3})}(\CW_q)
 = \frac{3}{2}\log q + h(\frac{1}{3})
\end{equation}
By the permutation symmetry, this is also the minimum of $\rho^{\alpha,\theta}(\CW_q)=E^{\alpha,\theta}(\CW_q)$ over $\theta$ for every $\alpha\in[0,1]$. The maximum is equal to the R\'enyi entropy of the marginal distribution:
\begin{equation}
\max_{\theta\in\distributions([k])}\rho^{\alpha,\theta}(\CW_q)
 = \rho^{\alpha,(1,0,0)}(\CW_q)
 = \frac{1}{1-\alpha}\log\left[\left(\frac{1}{3}\right)^{\alpha}+q\left(\frac{2}{3q}\right)^\alpha\right].
\end{equation}
\end{example}

If $\psi$ has free support but it is not permutation-invariant, then in general the minimising $\theta$ depends on $\alpha$ and one should use \eqref{eq:minimaxFindMintheta} directly to find the minimum. The expression inside the maximisation is the minimum of $k$ explicitly given concave functions, therefore it is also concave. The maximum can be efficiently computed numerically by convex optimization \cite{agrawal2018rewriting}. In the examples we used the Python library \texttt{cvxpy} \cite{diamond2016cvxpy} for computing $\rho^{\alpha,\theta}(\psi)$ and $\min_\theta\rho^{\alpha,\theta}(\psi)$.

\begin{example}
The state
\begin{equation}\label{eq:interestingExampleState}
\ket{\psi}=\sqrt{p}\ket{000}+\sqrt{\frac{1}{2}-p}\ket{011}+\sqrt{\frac{1}{2}-p}\ket{101}+\sqrt{p}\ket{112}
\end{equation}
has free support and for $p\in(0,p^*)$ where $p^*=0.113546\ldots$ is the nontrivial solution of $h(2x)+2x=1$, the third marginal has the highest max-entropy but the lowest von Neumann entropy. This implies that the maximum of $\rho^{\alpha,\theta}$ is attained at $\theta=(0,0,1)$ up to some $\alpha^*\in(0,1)$, while from that point it is attained at $\theta=(1,0,0)$ and $\theta=(0,1,0)$. This illustrates that the maximising weight may also depend on $\alpha$. \Cref{fig:InterestingExampleState} shows the inner bound on the asymptotic spectrum given by $E^{\alpha,\theta}(\psi)=\rho^{\alpha,\theta}(\psi)$ for $p=0.05$, with the minimum obtained numerically.
\end{example}
\begin{figure}
\begin{center}
\begin{tikzpicture}
\begin{axis}[spectrumplotstyle,
  ylabel=$\rho^{\alpha,\theta}(\psi)$,
  ytick={0.5689955935892812,1,1.584963},
  yticklabels={$h(2p)+2p$, $1$, $\log 3$},
  ymin=0,
  ymax=1.7,
  xtick={0,0.2,...,1},
  extra x ticks={0,0.46482},
  extra x tick labels={$0$,\raisebox{0pt}[3ex]{$\,\,\alpha^*$}},
  ]
\addplot[exampleline,name path=RhoTheta1] table [x=alpha,y=RhoTheta1] {Example.dat};
\addplot[exampleline,name path=RhoTheta3] table [x=alpha,y=RhoTheta3] {Example.dat};
\addplot[plotline,name path=MAX] table [x=alpha,y=RhoMax] {Example.dat};
\addplot[plotlinewithmarks,name path=MIN] table [x=alpha,y=RhoMin] {Example.dat};
\addplot[partofspectrumregion] fill between [of=MIN and MAX,soft clip={domain=0:1}];
\end{axis}
\end{tikzpicture}
\end{center}
\caption{Inner bound on the asymptotic spectrum of the state in \eqref{eq:interestingExampleState}. The lined region is the image of the functionals $\rho^{\alpha,\theta}$ evaluated at $\ket{\psi}=\sqrt{p}\ket{000}+\sqrt{\frac{1}{2}-p}\ket{011}+\sqrt{\frac{1}{2}-p}\ket{101}+\sqrt{p}\ket{112}$, with $p=0.05$. For each $\alpha\in[0,1]$ the image is an interval. The minimal values are found numerically, while the maximal value is $\max\{1,\entropy_\alpha(\Tr_3\vectorstate{\psi})\}$, attained at $\theta=(0,0,1)$ when $0\le\alpha\le\alpha^*= 0.46482\ldots$ and at $\theta=(1,0,0)$ and $\theta=(0,1,0)$ when $\alpha\ge\alpha^*$.}\label{fig:InterestingExampleState}
\end{figure}
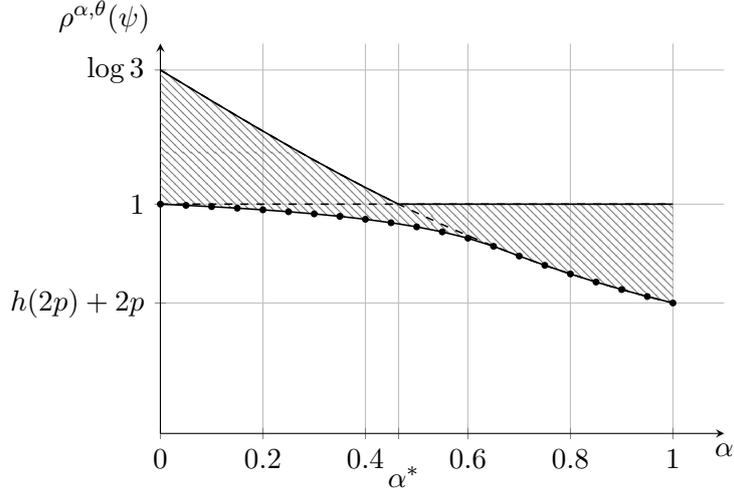

In general the function to be minimised in \eqref{eq:rateToGHZ} is not convex in $\alpha$. However, in the special case when the input state is free and the target is a uniform \GHZ{} state (i.e., the denominator is constant), the bound can be written as convex optimisation. To see this, let us introduce the new variable $t=\frac{\alpha}{1-\alpha}\in[0,\infty)$. Assuming without loss of generality that the target is a two-level \GHZ{} state, the upper bound becomes
\begin{equation}
\begin{split}
R^*(\psi\to\GHZ,r)
 & \le \inf_{\alpha,\theta}\left[r\frac{\alpha}{1-\alpha}+\rho^{\alpha,\theta}(\psi)\right]  \\
 & = \inf_{\alpha,\theta}\left[r\frac{\alpha}{1-\alpha}+\max_{P\in\distributions(\supp\mathcal{M}(\vectorstate{\psi}))}\left(\sum_{j=1}^k\theta(j)\entropy(P_j)-\frac{\alpha}{1-\alpha}\relativeentropy{P}{\mathcal{M}(\vectorstate{\psi})}\right)\right]  \\
 & = \inf_{t\in[0,\infty),\theta}\max_{P\in\distributions(\supp\mathcal{M}(\vectorstate{\psi}))}\left[rt+\sum_{j=1}^k\theta(j)\entropy(P_j)-t\relativeentropy{P}{\mathcal{M}(\vectorstate{\psi})}\right]  \\
 & = \max_{P\in\distributions(\supp\mathcal{M}(\vectorstate{\psi}))}\left[\inf_{t\in[0,\infty)}(r-\relativeentropy{P}{\mathcal{M}(\vectorstate{\psi})})t+\min_\theta\sum_{j=1}^k\theta(j)\entropy(P_j)\right]  \\
 & = \max_{\substack{P\in\distributions(\supp\mathcal{M}(\vectorstate{\psi}))  \\  \relativeentropy{P}{\mathcal{M}(\vectorstate{\psi})}\le r}}\min_j\entropy(P_j).
\end{split}
\end{equation}
In the second equality we have replaced $\alpha$ with $t$ so that the objective function is affine in the new variable (as well as in $\theta$), and concave in $P$, the third equality is due to Sion's minimax theorem \cite{sion1958general}, and the last equality uses that the infimum over $t$ is $0$ if $\relativeentropy{P}{\mathcal{M}(\vectorstate{\psi})}\le r$ (which is satisfied e.g. for $P=\mathcal{M}(\vectorstate{\psi})$) and $-\infty$ otherwise. We note that in the bipartite case the functionals $\rho^{\alpha,\theta}$ exhaust the asymptotic spectrum, therefore the inequality holds with equality. In this case the last line agrees with the rate formula given in \cite[Corollary 11]{hayashi2002error}.

We now turn to outer bounds on the asymptotic spectrum of $\psi$. A simple general upper bound on the maximum of $E(\psi)$ for a state vector $\psi\in\mathcal{H}=\mathcal{H}_1\otimes\dots\otimes\mathcal{H}_k$ is $\log\dim\mathcal{H}-\max_j\log\dim\mathcal{H}_j$. To see this, consider the following protocol, without loss of generality assuming that $\mathcal{H}_1$ has the largest dimension. With the notation $d_j=\dim\mathcal{H}_j$, a $d_2d_3\dots d_k$-level \GHZ{} state can be deterministically transformed into a product of $d_j$-dimensional maximally entangled pairs between subsystems $1$ and $j$ for all $j=2,3,\dots,k$. These can in turn be used to teleport the respective parts of a local copy of $\psi$, prepared at the first lab, to the remaining subsystems.

If $\psi$ is not separable across any bipartite cut, then we may find lower bounds on $E(\psi)$ in a similar way by studying transformations of (possibly several copies of) $\psi$ into \GHZ{} and determining achievable pairs.
\begin{example}\label{ex:WtoGHZ}
Consider the state $\ket{W}=\frac{1}{\sqrt{3}}(\ket{100}+\ket{010}+\ket{001})$. Since $\dim\mathcal{H}_A=\dim\mathcal{H}_B=\dim\mathcal{H}_C=2$, $E(\W)\le 3-1=2$ for any logarithmic spectral point $E$.

If either Alice, Bob, or Charlie performs a measurement in the computational basis, then with probability $2/3$ the result is $0$ and the post-measurement state is a maximally entangled pair between the other two subsystems. It follows that two \W{} states can be transformed into $\EPR_{AB}\otimes\EPR_{AC}$ with probability $4/9$, which in turn can be transformed deterministically (via teleportation) into a \GHZ{} state. By \cref{prop:tradeofffromsingleshot}, this implies that if $E$ is any logarithmic spectral point of order $\alpha$, then
\begin{equation}
E(\W)\ge\frac{1}{2}\frac{\alpha}{1-\alpha}\log\left[\frac{4}{9}2^{\frac{1-\alpha}{\alpha}}+\frac{5}{9}\right].
\end{equation}

On the other hand, from \cref{ex:W} we know that there exist logarithmic spectral points of order $\alpha$ which evaluate to $h(1/3)$ and $h_\alpha(1/3)$ on the \W{} state ($E^{\alpha,(1/3,1/3,1/3)}$ and $E^{\alpha,(1,0,0)}$, respectively). By continuity, $\setbuild{(\alpha,e)}{\alpha\in[0,1],e\in[h(1/3),h_\alpha(1/3)]}$ is part of the spectrum.

The inner and outer bounds on the asymptotic spectrum of the \W{} state and on the trade-off curve for transformations from \W{} to \GHZ{} are depicted in \cref{fig:WtoGHZrates,fig:RhoW}.    
\end{example}
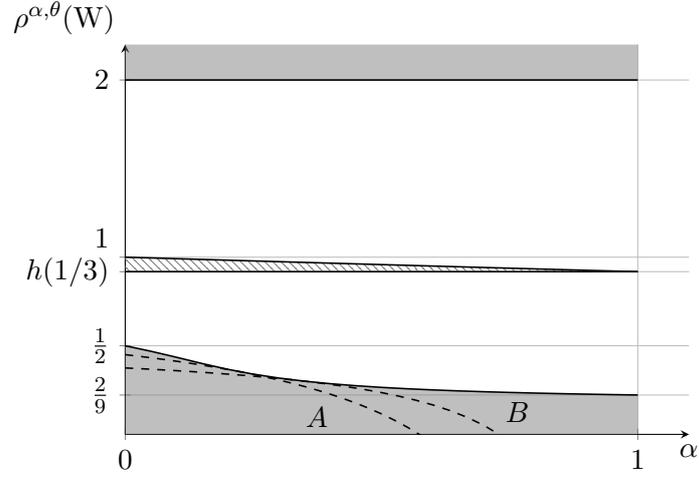
\begin{figure}
\begin{center}
\begin{tikzpicture}
\begin{axis}[spectrumplotstyle,
  ylabel=$\rho^{\alpha,\theta}(\W)$,
  ytick={0.222222, 0.5, 0.918296, 2},
  yticklabels={$\frac{2}{9}$, $\frac{1}{2}$, $h(1/3)$, $2$},
  ymin=0,
  ymax=2.2,
  xtick={1},
  extra y ticks={1},
  extra y tick labels={$1$},
  extra y tick style={
    grid=major,
    tick label style={anchor=south east}
  },
]
\path[name path=TOP] (0,\pgfkeysvalueof{/pgfplots/ymax})--(1,\pgfkeysvalueof{/pgfplots/ymax});
\path[name path=BOTTOM] (0,\pgfkeysvalueof{/pgfplots/ymin})--(1,\pgfkeysvalueof{/pgfplots/ymin});
\addplot[plotline,name path=MAX] table [x=alpha,y=Wmax] {RhoWValues.dat};
\addplot[plotline,name path=MIN, domain=0:1] {0.918296};
\addplot[plotline, name path=UBOUND] table [x=alpha,y=upperbound] {Wspecbounds.dat};
\addplot[plotline, name path=LBOUND] table [x=alpha,y=lowerbound] {Wspecbounds.dat};
\addplot[partofspectrumregion] fill between [of=MIN and MAX,soft clip={domain=0:1}];
\addplot[notpartofspectrumregion] fill between [of=UBOUND and TOP];
\addplot[notpartofspectrumregion] fill between [of=LBOUND and BOTTOM];
\addplot[exampleline, domain=0:0.6] {0.45-0.334368299182148*x/(1-x)} node[below left,pos=0.6] {$A$};
\addplot[exampleline, domain=0:0.8] {0.375-0.13908242663067*x/(1-x)} node[above right,pos=0.8] {$B$};
\end{axis}
\end{tikzpicture}
\end{center}
\caption{\label{fig:RhoW} Inner and outer bounds on the asymptotic spectrum of the \W{} state. The lined region represents the image of the functionals $\rho^{\alpha,\theta}$, which is a part of the (unknown) asymptotic spectrum. The dashed lines $A$ and $B$ are lower bounds corresponding to the similarly named points in \cref{fig:WtoGHZrates}.
}
\end{figure}

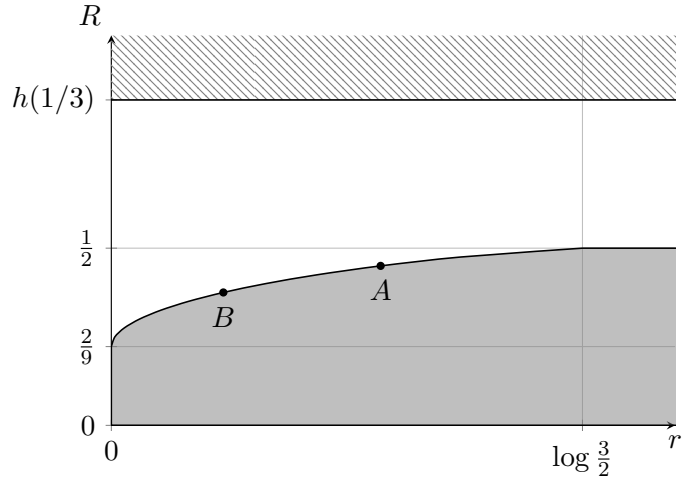
\begin{figure}
\begin{center}
\begin{tikzpicture}
\begin{axis}[
  rateplotstyle,
  ytick={0.5,0.9182958340544896},
  ytick={0.2222222222222222,0.5,0.9182958340544896},
  yticklabels={$\frac{2}{9}$,$\frac{1}{2}$,$h(1/3)$},
  ymin=0,
  ymax=1.1,
  extra y ticks={0},
  xtick={0,0.584962500721156},
  xticklabels={0,$\log\frac{3}{2}$},
]
\path[name path=TOP] (\pgfkeysvalueof{/pgfplots/xmin},\pgfkeysvalueof{/pgfplots/ymax})--(\pgfkeysvalueof{/pgfplots/xmax},\pgfkeysvalueof{/pgfplots/ymax});
\path[name path=BOTTOM] (\pgfkeysvalueof{/pgfplots/xmin},\pgfkeysvalueof{/pgfplots/ymin})--(\pgfkeysvalueof{/pgfplots/xmax},\pgfkeysvalueof{/pgfplots/ymin});
\addplot[plotline,name path=UBOUND] {0.9182958340544896};
\addplot[plotline,name path=LBOUND,smooth] table [x=r,y=R] {WtoGHZrates.dat} -- (1,0.5) \closedcycle;
\addplot[notachievableregion] fill between [of=UBOUND and TOP];
\addplot[achievableregion] fill between [of=LBOUND and BOTTOM];
\node[examplepoint] at (0.334368299182148,0.45) [label=below:$A$] {};
\node[examplepoint] at (0.13908242663067,0.375) [label=below:$B$] {};
\end{axis}
\end{tikzpicture}
\end{center}
\caption{\label{fig:WtoGHZrates} The rate-error exponent plane for transforming \W{} states into \GHZ{} states. The points in the lined region ($R>h(1/3)$) are not achievable. The shaded region, which corresponds to a protocol transforming two \W{} states into one \GHZ{} state with probability $4/9$ (see \cref{ex:WtoGHZ}), is part of the achievable region. The achievability of the points $A$ and $B$ imply outer bounds on the spectrum of the \W{} state (see \cref{fig:RhoW}).
}
\end{figure}

It is known that the $\alpha=0$ part of the asymptotic spectrum of the \W{} state is the interval $[h(1/3),1]$ \cite[Theorem 6.7 with $A=T^2$]{strassen1991degeneration}. While this part stays the same if we move within an SLOCC class, the $\alpha>0$ part does change in general, and so does the image of $E^{\alpha,\theta}$. For example, the weighted \W{} state
\begin{equation}\label{eq:Wex}
\ket{\W_{\textnormal{example}}}\coloneqq\sqrt{4/5}\ket{100}+\sqrt{1/10}\ket{010}+\sqrt{1/10}\ket{001}
\end{equation}
is in the \W{} class. The maximum is given by $\max_\theta E^{\alpha,\theta}(\W_{\textnormal{example}})=h_\alpha(4/5)$, and the minimum can be found numerically using \eqref{eq:minimaxFindMintheta} (\cref{fig:RhoWw}).
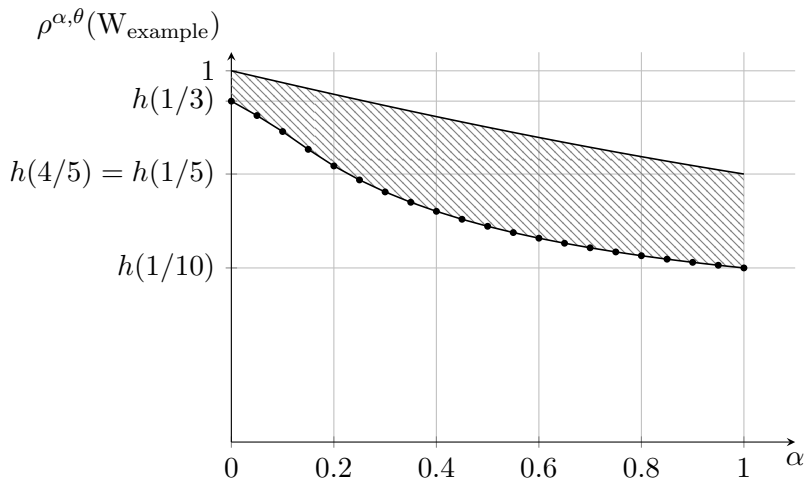
\begin{figure}
\begin{center}
\begin{tikzpicture}
\begin{axis}[spectrumplotstyle, ylabel=$\rho^{\alpha,\theta}(\W_{\textnormal{example}})$, ytick={0.468996,0.721928, 0.918296, 1}, yticklabels={$h(1/10)$, 
 $h(4/5)=h(1/5)$, $h(1/3)$, $1$}, ymin=0, ymax=1.05, xtick={0,0.2,...,1}]
\addplot[plotline,name path=MAX] table [x=alpha,y=Wwmax] {RhoWwValues.dat};
\addplot[plotlinewithmarks,name path=MIN] table [x=alpha,y=Wwmin] {RhoWwValues.dat};
\addplot[partofspectrumregion] fill between [of=MIN and MAX,soft clip={domain=0:1}];
\end{axis}
\end{tikzpicture}
\end{center}
\caption{ \label{fig:RhoWw} Inner bound on the asymptotic spectrum of a weighted \W{} state. The lined region is the image of the functionals $\rho^{\alpha,\theta}$ evaluated at $\ket{\W_{\textnormal{example}}}=\sqrt{4/5}\ket{100}+\sqrt{1/10}\ket{010}+\sqrt{1/10}\ket{001}$. For each $\alpha\in[0,1]$ the image is an interval. The minimal values are found numerically, while the maximal value is $h_\alpha(1/5)$, attained at $\theta=(1,0,0)$. For $\alpha=0$ the minimum is $h(1/3)$, attained at $\theta=(1/3,1/3,1/3)$, while for $\alpha=1$ it is $h(1/10)$, attained at $\theta=(0,x,1-x)$ for all $x\in[0,1]$.
}
\end{figure}

\subsection{Transforming W states to EPR pairs}

Recall that when $\psi$ and $\varphi$ both have free support, we have the rate bound
\begin{equation}
R^*(\psi\to\varphi,r)\le\inf_{\substack{\alpha\in[0,1]  \\  \theta\in\distributions([k])}}\frac{r\frac{\alpha}{1-\alpha}+\rho^{\alpha,\theta}(\psi)}{\rho^{\alpha,\theta}(\varphi)}.
\end{equation}
For each $\alpha,\theta$ the value can be found numerically by convex optimisation, but the expression inside the infimum is in general not convex. In practice we find that the infimum can be found using global optimization algorithms. In the example below we used the Python library \texttt{scipy.optimize} \cite{2020SciPy-NMeth} to find the upper bound on the strong converse rate.

As an example we choose transformations from (uniform) \W{} states to \EPR{} states shared between the first two subsystems (Alice and Bob). The state $\ket{\EPR_{AB}}=\frac{1}{\sqrt{2}}(\ket{000}+\ket{110})$ has free support with respect to the computational basis, and the measured distribution $\mathcal{M}(\ket{\EPR_{AB}})$ has uniform marginals, therefore this is where the maximum in $\entropy_{\alpha,\theta}(\mathcal{M}(\ket{\EPR_{AB}}))$ is attained. It follows that $E^{\alpha,\theta}(\EPR_{AB})=\rho^{\alpha,\theta}(\EPR_{AB})=\theta(1)+\theta(2)$, therefore the upper bound is
\begin{equation}\label{eq:WtoEPRupperbound}
R^*(\W\to\EPR_{AB},r)\le\inf_{\substack{\alpha\in[0,1]  \\  \theta\in\distributions([k])}}\frac{r\frac{\alpha}{1-\alpha}+\rho^{\alpha,\theta}(\W)}{\theta(1)+\theta(2)}.
\end{equation}
The result of the numerical optimisation is shown in \cref{fig:WtoEPRrates} (solid line with marks). We observe that for sufficiently large error exponents $r$, the upper bound is equal to $1$ within the precision of the numerical computation. Note that $1$ is the maximal rate for asymptotic SLOCC transformations, therefore it is an upper bound for every error exponent.

To find a lower bound, we employ \cref{prop:tradeofffromsingleshot} with the following simple protocol: Charlie performs a measurement in the computational basis, resulting in a maximally entangled state between Alice and Bob with probability $\frac{2}{3}$ and a separable state otherwise. It follows that for every $q\in[2/3,1]$ the pair
\begin{equation}\label{eq:WtoEPRlowerbound}
(R,r)=(q,\binaryrelativeentropy{q}{2/3})
\end{equation}
is achievable. In particular, $1$ is an achievable rate with strong converse exponent $r=\log\frac{3}{2}=0.58496\ldots$.

This raises the question whether the rate is strictly below $1$ when $r<\log\frac{3}{2}$, which is not settled by our numerical results. We show that this is indeed the case and in fact the upper bound \eqref{eq:WtoEPRupperbound} is strictly less than $1$. Based on the numerical computation we find that the optimal weights are $\theta=(1/2,1/2,0)$ for any $r$, therefore we start with the bound
\begin{equation}
R^*(\W\to\EPR_{AB},r)\le\inf_{\alpha\in[0,1]}\left(r\frac{\alpha}{1-\alpha}+\rho^{\alpha,(1/2,1/2,0)}(\W)\right).
\end{equation}
At $\alpha=1$ the expression inside the infimum is equal to $1$, therefore we can show that the infimum is strictly less than $1$ by showing that the right derivative (as a function of $\alpha$) is negative. This can be seen by \cref{cor:difHalphathetaByalphaAtZero} with $P=(1/3,1/3,1/3)$ and $Q_0=(1/2,1/2,0)$ as follows:
\begin{equation}
\left.\frac{\ed}{\ed\alpha}\left[r\frac{\alpha}{1-\alpha}+\rho^{\alpha,(1/2,1/2,0)}(\W)\right]\right|_{\alpha=0}
 = r-\relativeentropy{Q_0}{P} 
 = r-\log\frac{3}{2}.
\end{equation}
This means that if $r<\log 3/2$, the infimum cannot be attained at $\alpha=0$. It also follows that tangent line of the upper bound curve on the $(r,R)$ plane corresponding to any $r<\log 3/2$ has strictly positive slope.
\begin{figure}
\begin{center}
\begin{tikzpicture}
\begin{axis}[
  rateplotstyle,
  ytick={0.666666666666,0.9182958340544896,1},
  yticklabels={$\frac{2}{3}$,$h(1/3)$,$1$},
  ymin=0.6,
  ymax=1.1,
  xtick={0,0.584962500721156},
  xticklabels={0,$\log\frac{3}{2}$},
  extra y ticks = {0.6},
]
\path[name path=TOP] (\pgfkeysvalueof{/pgfplots/xmin},\pgfkeysvalueof{/pgfplots/ymax})--(\pgfkeysvalueof{/pgfplots/xmax},\pgfkeysvalueof{/pgfplots/ymax});
\path[name path=BOTTOM] (\pgfkeysvalueof{/pgfplots/xmin},\pgfkeysvalueof{/pgfplots/ymin})--(\pgfkeysvalueof{/pgfplots/xmax},\pgfkeysvalueof{/pgfplots/ymin});
\addplot[plotlinewithmarks,smooth,name path=UBOUND] table [x=r,y=R] {rAndRW.dat} -- (1,1);
\addplot[notachievableregion] fill between [of=UBOUND and TOP];
\addplot[exampleline, domain=0:0.8] {x + 0.9333287745328493};
\addplot[exampleline, domain=0:0.8] {0.052631578947368425*x + 0.981607614033931};
\addplot[plotline,name path=LBOUND,domain=0.666666666666:1, variable=q, samples=50] ({q*ln(q*3/2)/ln(2)+(1-q)*ln((1-q)*3)/ln(2)}, q) -- (1,1) \closedcycle;
\addplot[achievableregion] fill between [of=LBOUND and BOTTOM];
\end{axis}
\end{tikzpicture}
\end{center}
\caption{\label{fig:WtoEPRrates} The rate-error exponent plane for transforming \W{} states into \EPR{} pairs between Alice and Bob. The points in the lined region are not achievable. The boundary is found by numerically evaluating the rate upper bound \eqref{eq:WtoEPRupperbound}. The shaded region, is part of the achievable region (see \eqref{eq:WtoEPRlowerbound}). The dashed lines show the bounds provided by a single $\rho^{\alpha,\theta}$ with $\alpha=0.5$ and $\alpha=0.05$.
}
\end{figure}
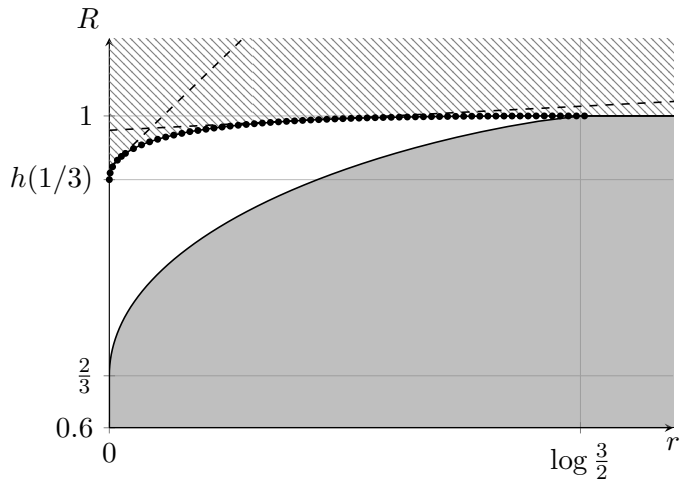

\section{Conclusion}

The functionals $\rho^{\alpha,\theta}$ introduced in this paper and $E^{\alpha,\theta}$, $E_{\alpha,\theta}$ are analogous and also related in the limit $\alpha\to 0$ to the upper support functionals $\rho^{\theta}$ and the quantum functionals $E^\theta$, $E_\theta$. For instance, $\rho^{\alpha,\theta}$ and $\rho^{\theta}$ are subadditive and $E^{\alpha,\theta}$, $E^{\theta}$ are (fully) additive under tensor products, and the inequalties $E^{\theta}\le\rho^{\theta}$ and $E^{\alpha,\theta}\le\rho^{\alpha,\theta}$ hold. Moreover, the equalities $\lim_{n\to\infty}\frac{1}{n}\rho^\theta(\psi^{\otimes n})=E^\theta(\psi)$ and $\lim_{n\to\infty}\frac{1}{n}\rho^{\alpha,\theta}(\psi^{\otimes n})=E^{\alpha,\theta}(\psi)$ hold.

However, the analogy is not complete. In \cite{christandl2023universal} is was shown that $E_\theta=E^\theta$, while we only know that $E_{\alpha,\theta}\le E^{\alpha,\theta}$ and that $E_{\alpha,\theta}$ is superadditive and its regularisation is $E^{\alpha,\theta}$ \cite{vrana2023family}. Also, while $\lim_{\alpha\to 0}E^{\alpha,\theta}=E^\theta$ and $\lim_{\alpha\to 0}E_{\alpha,\theta}=E_\theta$, the limit $\lim_{\alpha\to 0}\rho^{\alpha,\theta}$ is probably not equal to $\rho^{\theta}$. The reason for this possible discrepancy is the difference between minimising over product orthonormal bases and all product bases, which may result in the limit being strictly greater than $\rho^\theta$. Another difference is that while $\rho^\theta$, $E^\theta$, $E_\theta$ (and also $\rho_\theta$) are known to be monotone decreasing under SLOCC transformations, we do not know whether $\rho^{\alpha,\theta}$ and $E_{\alpha,\theta}$ are monotone under LOCC ($E^{\alpha,\theta}$ is a spectral point, therefore it is monotone).

It is conceivable that the limit \eqref{eq:Egeneralbound} can also be evaluated when $T$ is an arbitrary quantum-classical channel (essentially a POVM) instead of a von Neumann measurement. This would lead to a subadditive functional between $E^{\alpha,\theta}$ and $\rho^{\alpha,\theta}$, with regularisation equal to $E^{\alpha,\theta}$, and which potentially recovers $\rho^\theta$ in the limit $\alpha\to 0$. It would be interesting to see if, with the help of this functional, one could explicitly find the value of $E^{\alpha,\theta}$ on a larger set of states than what is considered in \cref{thm:freeErho}.

Finally, we mention that the functionals $\rho^{\theta}$ have a lower counterpart $\rho_{\theta}$ as well \cite[Section 3.]{strassen1991degeneration}, which is superadditive and satisfies $\rho_\theta\le E_\theta=E^\theta\le\rho^\theta$, but we do not have a candidate for a possible generalisation to $\alpha>0$. 

\section*{Acknowledgement}

Supported by the \'UNKP-23-5-BME-458 New National Excellence Program of the Ministry for Culture and Innovation from the source of the National Research, Development and Innovation Fund, the J\'anos Bolyai Research Scholarship of the Hungarian Academy of Sciences, and by by the Ministry of Culture and Innovation and the National Research, Development and Innovation Office within the Quantum Information National Laboratory of Hungary (Grant No.~2022-2.1.1-NL-2022-00004).

\bibliography{references}{}

\end{document}